\newcounter{mynotes}
\newcommand{\Wq}{\widetilde{Q}}
\newtheorem{theorem}{Theorem}[section]
\newtheorem{definition}[theorem]{Definition}
\newtheorem{lemma}[theorem]{Lemma}
\newtheorem{fact}[theorem]{Fact}
\newtheorem{cl}[theorem]{Claim}
\newtheorem{claim}[theorem]{Claim}
\newtheorem{remark}[theorem]{Remark}
\Crefname{theorem}{Theorem}{Theorems}
\Crefname{lemma}{Lemma}{Lemmas}
\newcommand{\junk}[1]{}
\newcommand{\ignore}[1]{}
\newcommand{\E}{\mathbb{E}}
\newcommand{\R}[0]{{\ensuremath{\mathbb{R}}}}
\newcommand{\N}[0]{{\ensuremath{\mathbb{N}}}}
\newcommand{\Z}[0]{{\ensuremath{\mathbb{Z}}}}
\newcommand{\Ft}{\ensuremath{\mathbb{F}[2]}}
\newcommand{\nfrac}[2]{\nicefrac{#1}{#2}}
\newcommand{\eps}{\varepsilon}
\newcommand{\LC}{\textsc{Smooth Label Cover}\xspace}
\newcommand{\NP}{\textsc{NP}\xspace}
\newcommand{\Pclass}{\textsc{P}\xspace}
\newcommand{\ol}[1]{\ensuremath{\overline{#1}}\xspace}
\newcommand{\wh}[1]{\ensuremath{\widehat{#1}}\xspace}
\newcommand{\mc}[1]{\ensuremath{\mathcal{#1}}\xspace}
\newcommand{\mbf}[1]{\ensuremath{{\bf #1}}\xspace}
\newcommand{\mcb}[1]{\ensuremath{\mathcalboondox{#1}}\xspace} 
\newcommand{\mb}[1]{\ensuremath{\mathbf{#1}}\xspace}
\newcommand{\tn}[1]{\ensuremath{\textnormal{#1}}\xspace}
\def\multiset#1#2{\ensuremath{\left(\kern-.3em\left(\genfrac{}{}{0pt}{}{#1}{#2}\right)\kern-.3em\right)}}
\newcommand{\slfrac}[2]{\left.#1\middle/#2\right.}
\newcommand{\initOneLiners}{%
    \setlength{\itemsep}{0pt}
    \setlength{\parsep }{0pt}
    \setlength{\topsep }{0pt}
}
\def\showauthornotes{1}
\newcommand{\Authornote}[2]{{\sf\small\color{red}{[#1: #2]}}}
\newcommand{\Authornote}[2]{}
\newcommand{\anote}[1]{}
\newcommand{\rnote}[1]{}
\newcommand{\snote}[1]{}
\title{Hardness of learning noisy halfspaces using polynomial thresholds}
\author{Arnab Bhattacharyya\thanks{Indian Institute of Science, Bangalore, India. Supported in part by DST Ramanujan Grant DSTO1358. Research partly done while visiting the Simons Institute for Theory of Computing, Berkeley, USA. Email: {\tt arnabb@csa.iisc.ernet.in}} \and Suprovat Ghoshal\thanks{Indian Institute of Science, Bangalore, India. Email: {\tt suprovat.ghoshal@csa.iisc.ernet.in}} \and Rishi Saket\thanks{IBM Research, Bangalore, India. Email: {\tt rissaket@in.ibm.com}}}
\date{}
\begin{document}
\maketitle

\begin{abstract}
	We prove the hardness of weakly learning halfspaces in the presence of adversarial noise using polynomial threshold functions (PTFs). In particular, we prove that for any constants $d \in \mathbb{Z}^+$ and $\eps > 0$, it is NP-hard to decide: given a set of $\{-1,1\}$-labeled points in $\mathbb{R}^n$  whether (YES Case) there exists a halfspace that classifies $(1-\eps)$-fraction of the points correctly, or (NO Case) any degree-$d$ PTF classifies at most $(1/2 + \eps)$-fraction of the points correctly. This strengthens to all constant degrees the previous NP-hardness of learning using degree-$2$ PTFs shown by Diakonikolas et al. (2011). The latter result had remained the only progress over the works of Feldman et al. (2006) and Guruswami et al. (2006) ruling out weakly proper learning adversarially noisy halfspaces.

\end{abstract}

\section{Introduction}

Given a distribution $\mc{D}$ over $\{-1,1\}$-labeled points in
$\mathbb{R}^n$, the accuracy of a classifier function $f : \mathbb{R}^n \to
\{-1,1\}$ is the probability that $f(x) = \ell$ for a random point-label pair $(x, \ell)$
sampled from $\mc{D}$. 
A concept class $\mc{C}$ is said to be \emph{learnable}
by hypothesis class $\mc{H}$ if there is an efficient procedure which,
given access to samples from any distribution $\mc{D}$ consistent with some $f \in
\mc{C}$, generates with high probability a classifier $h \in \mc{H}$ of
accuracy approaching that of $f$ for $\mc{D}$. When $\mc{H}$ can be
taken as $\mc{C}$ itself, the latter is said to be \emph{properly} learnable. The
focus of this work is one
of the simplest and most well-studied concept classes: the \emph{halfspace} which
maps $x \in \mathbb{R}^n$ to $\tn{sign}(\langle v, x\rangle - c)$ for
some $v \in \mathbb{R}^n$ and $c \in \mathbb{R}$. The study of
halfspaces goes back several decades to the development of various
algorithms in artificial intelligence and machine learning such as the
Perceptron~\cite{Rosenblatt62,MP69} and  SVM~\cite{CV95}.  Since then,
halfspace-based classification has found applications in many other
areas, such as computer vision~\cite{OJM90} and
data-mining~\cite{RRK04}.

It is known that a halfspace can be properly learnt by using linear
programming along with a polynomial number of samples to compute a
separating hyperplane~\cite{BEHW89}. In noisy data however, it is not
always possible to find a hyperplane separating the differently
labeled points. Indeed, in the presence of (adversarial) noise, 
i.e.  the \emph{agnostic} setting, proper learning of a halfspace to
optimal accuracy with no distributional assumptions  
was shown to be NP-hard by Johnson and
Preparata~\cite{JP78}.  Subsequent results showed the hardness of
approximating the accuracy of properly learning a noisy halfspace to
constant factors: $\tfrac{262}{261}-\eps$ by Amaldi and
Kann~\cite{AK98}, $\tfrac{418}{415}-\eps$ by Ben-David et
al.~\cite{BDEL00}, and $\tfrac{85}{84}-\eps$ by Bshouty and
Burroughs~\cite{BB06}. These results were considerably strengthened 
independently by Feldman,
Gopalan, Khot, and Ponnuswami~\cite{FGKP09} and by Guruswami and
Raghavendra~\cite{GR09}\footnote{The reduction of Guruswami and Raghavendra~\cite{GR09}
works even for the special case when the points are over the boolean hypercube.} 
who proved  hardness of even \emph{weakly}
proper learning a noisy halfspace, i.e. 
to an accuracy 
beyond the random threshold of $1/2$. This implies an optimal
$(2-\eps)$-inapproximability in terms of the learning accuracy. 
Building upon these works Feldman, Guruswami,
Raghavendra, and Wu~\cite{FGRW12} showed that the same hardness holds 
for learning noisy monomials (OR functions over the boolean hypercube) using
halfspaces.

At this point, it is  natural to ask whether the halfspace learning problem remains hard
if the classifier is allowed to be from a larger class of functions, i.e., {\em non-proper} learning.
In particular, consider the class of degree-$d$ \emph{polynomial
threshold functions} (PTF) which are given by mapping $x \in
\mathbb{R}^n$ to $\tn{sign}(P(x))$ where $P$ is a degree-$d$ polynomial.
They generalize halfspaces a.k.a.~\emph{linear threshold functions}
(LTFs) which are degree-$1$ PTFs and are very common hypotheses in machine learning
because they are output by kernelized models (e.g., perceptrons, SVM's, kernel k-means, kernel PCA, etc.) when instantiated with the polynomial kernel. From a complexity viewpoint,
PTFs were studied by   Diakonikolas, O'Donnell, Servedio, and
Wu~\cite{DOSW11} who showed the hardness of weakly  proper 
learning a noisy degree-$d$ PTF for any constant $d \in
\mathbb{Z}^+$, assuming Khot's Unique Games Conjecture
(UGC)~\cite{Khot02}. On the other hand, proving the hardness of
weakly learning noisy halfspaces using degree-$d$ PTFs has turned out
to be quite challenging. Indeed, the only such result is by
Diakonikolas et al.~\cite{DOSW11} who showed the corresponding hardness 
of learning using a degree-$2$ PTF. With no further
progress till now, the situation remained unsatisfactory. 

In this work, we significantly advance our understanding  by proving  the hardness 
of weakly learning an $\eps$-noisy halfspace by a degree-$d$ PTF \emph{for any
constant} $d \in \mathbb{Z}^+$. Our main
result is formally stated as follows.

\begin{theorem}\tn{(This work)}\label{thm:main} For any constants $\delta > 0$, and $d
\in \mathbb{Z}^+$, it is \NP-hard to decide whether a given
set of $\{-1,1\}$-labeled points in $\mathbb{R}^n$ satisfies:

\smallskip
\noindent
~~~~~~~\tn{\textbf{YES Case.}} There exists a halfspace that correctly classifies
$(1-\delta)$-fraction of the points, or

\smallskip
\noindent
~~~~~~~\tn{\textbf{NO Case.}} Any degree-$d$ PTF
classifies at most $(1/2+\delta)$-fraction of the points
correctly.

\smallskip
\noindent
The \tn{NO} case
 can be strengthened to rule out any function of constantly many
degree-$d$ PTFs.
\end{theorem}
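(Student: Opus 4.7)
The plan is to give a gadget reduction from Smooth Label Cover to the PTF weak-learning problem, extending the degree-$2$ approach of Diakonikolas et al.~\cite{DOSW11} to arbitrary constant $d$. We start from a Smooth Label Cover instance of alphabet size $R$ and smoothness parameter $J$ (both depending on $d$ and $\delta$) that is $(1, \eta)$-NP-hard for sufficiently small $\eta = \eta(d,\delta)$. For each super-vertex $u$ we introduce a block of $R$ coordinates, and produce labeled points by sampling $x \in \{-1,1\}^R$ from a biased product distribution on the block and assigning a $\{-1,1\}$-label via a ``long-code'' gadget consistent with the Label Cover constraints. The full point set is the union of these per-super-vertex blocks, together with a few balancing points that force the target label frequency towards $1/2$.

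For completeness, if $\sigma$ is a labeling of the Label Cover, the halfspace $h(x) = \mathrm{sign}\bigl(\sum_u x_{(u,\sigma(u))}\bigr)$ should classify at least a $(1-\delta)$ fraction of the points correctly: within each block the dictator coordinate $\sigma(u)$ aligns with the gadget labels by construction, while cross-block terms average out after an appropriate scaling. For soundness, suppose a degree-$d$ PTF $\mathrm{sign}(P)$ correctly classifies a $(1/2 + \delta)$ fraction. By averaging over the super-vertices and conditioning on the complement coordinates, we obtain for a noticeable fraction of $(u, \text{rest})$ pairs a restricted degree-$d$ PTF on $\{-1,1\}^R$ that succeeds on the single-block gadget with advantage $\Omega(\delta)$. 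A dictatorship-testing lemma then extracts a coordinate with non-negligible low-degree influence from each such restricted PTF, and the smoothness of the Label Cover lets us stitch these into a single labeling satisfying more than $\eta$ fraction of constraints, contradicting the NO case.

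The main obstacle is the dictatorship-testing lemma for degree-$d$ PTFs: any degree-$d$ PTF with advantage $\delta$ on the gadget distribution must have a coordinate with low-degree influence at least $\tau(d, \delta) > 0$. For $d = 2$ this was proved via a Gaussian regularity argument tailored to quadratic polynomials; for arbitrary $d$ the approach I would take is to apply the invariance principle of Mossel--O'Donnell--Oleszkiewicz combined with anti-concentration estimates for degree-$d$ Gaussian polynomials (e.g.\ Carbery--Wright), reducing matters to the claim that a low-influence degree-$d$ PTF is essentially balanced against the gadget measure. Finally, the strengthening to constantly many degree-$d$ PTFs follows easily from the single-PTF statement: by Fourier expansion on $\{-1,1\}^k$, any function $g(f_1, \dots, f_k)$ of correlation $\delta$ admits a sign-product $\prod_{i \in S} f_i = \mathrm{sign}\bigl(\prod_{i \in S} P_i\bigr)$ of correlation at least $\delta/2^k$, which is itself a degree-$kd$ PTF, so the main theorem applied with parameters $(kd, \delta/2^k)$ suffices.
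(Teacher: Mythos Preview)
Your outline is essentially the pre-existing framework that was already known to stall at $d=2$, and you have not identified the obstruction or the new idea that gets past it. The sentence ``for arbitrary $d$ the approach I would take is to apply the invariance principle \dots\ combined with Carbery--Wright, reducing matters to the claim that a low-influence degree-$d$ PTF is essentially balanced'' is where the real difficulty hides. That claim, in the form you need it, is not enough to decode against \emph{Label Cover} (as opposed to Unique Games). Concretely, consider the degree-$5$ polynomial $f(Y)=Y_{i^*}^3\bigl(\sum_{i\neq i^*}Y_i^2\bigr)$ on a single block: it passes a natural single-block dictatorship test with probability close to $1$, and the distinguished coordinate $i^*$ is indeed ``influential'', but it appears only through a cubic factor. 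The decoding step against Label Cover's projection constraints needs the heavy coordinate to sit in a \emph{linear} slot of some sub-polynomial (this is what the folding/consistency constraints act on); a merely influential coordinate of high multiplicity gives you nothing you can push through the projections. This is precisely the technical roadblock that kept \cite{DOSW11} at $d=2$.

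The paper's fix is not a sharper influence/anti-concentration bound but a different test and a new structural lemma. The test queries $T=10d$ independent vertices simultaneously, with a correlated Gaussian perturbation $(\delta_1,\dots,\delta_T)$ summing to zero plus a tiny sign-carrying shift $b\eta$; the instance is then \emph{folded} onto the subspace orthogonal to the Label Cover constraints. The soundness analysis first uses Carbery--Wright (much as you suggest) to show the $b\eta$-sensitive part $U_1Q_1$ dominates, then carefully chooses a degree level $d^*$ and invokes a purely combinatorial lemma: in any degree-$d$ polynomial of the form $(W_1+\cdots+W_T)\cdot S(W_1,\dots,W_T)$ with $T>2d$, at least $T/2$ of the variables $W_j$ carry a constant fraction of the coefficient mass in their \emph{linear} part. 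This is what forces some block $j^*$ to have a genuinely linear heavy sub-polynomial, which the folding constraints can then transport across Label Cover edges. None of these ingredients---the multi-block test, the $d^*$-selection, the linear-mass lemma, or the polynomial-level folding---appear in your plan, and without them the single-block long-code/invariance route does not close.

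A minor point: your extension to functions of $k$ PTFs via Fourier expansion and a degree-$kd$ product is valid, but the paper's argument is simpler and does not inflate the degree: if $g(\mathrm{sign}\,P_1,\dots,\mathrm{sign}\,P_k)$ flips with $b$ on a $\xi$-fraction of the randomness, then by averaging some single $\mathrm{sign}\,P_i$ flips on a $\xi/k$-fraction, and one reruns the degree-$d$ analysis on that $P_i$.
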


To place our results in context, we note that algorithmic results for
learning noisy halfspaces are known under assumptions on the
distribution of the noise or the pointset.
In the presence of \emph{random
classification noise}, Blum, Frieze, Kannan, and Vempala~\cite{BFKV96}
gave an efficient learning algorithm approaching optimal accuracy, 
which was improved by Cohen~\cite{Cohen97}
who showed that in this case the halfspace can in fact 
be properly learnt. For
certain well behaved distributions, Kalai, Klivans, Mansour, and
Servedio~\cite{KKMS05} showed that halfspaces can be learnt even in
the presence of adversarial noise. Subsequent works by 
Klivans, Long, and Servedio~\cite{KLS09},
and Awasthi, Balcan, and Long~\cite{ABL17} improved the noise
tolerance and introduced new algorithmic techniques. Building upon
them, Daniely~\cite{Daniely15} recently obtained a PTAS for minimizing
the hypothesis error with respect to the uniform distribution over a sphere.
Several of these learning algorithms
use halfspaces and low degree PTFs 
(or simple combinations thereof) as their hypotheses, 
and one could conceivably apply their techniques to the setting 
 without any distributional
assumptions. Our work provides evidence to the contrary.

\subsection{Previous related work}

Hypothesis-independent intractability results for 
learning for halfspaces are also known, but they make average-case or
cryptographic hardness assumptions which seem considerably stronger than
\Pclass$\neq$\NP.  Specifically, for exactly learning noisy halfspaces,
such results have been shown in the works of Feldman et
al.~\cite{FGKP09}, Kalai et al.~\cite{KKMS05}, Kothari and
Klivans~\cite{KK14}, and Daniely and Shalev-Shwartz~\cite{DS16}. In a
recent work, Daniely~\cite{Daniely16} rules out weakly learning
noisy halfspaces assuming the intractability of strongly refuting random
$K$-XOR formulas. On the other hand, Applebaum, Barak, and
Xiao~\cite{ABX08} have shown that hypothesis-independent
hardness results under standard complexity assumptions would imply a
major leap in our current understanding of complexity theory and are
unlikely to be obtained for the time being. Therefore, any 
study (such as ours) of the standard complexity-theoretic hardness of learning
halfspaces would probably need to constrain the hypothesis.

A natural generalization of the learning halfspaces problem is that of
learning intersections of two or more halfspaces.  Observe that unlike
the single halfspace, properly learning the intersection of two
halfspaces without noise does not in general admit a
separating hyperplane based solution.  Indeed, this problem was shown
to be NP-hard by Blum and Rivest~\cite{BR93}, later strengthened by
Alekhnovich, Braverman, Feldman, Klivans, and Pitassi~\cite{ABFKP08} to
rule out intersections of constantly many halfspaces as hypotheses.
The corresponding hardness of even weak learning was established by
Khot and Saket~\cite{KS11}, while Klivans and Sherstov~\cite{KS09}
proved under a cryptographic hardness assumption the intractability of
learning the intersection of $n^\eps$ halfspaces. Algorithms
for learning intersections of constantly many halfspaces have been
given in the works of Blum and Kannan~\cite{BK97} and
Vempala~\cite{Vempala97} for the uniform distribution over the unit
ball, Klivans, O'Donnell, and Servedio~\cite{KOS04} for the uniform
distribution over the boolean hypercube, and by Arriaga and
Vempala~\cite{AV06} and Klivans and Servedio~\cite{KlivansS08} for instances
with good \emph{margin}, i.e. the points being well separated from the
hyperplanes.

As was the case for learning a single noisy halfspace, there is
no known NP-hardness for learning intersections of two halfspaces
using (intersections of) degree-$d$ PTFs. This cannot, however, be
said of the finite field analog of learning halfspaces, i.e. the
problem of learning noisy parities over $\Ft$. While
H\r{a}stad's~\cite{Hastad} seminal work itself rules out
weakly proper learning a noisy parity over $\Ft$, later work of
Gopalan, Khot, and Saket~\cite{GKS10} showed the hardness of
learning an $\eps$-noisy parity by a degree-$d$ PTF to
within $(1 - 1/2^d + \eps)$-accuracy  -- which, however, 
is not optimal for $d > 1$.
Shortly thereafter, Khot~\cite{Khot-personal} 
observed\footnote{Khot's observations
remained unpublished for while, before they were included with his
permission by Bhattacharyya et al.~\cite{BGGS16} in their paper which
made a similar use of Viola's~\cite{Viola09} pseudo-random generator.}
that Viola's~\cite{Viola09} pseudo-random generator fooling degree-$d$
PTFs can be combined with coding-theoretic inapproximability 
results to yield
optimal lower bounds for all constant degrees $d$. From the
algorithmic perspective, one can learn an $\eps$-noisy parity over the
uniform distribution in $2^{O(n/\log n)}$-time as shown by Feldman et
al.~\cite{FGKP09} and Blum et al.~\cite{BKW03}. For general
distributions, Kalai, Mansour, and Verbin~\cite{KMV08} gave a
non-proper $2^{O(n/\log n)}$-time algorithm achieving an accuracy
close to optimal. 

Several of the inapproximability results mentioned above, e.g.~those of \cite{GR09}, \cite{GKS10}, \cite{KS11}, \cite{FGRW12} and \cite{DOSW11},
follow the \emph{probabilistically checkable proof
(PCP) test} based approach for
their hardness reductions. While our result builds upon these methods,
in the remainder of this section, we give an overview of our techniques
and describe the key enhancements which allow us to overcome some of
the technical limitations of previous hardness reductions.

\subsection{Overview of Techniques}

For hardness reductions, due to the uniform convergence results of \cite{Hau92, KSS94}, it is sufficient to take the optimization 
version of the learning halfspaces problem which consists of a set of
coordinates and a finite set of labeled points, the latter 
replacing a random distribution. A typical reduction (including ours)
given a hard instance of a constraint satisfaction problem (CSP) $\mc{L}$ over
vertex set $V$ and label set $[k]$, defines $\mc{C} := V\times [k]$
to be the set of coordinates over $\mathbb{R}$.
We let the formal variables
$Y_{(w,i)}$ be associated with the coordinate $(w,i) \in \mc{C}$.
The hypothesis $H$ (the
\emph{proof} in PCP terminology) is defined
over these variables. In our case, the proof will be a degree-$d$ PTF. The PCP test chooses
randomly a small set of vertices $S$
of $\mc{L}$, and runs a \emph{dictatorship} test on $S$: it tests $H$ on a set
of labeled points $P_S \subseteq \mathbb{R}^{\mc{C}}$ generated by the
dictatorship test. We desire the following two properties from the test: 
\begin{itemize*}
\item \textbf{(completeness)} if $H$
``encodes'' a
good labeling for $S$, then it is a good classifier for  $P_S$,
\item \textbf{(soundness)} a good classifier $H$ for
$P_S$ can be ``decoded'' into a good labeling for $S$. 
\end{itemize*}
 The soundness
property is leveraged to show that if $H$ classifies
$P_S$ for a significant fraction of the choices $S$, it can be used to
define a good global labeling for $\mc{L}$.
The CSP of choice in the above template is usually the Label Cover or
the Unique Games problem. While the NP-hardness of Label Cover is
unconditional, its projective constraints seem to present 
technical roadblocks --
also faced by Diakonikolas et al.~\cite{DOSW11} -- in analyzing 
learnability by  degree-$d$  ($d > 2$) PTFs.

Our work overcomes these issues and gives a hardness reduction from
Label Cover. The key ingredient to incorporate the Label Cover
projective constraints is a \emph{folding} over an appropriate
subspace defined by them. This amounts to restricting the entire instance
to the corresponding orthogonal subspace.  Similar folding for analyzing linear forms
has been used earlier in the works of Khot and Saket~\cite{KS11},
Feldman, Guruswami, Raghavendra, and Wu~\cite{FGRW12}, and Guruswami,
Raghavendra, Saket, and Wu~\cite{GRSW}. We are able to extend it over
degree-$d$ polynomials leveraging the linear-like structure decoded by
an appropriate dictatorship test. This uses a \emph{smoothness} property of 
the constraints (analogous to \cite{KS11,FGRW12,GRSW})  
of the Label Cover instance which is combined with the 
dictatorship test -- along with folding --
to yield the PCP test.

In the rest of this section, we informally describe our dictatorship
test, the motivation behind its design and the key ingredients
involved in its analysis. To begin, we present a simple preliminary
dictatorship test $\mathfrak{P}_0$ over $\mathbb{R}^{k}$ which works for linear thresholds. 
Of course, the NP-hardness of properly learning noisy halfspaces is
already known~\cite{FGKP09,GR09}, so this test does not yield anything
new.
Our purpose is illustrative and we include a sketch
of the arguments of its analysis. 
Taking $\eps > 0$ as a
small constant and  $\eta > 0$ a small parameter (to be defined
later), the description of $\mathfrak{P}_0$ is given in Figure \ref{fig:pcptest-0-intro}.
\begin{figure}
\begin{mdframed}
\begin{center} $\mathfrak{P}_0(\mathbb{R}^k, \eta, \eps)$ tests halfspace
$\tn{sign}(f(Y))$. \end{center}
\begin{enumerate*}

\item Sample $b \in \{-1,1\}$ uniformly at random.

\item Choose a random ``noise'' subset $\mc{I} \subseteq [k]$ by
including each $i$ independently with probability $\eps$.

\item For $i \in [k]\setminus\mc{I}$,  
set
$y_i = b\eta$,

\item For $i \in \mc{I}$, sample $y_i$ 
i.i.d. at random from $N(0,1)$.

\item Accept iff $\tn{sign}(f(y)) =  b$.
\end{enumerate*}
\end{mdframed}
\caption{Dictatorship Test $\mathfrak{P}_0$}
\label{fig:pcptest-0-intro}
\end{figure}

Observe that the linear threshold $\tn{sign}(Y_i)$ for each $i \in
[k]$ correctly classifies $(y,b)$ with probability $(1-\eps)$. 
In other words, every \emph{dictator} corresponds to a good
solution.

\subsubsection{Soundness analysis of
$\mathfrak{P}_0$}\label{sec-intro-soundness-P0}
Suppose there exists a linear form $f =  
\sum_{i \in [k]}\wh{f}_iY_i$ (assuming for
simplicity $f$ has no
constant term) such that
$\tn{sign}(f)$ passes $\mathfrak{P}_0$
 with probability $1/2 +
2\xi$ for
some $\xi = \Omega(1)$. Using (by now) standard analytical
arguments, we show that there exists $i^* \in [k]$ such that
\begin{equation}
\wh{f}_{i^*}^2 \geq
\Omega(1)\cdot\sum_{i\in[k]}\wh{f}_{i}^2 > 0. 
\label{eqn-into-example-bound}
\end{equation}
In other words, every good solution $f$ can be decoded into a 
dictator.

\medskip
It is not particularly challenging to obtain \eqref{eqn-into-example-bound}.
However, we sketch a systematic proof which shall be useful
when analyzing a more complicated dictatorship test for PTFs.  

Call a setting of $\mc{I}$ \emph{good} if $\tn{sign}(f)$ passes the test
conditioned on $\mc{I}$ with probability $1/2 + \xi$. By averaging,
it is easy to see that $\Pr_{\mc{I}}\left[\mc{I}\tn{ is good}\right] \geq
\xi/2$.  Let us fix such a good
$\mc{I}$. Without loss of generality, we may assume that $\mc{I} = \{k^* + 1, \dots, k\}$ and
further that $k^* \geq k/2$ by the Chernoff bound. We now define 
 $\{W_1, \dots, W_{k^*}\}$ as a basis for $\{Y_{i}\,\mid\,i \in [k^*]\}$
where $W_1 := (1/k^*)\sum_{i\in[k^*]} Y_{i}$, 
such that $\{W_1, \dots, W_{k^*}\}$ is an orthogonal
transformation of $\{Y_{i}\,\mid\,i \in [k^*]\}$ of the same $1/\sqrt{k^*}$ 
norm. 
Thus, we may rewrite $f$ as:
\begin{equation}\label{eqn-fYW-intro}
f = \sum_{i \in
[k]\setminus[k^*]}\tilde{f}_{i}Y_{i} + \sum_{\ell \in [k^*]}
\ol{f}_{\ell}W_\ell.
\end{equation} The variables in the first sum in the RHS of the
above are all i.i.d. $N(0,1)$. Further, it can be seen that under the
test distribution, $W_1 = b\eta$, and $W_\ell = 0$ ($\ell
= 2,\dots, k^*$). Therefore, we may assume that,
\begin{equation}
\ol{f}_1^2 > 0.\label{eqn-mass-nonzero-intro}
\end{equation}
Since the sign of $f$ must flip with that of $b$
with probability $\Omega(\xi) = \Omega(1)$, one can apply
Carbery-Wright's Gaussian
anti-concentration theorem to show that,
\begin{equation}\label{eqn-apply-anti-conc-intro}
\sum_{i \in
[k]\setminus[k^*]}\tilde{f}_{i}^2 \leq O(\eta^2)\ol{f}^2_{1}, 
\end{equation}
since otherwise, contributions from the first sum of \eqref{eqn-fYW-intro}  will 
overwhelm the contribution of $W_1$ to $f$.
Further, from the definition of $\{W_\ell\}_{\ell=1}^{k^*}$, we obtain
\begin{equation}
\sum_{i \in [k^*]}\tilde{f}_{i1}^2 = \frac{1}{k^*}\sum_{\ell \in [k^*]}
\ol{f}_{\ell}^2 \geq  \ol{f}_1^2/k^*. \label{eqn-unwinding-intro}
\end{equation}
Let us now revert to the notation with $\mc{I} = [k]\setminus[k^*]$. 
Using   \eqref{eqn-unwinding-intro} along with   \eqref{eqn-apply-anti-conc-intro},
and taking $\eta = o(\eps^3/\sqrt{k})$ one can ensure that,
\begin{equation}
\sum_{i \in \mc{I}}\tilde{f}_{i}^2 \leq
\frac{\eps}{10}\sum_{i \in [k]}\tilde{f}_{i}^2,
\label{eqn-1-bound-intro}
\end{equation}
and from   \eqref{eqn-mass-nonzero-intro} we obtain
\begin{equation}
\sum_{i \in [k]}\tilde{f}_{i}^2 > 0. 
\label{eqn-3-bound-intro}
\end{equation}
Note that   \eqref{eqn-1-bound-intro} holds for every good
$\mc{I}$ which is at least $\xi/2$ fraction of the choices of
$\mc{I}$. Randomizing over $\mc{I}$, an application of the Chernoff-Hoeffding
bound shows that \eqref{eqn-1-bound-intro} holds only with substantially smaller
probability unless there exists $i^* \in [k]$ such that:
\begin{equation}
\tilde{f}_{i^*}^2 \geq \frac{\eps^3}{8}\sum_{i \in
[k]}\tilde{f}_{i}^2. \label{eqn-heavy-i-intro}
\end{equation}
The desired bound in   \eqref{eqn-into-example-bound} now easily follow
from \eqref{eqn-3-bound-intro} and
\eqref{eqn-heavy-i-intro}.
The details are omitted.

The main idea of the above 
methodical analysis is a natural definition of the $W$
variables using which we isolate the sign-perturbation $b\eta$ into a
single variable $W_1$! Gaussian anti-concentration directly lower bounds 
the squared mass corresponding to $W_1$. Moreover, when transforming back 
to the squared mass of $Y_{i}$ ($i \in [k]\setminus\mc{I}$), the presence
of the heretofore ignored $W_{\ell}$ ($\ell > 1$) terms can only increase this quantity, as shown
in \eqref{eqn-unwinding-intro}. Lastly, the 
the ``decoding list size''  does not depend on the
sign-perturbation parameter $\eta$ which can be taken to
be small enough to makes sure that this size is a constant
depending only on the noise parameter $\eps$ and
the marginal acceptance probability $\xi$ of the test.

\subsubsection{Enhancing the Dictatorship Test for degree-$d$ PTFs}

Our goal is a reduction proving the hardness of weakly
learning noisy halfspaces using degree-$d$ PTFs. One could hope
to utilize the dictatorship test $\mathfrak{P}_0$ itself for this purpose.
Unfortunately, this presents problems even for $d = 5$. To see this
consider  
the
degree-$5$ polynomial,
$$f(Y) = Y_{i^*}^3\left(\sum_{i\in[k]\setminus\{i^*\}}Y^2_i\right),$$
for some distinguished $i^* \in [k]$.
It is easy to see that $\tn{sign}(f)$ passes the test with probability close to
$1$. However, the distinguished variable $Y_{i^*}$ appears with a cubic
power in $f$, whereas the folding approach works well only when $Y_{i^*}$
occurs as a linear factor of some sub-polynomial. This is due to the inherently
linear nature of the folding constraints. Consequently, 
when $\mathfrak{P}_0$ is combined with a Label Cover
instance the analysis becomes infeasible. 

Our approach to overcome this bottleneck is for the PCP to test
several independently and randomly chosen vertices. For this, the
dictatorship test would be on the domain
$\mathbb{R}^{[k]\times[T]}$ where
$T$ is chosen much larger than the degree $d$ of the PTF to be tested.
The space $\mathbb{R}^{[k]\times[T]}$ is thought of as real space
spanned by $T$ blocks of $k$ dimensions each. 
In this case, if the test passes with probability $ > 1/2$, then there
is a way to decode a good label to at least one out of the $T$
blocks. 
A key step in our analysis crucially leverages the choice of $T$ to extract
out a specific sub-polynomial which is linear in the variables of one of the
$T$ blocks. This
is done via an application of the following lemma which is
proved in Section \ref{sec:struct}.
\begin{lemma}\label{lem-approx-structural-intro}
Given a degree-$d$ polynomial of the form $(Y_1 + \dots +
Y_T) \cdot S(Y_1,\dots, Y_T)$, where $T  > 2d$ and $S$ is a degree-$(d-1)$ polynomial, there exist at least $T/2$ indices $j \in [T]$
such that: for each such $j$, the sum of squares of the coefficients corresponding to the
terms (in the monomial representation) linear in $Y_j$ is
at least $c$ times the sum of squares of coefficients of $S$,
where $c := c(T,d) > 0$.  
\end{lemma}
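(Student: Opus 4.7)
The plan is to write $P = L \cdot S$ and, for each $j \in [T]$, extract the monomials linear in $Y_j$ by expanding $S$ around the variable $Y_j$. Concretely, I would write $S = A_j(Y_{-j}) + Y_j B_j(Y_{-j}) + O(Y_j^2)$ where $A_j := S|_{Y_j = 0}$ and $B_j := (\partial_{Y_j} S)|_{Y_j = 0}$ are polynomials in the remaining variables $Y_{-j}$. A direct calculation from $P = (Y_j + L_{-j}) \cdot S$ (with $L_{-j} := L - Y_j$) shows that the linear-in-$Y_j$ part of $P$ equals $Y_j \cdot (A_j + L_{-j} B_j)$. Therefore the quantity of interest is
\[
L_j(P) \;=\; \|A_j + L_{-j} B_j\|^2,
\]
where $\|\cdot\|$ is the coefficient $\ell_2$-norm of a polynomial in $Y_{-j}$.

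The next step is an averaging argument that exploits the degree bound on $S$. Since every monomial $Y^\alpha$ of $S$ has $|\alpha| \le d-1$, its support has size at most $d-1$. Hence
\[
\sum_{j=1}^{T} \|A_j\|^2 \;=\; \sum_{\alpha} s_\alpha^2 \cdot |\{j : \alpha_j = 0\}| \;\geq\; (T - d + 1)\,\|S\|^2.
\]
Combined with the trivial upper bound $\|A_j\|^2 \le \|S\|^2$, a reverse Markov-type inequality yields that at least $T/2$ indices $j \in [T]$ satisfy $\|A_j\|^2 \ge c_1(T, d)\,\|S\|^2$ for some $c_1(T,d) > 0$, provided $T > 2(d-1)$ (which is implied by $T > 2d$). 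Dually, $\sum_j \|B_j\|^2 = \sum_\alpha s_\alpha^2\,|\{j : \alpha_j = 1\}| \le (d-1)\,\|S\|^2$, so $B_j$ is typically small, and one has the coefficient bound $\|L_{-j} B_j\|^2 \le (d-1)(T-1)\,\|B_j\|^2$ by Cauchy-Schwarz on at most $d-1$ contributing terms per monomial.

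The main obstacle is the possible destructive cancellation between $A_j$ and $L_{-j} B_j$: a crude triangle inequality $\|A_j + L_{-j} B_j\|^2 \ge \tfrac{1}{2}\|A_j\|^2 - \|L_{-j} B_j\|^2$ is not sufficient because $\|L_{-j} B_j\|^2$ can in principle be comparable to $\|A_j\|^2$. To handle this, I would argue that the near-vanishing condition $A_j + L_{-j} B_j \approx 0$ forces a rigid algebraic structure on $S$: specifically, $S|_{Y_j = 0}$ must be approximately divisible by $L_{-j}$ with quotient $-B_j$. Imposing this simultaneously for more than $T/2$ indices $j$ is incompatible with $\deg S \le d - 1$ unless $S \equiv 0$; a dimension/structural counting argument (using that the finite-dimensional image of the linear map $S \mapsto (A_j + L_{-j}B_j)_{j \in [T]}$ has enough independence across $j$'s once $T > 2d$) bounds the number of such ``bad'' indices. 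Combining this with the averaging above yields $L_j(P) \ge c(T, d)\,\|S\|^2$ for at least $T/2$ indices $j$, with $c(T, d) > 0$ depending only on $T$ and $d$.
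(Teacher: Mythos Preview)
Your setup is correct and matches the paper's: writing the $Y_j$-linear part of $P$ as $Y_j(A_j + L_{-j}B_j)$ with $A_j = S|_{Y_j=0}$ and $B_j = (\partial_{Y_j}S)|_{Y_j=0}$ is exactly the paper's identity $Q_{j,1} = S_{j,0} + S_{j,1}\cdot\sum \mathbf{W}_{\neq j}$. Your averaging bound $\sum_j \|A_j\|^2 \ge (T-d+1)\|S\|^2$ is also fine, though it plays no role in the paper's argument.

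The gap is precisely where you flag the ``main obstacle.'' You assert that approximate vanishing of $A_j + L_{-j}B_j$ for more than $T/2$ indices is incompatible with $\deg S \le d-1$, and that a ``dimension/structural counting argument'' on the linear map $S \mapsto (A_j + L_{-j}B_j)_j$ establishes this. But this is the entire content of the lemma, and you have not supplied the argument. A linear-algebra framing alone is not enough: the map is linear, but the claim is a quantitative lower bound on coefficient $\ell_2$-norms, which requires controlling something like the smallest singular value of $\Phi$ restricted to a suitable subspace. There is no evident ``independence across $j$'s'' to exploit directly---the $A_j$ are highly correlated, all being restrictions of the same $S$---and your reverse-Markov step on $\|A_j\|$ does not interact with the cancellation issue at all.

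The paper resolves exactly this point by an iterative \emph{Variable Removal Lemma}. From $A_j + L_{-j}B_j \approx 0$ one deduces (as you observe) that $S \approx (Y_j - L_{-j})\cdot(-B_j) + Y_j^2 R_j$; writing this for two different indices and matching coefficients, the paper extracts a structured representation of $-B_j$ of the same form but one degree lower, with controlled error. Iterating $d$ times (choosing at each step, via a separate pigeonhole lemma, a variable whose square does not divide most of the mass) drives the degree to zero and produces a contradiction with $\|S\|=1$, with explicit error growth $(20d)^{O(d)}$ tracked throughout. This iterative degree-reduction with quantitative error control is the missing idea in your proposal; without it, the argument is incomplete at the decisive step.
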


In Figure \ref{fig:pcptest-1-intro}, 
we give a formal description of the Dictatorship test $\mathfrak{P}_1$
employed by our reduction. Its analysis builds upon that of
$\mathfrak{P}_0$ above, so we provide a short sketch.
Let $T = 10d$ and $\eps > 0$ be a constant, and $\eta > 0$ be
parameter to be defined later.
\begin{figure}
\begin{mdframed}
\begin{center}PCP Test $\mathfrak{P}_1(\mathbb{R}^{[k]\times[T]}, \eta, \eps)$ tests
degree-$d$ PTF
$\tn{sign}(P(Y))$ \end{center}

\begin{enumerate*}

\item Sample $\{\delta_{j}\, \mid\, j\in[T]\}$ from the joint Gaussian
	distribution where the marginals are $N(0,1)$,
	$\E[\delta_j\delta_{j'}] = -1/(T-1)$ for all $j\neq j'$, 
	and $\sum_{j=1}^T \delta_{j} = 0$.

\item Sample $b \in \{-1,1\}$ uniformly at random.

\item Sample $\mc{I} \subseteq [k]\times[T]$ to be a random subset
where each $(i,j) \in [k]\times[T]$ is added to $\mc{I}$
independently with probability $\eps$.

\item For each $(i,j) \in ([k]\times[T])\setminus \mc{I}$, set
$y_{ij} = (\sqrt{(T-1)/T})\delta_{j} +
b\eta$.

\item Independently for each $(i,j) \in \mc{I}$,
sample  $y_{ij} \sim N(0,{1})$.

\item Accept iff $\tn{sign}(P(y)) = b$.

\end{enumerate*}
\end{mdframed}
\caption{Dictatorship Test $\mathfrak{P}_1$}
\label{fig:pcptest-1-intro}
\end{figure}
Consider the linear
threshold given by,
$$\tn{sign}\left(\sum_{j=1}^TY_{i_jj}\right),$$
for any $i_j \in [k]$ ($1\leq j \leq T$). It is easy to see that this
passes the test with probability at least $(1 - \eps T)$. Thus,
choosing a dictator for each block yields a good solution for the
test.

For the soundness analysis, as in Section
\ref{sec-intro-soundness-P0} we fix a good noise set $\mc{I}$ 
conditioned on which the test accepts $P$ with probability at least
$1/2 + \xi$, and $\Pr[\mc{I} \tn{ is good }] \geq \xi/2.$ 
Further, without loss of generality, we assume that $\mc{I} = \cup_{j=1}^T \left(\{k_j + 1,\dots
k\}\times\{j\}\right)$, where (by Chernoff bound) $k_j \geq k/2$ for
$1\leq j \leq T$. For each $j$, $\{W_{1j},\dots, W_{k_jj}\}$ is
defined to be an orthogonal transformation of $\{Y_{1j},\dots,
Y_{k_jj}\}$ of the same $1/\sqrt{k_j}$ norm, where $W_{1j} =
(1/k_j)\sum_{i=1}^{k_j}Y_{ij}$. It is easy to see that $W_{1j} =  
(\sqrt{(T-1)/T})\delta_{j} + b\eta$, while $W_{\ell j} = 0$ under the
test distribution for $\ell > 1$.

Additionally, we also define $\{U_1,\dots, U_T\}$ to be an orthonormal
transformation of $\{W_{11},\dots, W_{1T}\}$ where $U_1 =
(1/\sqrt{T})\sum_{j=1}W_{1j}$. Again, it can observed that $U_1 =
(\sqrt{T})b\eta$ and $U_2,\dots, U_T$ are independent $N(0,1)$. Using
this we write  the polynomial $P = P' + Q_0 + U_1 Q_1$, where $P'$
consists of all the terms which have any $W_{\ell j}$, $\ell > 1$ as a
factor. Further, $Q_0$ is independent of $U_1$. Since $P' = 0$ under
the distribution we ignore it for now, noting that $\|Q_1\|_2^2 =
\E[Q_1^2] > 0$, since the test accepts with probability $ > 1/2$. 
The first step is to show, via Gaussian
anti-concentration on $Q_0$ and Chebyshev's inequality on $Q_1$, that
\begin{equation}\label{eqn:relative}
\|Q_0\|_2^2 \leq O(\eta^2)\|Q_1\|_2^2.
\end{equation}
Let us write $Q_1 = \sum_{H\in \mcb{H}} H\cdot Q_{1,H}(U_1,\dots, U_T)$, where the
sum is over the set $\mcb{H}$ of normalized Hermite monomials\footnote{By Hermite \emph{monomials},
we mean elements of the polynomial Hermite basis over the corresponding variables.} 
over the independent $N(0,1)$
variables $\cup_{j=1}^T\{Y_{ij}\}_{i=k_j+
1}^k$. Moreover, let  $Q^{(D)}_1 = \sum_{H \in \mcb{H}_D} 
H\cdot Q_{1,H}(U_1,\dots, U_T)$ for $0\leq D \leq
d-1 \geq \tn{deg}(Q_1)$, where $\mcb{H}_D$ is the subset of $\mcb{H}$
of degree exactly $D$.
Thus,  $\|Q_1\|_2^2 = \sum_{H\in \mcb{H}} \|Q_{1,H}\|_2^2$. Writing
$Q_{1,H} = Q_{1,H}(W_{11}, \dots, W_{1T})$ we also
define $\|Q_{1,H}\|_{\tn{mon}}^2$ as sum of squares of the
coefficients in 
the standard monomial basis $\mcb{M}$ of $\{W_{11}, \dots, W_{1T}\}$. A
straightforward calculation shows that:
\begin{equation}\label{eqn:relative2}
\|Q_{1,H}\|_2^2  \leq O(1)
\|Q_{1,H}\|_{\tn{mon}}^2,
\end{equation}
where the constants depending on $T$ and $d$ are absorbed in 
the $O(1)$ notation.	
On the other hand, since $Q_0$ is independent of $U_1$, using similar
definition of $Q_{0,H}$, we can
establish the reverse bound for it:
\begin{equation}\label{eqn:relative3}
\|Q_{0,H}\|_{\tn{mon}}^2  \leq O(1)\|Q_{0,H}\|_2^2.
\end{equation}

The rest of the arguments significantly build upon those in Section
\ref{sec-intro-soundness-P0}. We present a semi-formal description,
omitting much of the technical details. For
 reasons made clear later, we first 
carefully select $d^* \in \{0,\dots, d-1\}$ to be
the largest $D \in \{0,\dots, d-1\}$ such that $\|Q_1^{(D)}\|_2^2 \geq
\frac14\rho^D\|Q_1\|_2^2$ for a small enough constant depending on $k,T,
d,$ and $\eps$. It is easily observed that such a $d^*$ must exist
satisfying the properties: (i) $\|Q_1^{(d^*+1)}\|_2^2 \leq \frac14\rho^{d^* + 1} 
\|Q_1\|_2^2$, and (ii) $\|Q_1^{(d^*)}\|_2^2 \geq \frac14\rho^{d^*}\|Q_1\|_2^2$. 

Now we focus our attention on $U_1Q_1^{(d^*)}$ writing it as
\begin{equation}\label{eqn-U1Q1-intro}
U_1Q_1^{(d^*)} = \sum_{H \in \mcb{H}_{d^*}} HU_1Q_{1,H}(W_{11},\dots,
W_{1T}) = \sum_{H\in \mcb{H}_{d^*}}\sum_{M \in \mcb{M}}c_{H,M}HM.
\end{equation}
Let $\mcb{H}_{-j^*D}\subseteq \mcb{H}_{D}$ (resp.
$\mcb{M}_{-j^*} \subseteq \mcb{M}$) be the subset of basis elements
not containing any
variable from the $j^*$th block, i.e. $\{Y_{ij^*}\}_{k_{j^*}<i\leq k}$
(resp. $W_{1j^*}$). 
Now with $U_1 = (1/\sqrt{T})\sum_{j=1}W_{1j}$, we 
apply Lemma  \ref{lem-approx-structural-intro}
to each $U_1Q_{1,H}(W_{11},\dots, W_{1T})$ in the first expansion of
\eqref{eqn-U1Q1-intro}.
Using the fact that each $H$ has at most $d$
variables along with our choice of $T = 10d$ yields a $j^* \in [T]$
such that 
\begin{eqnarray}
\sum_{H \in \mcb{H}_{-j^*d^*}}\sum_{M \in \mcb{M}_{-j^*}}
c^2_{H,MW_{1j^*}} & \geq & \Omega(1)\bigg(\sum_{H \in \mcb{H}_{d^*}}\sum_{M
\in \mcb{M}} c^2_{H,M}\bigg) \label{eqn-lemma-apply-intro-1}\\
& \geq & \Omega(1)\|Q_1^{(d^*)}\|_2^2\ \geq\ \Omega(1)\rho^{d^*} \|Q_1\|_2^2
\label{eqn-lemma-apply-intro-2}
\end{eqnarray}
where the last two inequalities use
\eqref{eqn:relative2} along
with property (ii) above.

The next component of the analysis is to relate the bounds above with
the coefficients of
a suitable sub-polynomial of $P$ which is linear in the
variables $Y_{ij^*}$, $1 \leq i \leq k_{j^*}$. For this, let us first define
$\tilde{Q}$ to be exactly the sub-polynomial of $P$ 
 which does not contain any term with $W_{ij}$ where $i \neq
1$ and $j \neq j^*$. Rewriting the variables $\{W_{ij^*}\,\mid\, i \in
[k_{j^*}]\}$ in terms of $\{Y_{ij^*}\,\mid\, i \in [k_{j^*}]\}$, we
consider the sub-polynomial $\tilde{Q}_{\tn{lin}}$ (of $\tilde{Q}$) which is linear in the
variables $\{Y_{ij^*}\,\mid\, 1\leq i \leq k\}$. Note that
$\left(\cup_{D=0}^{d-1}\mcb{H}_{-j^*D}\right)\circ \mcb{M}_{-j^*}\circ\{Y_{ij^*}\}_{i=1}^k$ is a
basis in which $\tilde{Q}_{\tn{lin}}$ can be written with 
coefficients $\tilde{c}_{H, M, i}$ corresponding to the basis element
$HMY_{ij^*}$. Using the orthonormal transformation between
$\{W_{ij^*}\}_{i\in [k_{j^*}]}$ and $\{Y_{ij^*}\}_{i=1}^{k_{j^*}}$ we obtain
\begin{equation} \label{eqn-Y-W-intro}
\sum_{H\in \mcb{H}_{-j^*d^*}}\sum_{M \in \mcb{M}_{-j^*}}\sum_{i \in
[k_{j^*}]} \tilde{c}_{H,M,i}^2 \geq \frac{1}{2k_{j^*}} \left(
\sum_{H \in \mcb{H}_{-j^*d^*}}\sum_{M \in \mcb{M}_{-j^*}}
c^2_{H,MW_{1j^*}} \right),
\end{equation}
neglecting any contribution to the LHS of the above from $Q_0$ by our
a small enough choice of $\eta \ll \rho$ along with
\eqref{eqn:relative} and \eqref{eqn:relative3}.
The loss of $k_{j^*}$ factor in \eqref{eqn-Y-W-intro} is
compensated by the dependence of $\rho$ on $k$ as we shall see later. 
Combining
\eqref{eqn-Y-W-intro} with
\eqref{eqn-lemma-apply-intro-1}-\eqref{eqn-lemma-apply-intro-2} yields
\begin{equation}\label{eqn-non-noise-bd-intro}
\sum_{H\in \mcb{H}_{-j^*d^*}}\sum_{M \in \mcb{M}_{-j^*}}\sum_{i \in
[k_{j^*}]} \tilde{c}_{H,M,i}^2 \geq \Omega\left(1/k_{j^*}\right)
\rho^{d^*} \|Q_1\|_2^2.
\end{equation}
Consider now the sum 
$$\sum_{H\in \mcb{H}_{-j^*d^*}}\sum_{M \in
\mcb{M}_{-j^*}}\sum_{k_{j^*} < i \leq k} \tilde{c}_{H,M,i}^2.$$ 
Contribution
to the above can be from $Q_0$ or from $U_1Q_1^{(d^*+1)}$ -- the latter
due to the presence of $Y_{ij^*}$ ($k_{j^*} < i \leq k$) which
increases
the degree of $H \in \mcb{H}_{-j^*d^*}$ to $(d^* + 1)$ in the
representation of $Q_1$ over the basis $\mcb{H}\circ\mcb{M}$. 
Property (i) from our careful selection of $d^*$ is leveraged along with
our small enough choice of $\eta$ in \eqref{eqn:relative} along with
\eqref{eqn:relative3} to yield
\begin{equation}\label{eqn-noise-bd-intro}
\sum_{H\in \mcb{H}_{-j^*d^*}}\sum_{M \in
\mcb{M}_{-j^*}}\sum_{k_{j^*} < i \leq k} \tilde{c}_{H,M,i}^2 \leq 
O(1)
\rho^{d^*+1} \|Q_1\|_2^2.
\end{equation}
Using a choice $\rho \ll \eps/k$ we can combine the above with 
\eqref{eqn-non-noise-bd-intro} to obtain the following analog of
\eqref{eqn-1-bound-intro}:
\begin{equation}\label{eqn-second-1-bound-intro}
\sum_{H\in \mcb{H}_{-j^*d^*}}\sum_{M \in
\mcb{M}_{-j^*}}\sum_{i \in \mc{I}_{j^*}} \tilde{c}_{H,M,i}^2 \leq
\frac{\eps}{10} 
\sum_{H\in \mcb{H}_{-j^*d^*}}\sum_{M \in
\mcb{M}_{-j^*}}\sum_{i \in [k]} \tilde{c}_{H,M,i}^2,
\end{equation}
where $\mc{I}_{j^*} := \mc{I}\cap ([k]\times\{j^*\})$. 
Of course, since $\|Q_1\|_2 > 0$, we also obtain 
\begin{equation}\label{eqn-second-3-bound-intro}
\sum_{H\in \mcb{H}_{-j^*d^*}}\sum_{M \in
\mcb{M}_{-j^*}}\sum_{i \in [k]} \tilde{c}_{H,M,i}^2 > 0.
\end{equation}
The analysis above shows that for every good
choice of $\mc{I}$ there exist $(d^*, j^*)$ satisfying
\eqref{eqn-second-1-bound-intro}-\eqref{eqn-second-3-bound-intro}.
What remains is a probabilistic
concentration argument. 
Since $\Pr\left[\mc{I} \tn{ is good}\right] \geq \xi/2$, by averaging
we get that there exist $(d^*,j^*)$ and a fixing of
$\mc{I}\setminus\mc{I}_{j^*}$ such that with probability at least
$\xi/4Td$ over
the choice $\mc{I}_{j^*}$,
\eqref{eqn-second-1-bound-intro}-\eqref{eqn-second-3-bound-intro}
hold. Since each $i$ is added to $\mc{I}_{j^*}$ independently with
probability $\eps$, an application of Chernoff-Hoeffding shows that
the large deviation observed in 
\eqref{eqn-second-1-bound-intro} cannot occur with probability
$\xi/4Td$ (which is significant) unless the squared mass on the LHS of 
\eqref{eqn-second-3-bound-intro} is concentrated on a small number of
$i \in [k]$. This yields the desired decoding completing our sketch of
the analysis. The formal proof appearing in this work -- while
following the approach given above --  employs additional
notation and definitions for handling a few technicalities and ease of
presentation.

\medskip
\noindent
{\bf Combining $\mathfrak{P}_1$ with Label Cover and Folding.} The
test $\mathfrak{P}_1$ is executed on the $T$ blocks of coordinates
corresponding to $T$ randomly chosen vertices of a Smooth Label
Cover instance (as used in \cite{GRSW}). The resulting instance is then folded, i.e. 
the distribution on
the point-label pairs is projected onto a subspace $\mc{F}$ orthogonal to the
span of all the linear constraints implied by the edges of the Label
Cover. These linear constraints ensure that any vector in $\mc{F}$ has
equal mass sum in the coordinates of the two pre-images of a label 
given by an edge's projections. This property can be extended to polynomials
$P$ residing in $\mc{F}$. This fits with our decoding of
$\mathfrak{P}_1$ which is via a sub-polynomial
$\tilde{Q}_{\tn{lin}}$ linear in the variables
$\{Y_{ij^*}\}_{i=1}^k$ of the $j^*$th block. More specifically, we may
fix the vertices corresponding to all the blocks except the $j^*$th
and also the restriction of $\mc{I}$ to all the blocks except the
$j^*$th. This fixes $\mcb{H}_{-j^*d^*}\circ M_{-j^*}$ used
in \eqref{eqn-second-1-bound-intro}-\eqref{eqn-second-3-bound-intro}.
For a vertex $v$ let
$\tilde{c}_{H,M,i,v} = \tilde{c}_{H,M,i}$ when $v$ is chosen as the
$j^*$th vertex. Suppose for an edge between $u$ and $v$ (not among
the fixed vertices) the respective
pre-images of a common label are $A$ and $B$. Then, the folding
constraints imply
\begin{equation}
\sum_{i \in A}\tilde{c}_{H,M,i,u} = \sum_{i \in
B}\tilde{c}_{H,M,i,v}.\label{eqn-folding-intro}
\end{equation}
We combine the above with the decoding obtained from the analysis of
$\mathfrak{P}_1$ using appropriately set smoothness parameters 
to prevent masses in the pre-images containing
the decoded coordinates from cancelling out. 
The constraints  \eqref{eqn-folding-intro} then imply that the
decoded labels
define a labeling satisfying a significant fraction of edges of the Label Cover instance.
\vspace{-1em}

\paragraph{Organization.}
Section \ref{sec:prelim} presents some preliminaries. Section \ref{sec:reduction} describes
the reduction from Label Cover in the form of a PCP test. Section \ref{sec-folding} gives the 
constraints implied by folding extended to polynomials. 
 In Section \ref{sec-NO-case}, we show the soundness
of the reduction assuming a lemma (essentially restating \eqref{eqn-second-1-bound-intro}-\eqref{eqn-second-3-bound-intro}) 
about the structure of polynomials passing the test. The rest of the paper is
devoted to proving this lemma. In Section \ref{sec:anticonc}, we apply Gaussian anti-concentration to prove the analog 
of \eqref{eqn:relative}.
In Section \ref{sec:findingj}, we prove the structural lemma using Lemma \ref{lem-approx-structural-intro} as a key ingredient. Lemma 
\ref{lem-approx-structural-intro} is proved in Section \ref{sec:struct}.

\section{Preliminaries}\label{sec:prelim}

\subsection{The \LC Problem}

\begin{definition}[Smooth Label Cover]\label{sec:ug}
A \LC instance $\mc{L}(G(V, E),k,L, \{\pi_{e,v}\}_{e \in E,v \in e})$ consists of a regular connected graph with vertex set $V$ and edge set $E$, along with  projection maps $\pi_{e,v}: [k] \to [L]$ for all $e \in E,v \in e$. 
The goal is to find an assignment $\sigma: V \to [k]$ such that $\forall e = (u, w) \in E$, $\pi_{e,u}(\sigma(u)) = \pi_{e,w}(\sigma(w))$. The optimum for a \LC instance is the maximum fraction of edges satisfied by an assignment.
\end{definition}

The following Theorem from \cite{GRSW} states the hardness of \LC problem:

\begin{theorem}\label{thm-LC-hardness}
There exists a constant $c_0 > 0$ such that for any constant integer
parameters $J,R \ge 1$, it is \NP-hard to distinguish between the
following cases for a \LC instance $\mc{L}$$(G(V, E),$ $k,$ $L,$
$\{\pi_{e,v}\}_{e \in E,v \in e})$ with parameters $k = 7^{(J+1)R}, L=2^{R}7^{JR}$.
\begin{itemize}
	\item {\bf YES}: There is a labeling that satisfies every edge.
	\item {\bf NO}: Every labeling satisfies less than $2^{-c_0R}$-fraction of edges.
\end{itemize}
Additionally, the instance $\mc{L}$ satisfies the following properties:
\begin{itemize}
	\item {\bf Smoothness}: For any $v \in V$, and labels $i,j \in [k], i \ne j$, $\Pr_{e \sim v}[\pi_{e,v}(i) = \pi_{e,v}(j)] \le 1/J$. In particular, for a subset $S \subseteq [k]$, $\Pr_{e \sim v}\left[\left|\pi_{e,v}(S)\right| = \left|S\right|\right] \leq |S|^2/(2J)$. 
	\item The degree $d_{\mc{L}}$ of the graph $G$ is a constant dependent only on $J$ and $R$.
	\item For any vertex $v \in V$, edge $e \in E$ incident on
vertex $v$, and $j \in [L]$, we have
$\big|\big(\pi_{e,v}\big)^{-1}(j)\big| \le t_{\mc{L}} := 4^R$.
	\item {\bf Weak Expansion}: For any $V^\prime \subseteq V$, the number of edges induced in $V^\prime$ is at least $\frac{\delta^2}{2}|E|$ where $\delta = |V^\prime|/|V|$.
\end{itemize}
\end{theorem}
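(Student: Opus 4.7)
The plan is to assemble the smooth Label Cover instance by chaining three classical ingredients: the PCP theorem, Raz's parallel repetition, and a Khot-style smoothness transformation. I would begin by taking a base 2-prover 1-round Label Cover derived from the standard PCP reduction to Max-3SAT, viewing it as a game where one prover is asked for the satisfying assignment of a random clause and the other for the assignment of a random variable in that clause. This yields a base instance with left alphabet of size $7$ (the number of satisfying assignments to a $3$-clause), right alphabet of size $2$, and constant soundness strictly below $1$.

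Applying $R$-fold parallel repetition via Raz's theorem then amplifies the soundness gap to $2^{-c_0 R}$ for a universal constant $c_0 > 0$, with alphabet sizes becoming $7^R$ on the left and $2^R$ on the right and projections acting coordinate-wise. For smoothness, I would apply a Khot-style transformation that augments each left label with $J$ auxiliary ``shadow'' coordinates, each drawn from $[7^R]$, giving the left alphabet $k = (7^R)^{J+1} = 7^{(J+1)R}$; the right alphabet is enlarged to $L = 2^R \cdot (7^R)^J = 2^R \cdot 7^{JR}$, with the edge projection applying the base projection on the ``true'' coordinate and weaving in the shadows in a way that randomizes their role per edge. The key property is that two distinct left labels project to the same right label only when they happen to agree on a randomly chosen shadow coordinate, which occurs with probability at most $1/J$ over the edge choice; the bound for subsets $S$ follows by a union bound over the at most $\binom{|S|}{2}$ disagreeing pairs.

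The constant-degree bound $d_{\mc{L}}$ and the preimage bound $t_{\mc{L}} = 4^R$ follow by direct counting in the construction (the $4^R$ rather than $7^R$ comes from exploiting structure of the underlying 3-clause projection to bound worst-case preimage size). Weak expansion is arranged by starting from a PCP whose constraint graph is itself an expander (standard PCP constructions via expander replacement provide this), and then verifying that neither parallel repetition nor the smoothness transformation destroys the property, since both operate only on labels and leave the underlying graph structure intact up to multiplicities.

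The principal obstacle is soundness: I would need to show that any labeling of the smoothed, repeated instance satisfying more than a $2^{-c_0 R}$ fraction of edges can be decoded---by averaging or list-decoding over the shadow coordinates---into a labeling of the underlying repeated Label Cover satisfying a comparable fraction of edges, contradicting Raz's theorem. This step is where the smoothness parameter $J$ interacts nontrivially with the soundness parameter, and care is required so that the decoded list size is small enough to still extract a labeling with success probability beating the Raz bound. The remaining work is bookkeeping to confirm that the alphabet sizes, preimage bound, and smoothness guarantees match the stated formulas exactly.
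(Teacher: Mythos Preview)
The paper does not prove this theorem; it is quoted verbatim from \cite{GRSW} and used as a black box. So there is no ``paper's own proof'' to compare against---the intended comparison collapses to noting that the statement is imported.

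Your sketch is in the right spirit (PCP $\to$ Raz repetition $\to$ Khot-style smoothing is indeed how the construction in \cite{GRSW} goes), but one step is off. You write that parallel repetition and the smoothness transformation ``leave the underlying graph structure intact up to multiplicities,'' and hence inherit weak expansion from an expanding base PCP. This is not correct: $R$-fold parallel repetition replaces the graph by its $R$-fold tensor power (vertices are $R$-tuples, edges are $R$-tuples of edges), which is a genuinely different graph, and spectral expansion does not straightforwardly survive tensoring in the form you need. In the actual construction, weak expansion is obtained not by tracking an expander through the reduction but from the bipartite, biregular structure of the repeated Label Cover: one samples an edge by first sampling a vertex uniformly and then a random incident edge, and the inequality $|E(V')| \ge (\delta^2/2)|E|$ follows from regularity via a direct Cauchy--Schwarz/convexity argument, with no spectral input at all. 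Your identification of the soundness-preservation step under smoothing as the crux is accurate, and your handling of the $4^R$ preimage bound (each literal fixed leaves at most $4$ of the $7$ satisfying clause assignments) is fine.
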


\subsection{Hermite Bases for Multivariate Polynomials}
For integer $d \geq 0$, the {\em Hermite polynomials} $H_d(x)$ are degree-$d$ univariate polynomials such that \\ $\E_{X \sim N(0,1)}[H_d(X)^2]=1$ and $\E_{X \sim N(0,1)}[H_d(X) H_{d'}(X)]=0$ for any $d \neq d'$. For example, $H_0(x) = 1$, $H_1(x) = x$, $H_2(x) = \frac{1}{\sqrt{2}}(x^2 - 1)$, and $H_3(x) = \frac{1}{\sqrt{6}}(x^3 - x)$. 

For $\mbf{d} \in \N^n$, we define $H_{\mbf{d}}(x_1, \dots, x_n) = \prod_{i \in [n]} H_{d_i}(x_i)$. For $D \geq 0$, let $\mc{H}_D = \{H_{\mbf{d}} : \mbf{d} \in \N^n, \sum_{i \in [n]} d_i \leq D\}$ denote the {\em Hermite basis for degree-$D$ polynomials}. The following is immediate.

\begin{fact}
The set $\mc{H}_D$ forms an orthonormal basis for $n$-variate degree-$D$ polynomials whose inputs are drawn from $N(0,1)^n$. In particular, for any $P: \R^n \to \R$ of degree $\leq D$, we can write:
$$P(x) = \sum_{\mbf{d} \in \N^n: \sum_i d_i \leq D} \hat{f}(\mbf{d}) \cdot H_{\mbf{d}}(x)$$ 
and moreover, $\E_{x} P(x) = \hat{f}(\mbf{0})$ and $\E_x[P(x)^2] = \sum_{\mbf{d}} \hat{f}^2(\mbf{d})$.
\end{fact}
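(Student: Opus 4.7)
The plan is to establish the fact in three straightforward steps: (i) prove orthonormality of $\mc{H}_D$ under the Gaussian measure, (ii) prove that $\mc{H}_D$ spans the space of $n$-variate polynomials of degree at most $D$, and (iii) deduce the expansion formula together with the two moment identities.

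For step (i), I would use the independence of the coordinates $x_1,\dots,x_n$ under $N(0,1)^n$ to factor the inner product. Specifically, for any $\mbf{d},\mbf{d}' \in \N^n$,
\[
\E_{x \sim N(0,1)^n}\bigl[H_{\mbf{d}}(x)\,H_{\mbf{d}'}(x)\bigr] \;=\; \prod_{i \in [n]} \E_{X_i \sim N(0,1)}\bigl[H_{d_i}(X_i)\,H_{d'_i}(X_i)\bigr],
\]
and the univariate orthonormality of the Hermite polynomials stated in the preceding paragraph immediately yields that this product equals $1$ if $\mbf{d}=\mbf{d}'$ and $0$ otherwise.

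For step (ii), the key observation is that $H_{d}(x)$ is a degree-$d$ univariate polynomial with nonzero leading coefficient; consequently, the family $\{H_{\mbf{d}} : \sum_i d_i \le D\}$ is triangular with respect to the monomial basis $\{x_1^{d_1}\cdots x_n^{d_n} : \sum_i d_i \le D\}$ under any total order refining the partial order $\mbf{d}\preceq\mbf{d}'$. Since both families have the same finite cardinality $\binom{n+D}{D}$ and the change-of-basis matrix is upper-triangular with nonzero diagonal, $\mc{H}_D$ is a basis of the degree-$\le D$ polynomial space. Combined with (i), it is an orthonormal basis with respect to the $L^2(N(0,1)^n)$ inner product on this finite-dimensional subspace.

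For step (iii), the existence and uniqueness of the expansion $P = \sum_{\mbf{d}} \hat{f}(\mbf{d}) H_{\mbf{d}}$ follows by writing $\hat{f}(\mbf{d}) := \E_x[P(x) H_{\mbf{d}}(x)]$, which is well-defined by (ii) and agrees with $P$ by linearity. Taking expectations and using $H_{\mbf{0}}\equiv 1$ together with $\E[H_{\mbf{d}}]=0$ for $\mbf{d}\ne\mbf{0}$ (a special case of orthogonality against $H_{\mbf{0}}$) gives $\E_x P(x) = \hat{f}(\mbf{0})$. Expanding $P(x)^2$ and applying the orthonormality from (i) term-by-term yields $\E_x[P(x)^2] = \sum_{\mbf{d}} \hat{f}(\mbf{d})^2$. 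I do not anticipate any genuine obstacle here: every step is a direct translation of standard properties of univariate Hermite polynomials to the product measure via independence, and the only care needed is to verify the triangular change-of-basis argument in step (ii) cleanly so that the spanning claim is rigorous rather than assumed.
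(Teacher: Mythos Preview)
Your proposal is correct and is the standard argument. The paper itself does not supply a proof; it simply states ``The following is immediate'' and moves on, treating this as a well-known background fact about Hermite polynomials under the Gaussian measure. Your three-step verification (orthonormality via independence, spanning via the triangular change-of-basis from monomials, and the moment identities via Parseval) is exactly how one would justify it if asked, and there is nothing to compare against in the paper.
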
 
\subsection{Concentration and Anti-Concentration}\label{sec-conc-anticonc}
The magnitude of polynomials in our analysis is controlled using the following standard bound.

\medskip
\noindent
\textbf{Chebyshev's Inequality.} For any random variable $X$ and $t > 0$, $\Pr\left[|X| > t\right] \leq \slfrac{\E[X^2]}{t^2}$.

\medskip
\noindent
The above is used in conjunction with Carbery and Wright's~\cite{CW01} powerful anti-concentration bound for polynomials over independent Gaussian variables.
\begin{theorem}	\label{thm:carbery-wright-prelim}
\tn{(Carbery-Wright~\cite{CW01})} Suppose $P: \R^\ell \to \R$ is a degree-$d$ polynomial over independent $N(0,1)$ random variables. Then,
$$\Pr\left[|P| \leq \eps \|P\|_2\right] = O(d\eps^{1/d}).$$
\end{theorem}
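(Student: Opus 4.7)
The plan is to prove this by combining a one-dimensional Remez-type inequality for polynomials with a slicing argument in the multivariate case, and then using hypercontractivity to control the ``leading coefficient'' that appears after slicing. The $d=1$ case is immediate: a linear form $P(X)=\langle a,X\rangle$ is distributed as $N(0,\|a\|^2)=N(0,\|P\|_2^2)$, so $\Pr[|P|\le \eps\|P\|_2]\le \eps\sqrt{2/\pi}$, matching the bound with $d=1$.

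\textbf{Step 1: One-dimensional bound.} For a univariate degree-$d$ polynomial $p(x)$ with leading coefficient $c_d$, factor $p(x)=c_d\prod_{i=1}^d(x-r_i)$. The sublevel set $\{x:|p(x)|\le \eps\}$ is contained in a union of at most $d$ intervals, each of length $O((\eps/|c_d|)^{1/d})$, by an elementary Remez-type argument. Combining with the uniform bound $(2\pi)^{-1/2}$ on the Gaussian density,
\[
\Pr_{X\sim N(0,1)}[|p(X)|\le \eps]\;\le\; O\!\bigl(d\,(\eps/|c_d|)^{1/d}\bigr).
\]

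\textbf{Step 2: Slicing in the multivariate case.} Given $P:\R^\ell\to\R$, pick a unit direction $v\in\R^\ell$ and decompose $X = tv + X^\perp$ where $t\sim N(0,1)$ and $X^\perp\sim N(0,I-vv^\top)$ are independent. Viewing $P$ as a polynomial in $t$ of degree at most $d$ with random coefficients depending on $X^\perp$, apply Step 1 conditionally on $X^\perp$ to get
\[
\Pr[|P|\le \eps\|P\|_2\mid X^\perp]\;\le\; O\!\bigl(d\cdot(\eps\|P\|_2/|c_d(X^\perp)|)^{1/d}\bigr),
\]
where $c_d(X^\perp)$ is the coefficient of $t^d$. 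Taking expectation and using H\"older reduces the problem to bounding the negative moment $\E[|c_d(X^\perp)|^{-1/d}]$.

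\textbf{Step 3: Controlling the leading coefficient via hypercontractivity.} The main obstacle is that $c_d(X^\perp)$ could be very small or even vanish for a fixed $v$. To bypass this, use hypercontractivity for Gaussian polynomials: $\|Q\|_q\le (q-1)^{d/2}\|Q\|_2$ for any degree-$d$ polynomial $Q$ and any $q\ge 2$. By an averaging/rotation argument over $v$, one can find a direction in which $\E[c_d(X^\perp)^2]\gtrsim_d \|P\|_2^2$ (restricted to the top homogeneous part of $P$). Hypercontractivity then yields matching reverse H\"older bounds $\E[|c_d|^{-q}]\le C(d,q)\|c_d\|_2^{-q}$ for small $q>0$, by a standard argument exchanging negative moments with the integral $q\int_0^\infty t^{-q-1}\Pr[|c_d|\le t]\,dt$ and invoking induction on the degree of the coefficient polynomial $c_d$ (which has degree at most $d-1$ in $X^\perp$, giving the inductive hook).

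\textbf{Where the difficulty lies.} The hard part is making the induction lose only a constant factor per step (not a factor growing with $\ell$ or $d$), so that the final exponent is exactly $1/d$ rather than something like $1/d^2$. Carbery and Wright achieve this by packaging Steps 1--3 into a single functional inequality: they show directly that $\int |P|^{-q}d\gamma\le C_{d,q}\|P\|_2^{-q}$ for all $0<q<1/d$ and all degree-$d$ Gaussian polynomials, and then recover the anti-concentration bound via Chebyshev on $|P|^{-q}$ with the choice $q\to 1/d$. I would attempt this ``negative-moment'' route as the cleanest path, rather than iterating the slicing step verbatim, since it directly produces the sharp $\eps^{1/d}$ exponent and avoids losing constants to repeated conditioning.
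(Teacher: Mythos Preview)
The paper does not prove this theorem at all; it is stated in the preliminaries as a known result from Carbery and Wright~\cite{CW01} and simply cited for later use. So there is no ``paper's own proof'' to compare against.

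As for your sketch itself: the overall architecture (one-dimensional Remez bound, slice along a direction, control the random leading coefficient by induction/hypercontractivity, or equivalently bound negative moments $\E|P|^{-q}$ for $q<1/d$) is a recognizable route to Gaussian anti-concentration, but a few points deserve caution. First, in Step~2 the leading coefficient $c_d(X^\perp)$ is not random in $X^\perp$ at all: the coefficient of $t^d$ depends only on the top-degree homogeneous part of $P$ evaluated on $v$, so it is a deterministic function of the direction $v$, not of $X^\perp$. What varies with $X^\perp$ is the \emph{effective} degree along the slice (if $c_d=0$ for your chosen $v$, the slice polynomial has lower degree), so the inductive structure is on the degree of the slice, not on a random coefficient polynomial. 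Second, the actual Carbery--Wright argument is not a slicing/hypercontractivity induction; it is a convex-geometric argument for general log-concave measures that bounds $\int |P|^{-q}$ directly via a Brunn--Minkowski-type inequality, which is why the constant is dimension-free and the exponent is exactly $1/d$. Your negative-moment endpoint is the right target, but the mechanism you outline (``hypercontractivity $\Rightarrow$ reverse H\"older for negative moments'') is not self-contained: getting $\E|Q|^{-q}<\infty$ for $q$ up to $1/\deg Q$ is essentially equivalent to the anti-concentration bound you are trying to prove, so invoking it for the coefficient polynomial $c_{d-1}(X^\perp)$ is the inductive step, not a consequence of hypercontractivity alone. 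If you want to carry out this route rigorously, the cleanest reference is the induction-on-degree proof in the Gaussian case (see, e.g., Nazarov--Sodin--Volberg or the exposition in Meka's notes), which does yield the correct $O(d\eps^{1/d})$ with absolute constants.
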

In addition, we also use following Chernoff-Hoeffding bound.
\begin{theorem}[Chernoff-Hoeffding]\label{thm:chernoff}
 Let $X_1,\dots, X_n$ be independent random variables, each bounded as $a_i \leq X_i \leq b_i$ with $\Delta_i = b_i - a_i$ for $i = 1,\dots, n$. Then, for any $t > 0$,
	$$\Pr\left[\left|\sum_{i=1}^n X_i - \sum_{i=1}^n\E[X_i]\right| > t\right] \leq 2\cdot\tn{exp}\left(-\frac{2t^2}{\sum_{i=1}^n\Delta_i^2}\right).$$
\end{theorem}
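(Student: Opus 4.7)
The plan is to apply the exponential (Chernoff) moment method and combine it with Hoeffding's lemma on bounded random variables. Write $S = \sum_{i=1}^n X_i$ and $\mu = \E[S]$. The first step is the standard reduction: for any $s > 0$, Markov's inequality applied to $e^{s(S-\mu)}$ gives
$$\Pr[S - \mu > t] \;\leq\; e^{-st}\,\E\!\left[e^{s(S-\mu)}\right] \;=\; e^{-st}\prod_{i=1}^n \E\!\left[e^{s(X_i - \E[X_i])}\right],$$
where the factorization uses independence of the $X_i$.

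The core step is then Hoeffding's lemma: if $Y$ is a random variable with $\E[Y] = 0$ and $a' \leq Y \leq b'$, then for all $s \in \R$,
$$\E[e^{sY}] \;\leq\; \exp\!\left(\frac{s^2 (b'-a')^2}{8}\right).$$
I would prove this by convexity of $y \mapsto e^{sy}$: for $y \in [a', b']$, we have $e^{sy} \leq \frac{b'-y}{b'-a'}e^{sa'} + \frac{y-a'}{b'-a'}e^{sb'}$; taking expectations and using $\E[Y]=0$ yields an upper bound $e^{\varphi(u)}$ where $u = s(b'-a')$ and $\varphi(u) := -p u + \log(1 - p + p e^{u})$ with $p = -a'/(b'-a') \in [0,1]$. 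A direct calculation shows $\varphi(0) = \varphi'(0) = 0$ and $\varphi''(u) \leq 1/4$ for all $u$, so by Taylor's theorem $\varphi(u) \leq u^2/8$, which gives the claimed bound. Applying this to $Y_i = X_i - \E[X_i]$ with $b_i' - a_i' = \Delta_i$ yields $\E[e^{s(X_i - \E X_i)}] \leq e^{s^2 \Delta_i^2/8}$.

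Plugging this into the product bound gives
$$\Pr[S - \mu > t] \;\leq\; \exp\!\left(-st + \tfrac{s^2}{8}\sum_{i=1}^n \Delta_i^2\right),$$
and optimizing over $s > 0$ by taking $s = 4t/\sum_i \Delta_i^2$ produces the one-sided estimate $\exp(-2t^2/\sum_i \Delta_i^2)$. Repeating the argument for $-S$ in place of $S$ yields the same bound on $\Pr[\mu - S > t]$; a union bound over the two tails gives the factor of $2$ in the theorem statement.

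The only genuinely delicate step is the verification that $\varphi''(u) \leq 1/4$ in Hoeffding's lemma, since this is what produces the sharp constant $2$ in the exponent (rather than some weaker constant that a cruder bound on the moment generating function would yield). Everything else — the Chernoff reduction, independence-based factorization, and the $s$-optimization — is routine once that bound is in hand.
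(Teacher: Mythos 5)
Your proof is correct, and it is the standard textbook argument (Chernoff's exponential moment method combined with Hoeffding's lemma, whose constant $1/4$ bound on $\varphi''$ you verify correctly, followed by optimizing $s = 4t/\sum_i \Delta_i^2$ and a union bound over the two tails). The paper itself offers no proof of this statement — it is quoted in the preliminaries as a known concentration bound — so there is nothing to compare against; your derivation is exactly the classical one and fills that role without issue.
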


\section{Hardness Reduction}\label{sec:reduction}
\begin{figure}[t]
\begin{mdframed}
\begin{center}\textbf{The Basic PCP Test given instance $\mc{L}$ of \LC}\end{center}

\begin{enumerate*}

\item For each $j \in [T]$, the test chooses $T$ random vertices $v_1,v_2,\ldots,v_T \overset{u.a.r.}{\sim} V$.
Let $Y_{ij} := Y^{v_j}_i$.

\item Sample $\{\delta_{j}\, \mid\, j\in[T]\}$ from the joint Gaussian
	distribution where the marginals are $N(0,1)$,
	$\E[\delta_j\delta_{j'}] = -1/(T-1)$ for all $j\neq j'$, 
	and $\sum_{j=1}^T \delta_{j} = 0$.

\item Sample $b \in \{-1,1\}$ uniformly at random.

\item Sample $\mc{I} \subseteq [k]\times[T]$ to be a random subset
where each $(i,j) \in [k]\times[T]$ is added to $\mc{I}$
independently with probability $\eps$.

\item For each $(i,j) \in ([k]\times[T])\setminus \mc{I}$, set $Y_{ij} := \sqrt{{(T-1)}/{T}}\cdot \delta_{j} +
b\eta.$

\item Independently for each $(i,j) \in \mc{I}$,
sample  $Y_{ij}$ from $N(0,{1})$. 

\item Set the variables of all other vertices (except $\{v_j
\mid\, j \in [T]\}$) to be $0$. Let this setting of the variables be the point ${\bf y} \in \mathbb{R}^{\mc{Y}}$.

\item Output the point-sign pair $({\bf y},b)$.

\end{enumerate*}
\end{mdframed}
\caption{Basic PCP Test}
\label{fig:pcptest}
\end{figure}

The following reduction from \LC  directly implies our main theorem.
\begin{theorem}\label{thm-hardness-redn}
For any $\xi > 0$ and $d \in \mathbb{Z}^+$, there exists a choice of $R$ and $J$ in Theorem \ref{thm-LC-hardness} and
a  polynomial-time reduction from the corresponding \LC instance $\mc{L}$ to a set of point-sign pairs $\mc{Q} \subseteq \mathbb{R}^N\times \{-1,1\}$ such that:
\begin{itemize*}
\item \textbf{YES Case.} If $\mc{L}$ is a YES instance, then there exists
a linear form $L$ satisfying
$$\Pr_{({\bf x}, s) \in \mc{Q}} \left[\tn{sign}\left(L({\bf x})\right) = s\right]
\geq 1 - \xi.$$

\item \textbf{NO Case.} If $\mc{L}$ is a NO instance, then for any degree-$d$
polynomial $P$
$$\Pr_{({\bf x}, s) \in \mc{Q}} \left[\tn{sign}\left(P({\bf x})\right) = s\right]
\leq \frac{1}{2} + \xi.$$
\end{itemize*}

\end{theorem}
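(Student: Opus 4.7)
The plan is to construct $\mc{Q}$ by running the Basic PCP Test of Figure~\ref{fig:pcptest} on the coordinate set $\mc{Y} := V \times [k]$, and then \emph{folding} the resulting point distribution over the subspace $\mc{F} \subseteq \R^{\mc{Y}}$ defined by the \LC edge constraints (as developed in Section~\ref{sec-folding}): each sampled point ${\bf y}$ is replaced by its orthogonal projection onto $\mc{F}$, so that only polynomials respecting the folding constraints need to be considered as hypotheses. The parameters $T, \eta, \eps$ are set as in $\mathfrak{P}_1$, with $T = 10d$, while the \LC parameters $R,J$ are fixed at the end so that all the subsequent bounds hold simultaneously.

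For the \textbf{YES case}, let $\sigma:V\to[k]$ be a labeling satisfying every edge of $\mc{L}$. Consider the linear form $L({\bf y}) := \sum_{v\in V} y^{v}_{\sigma(v)}$. Its coefficient vector puts mass $1$ on $(v,\sigma(v))$ for each $v$, and because $\sigma$ satisfies every edge, the pre-image sums $\sum_{a\in \pi_{e,u}^{-1}(\ell)}$ and $\sum_{b\in \pi_{e,w}^{-1}(\ell)}$ always agree; hence $L \in \mc{F}$ and folding leaves $L({\bf y})$ unchanged. On a sampled test point, $L({\bf y})$ collapses to $\sum_{j=1}^T y_{\sigma(v_j),j}$, and when no pair $(\sigma(v_j),j)$ falls in the noise set $\mc{I}$, the identity $\sum_j \delta_j = 0$ gives $L({\bf y}) = Tb\eta$, whose sign matches $b$. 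A union bound over the noise event yields $\Pr[\tn{sign}(L({\bf y})) = b] \geq 1 - \eps T \geq 1 - \xi$ for $\eps$ small enough.

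For the \textbf{NO case}, suppose for contradiction that some degree-$d$ polynomial $P$ (without loss of generality restricted to $\mc{F}$) satisfies $\Pr[\tn{sign}(P({\bf y})) = b] \geq 1/2 + \xi$. The proof proceeds by a sequence of averaging steps over the random choices of $\{v_j\}_{j\in[T]}$, the noise set $\mc{I}$, and $\delta$ to identify a distinguished block index $j^*\in[T]$, a degree parameter $d^*\in\{0,\dots,d-1\}$, and fixings of everything outside block $j^*$, so that conditioned on this fixing the structural inequalities \eqref{eqn-second-1-bound-intro}--\eqref{eqn-second-3-bound-intro} hold with probability $\Omega(\xi/Td)$ over the remaining randomness; this is the content of the structural lemma proved across Sections~\ref{sec:anticonc}--\ref{sec:struct} using Lemma~\ref{lem-approx-structural-intro} and Carbery--Wright. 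A Chernoff--Hoeffding large-deviation argument on the noise set $\mc{I}_{j^*}$, applied to the linear sub-polynomial $\tilde{Q}_{\tn{lin}}$ in the variables $\{Y_{ij^*}\}_{i\in[k]}$, then forces the squared mass $\sum_{H,M}\tilde{c}_{H,M,i,v}^2$ to concentrate on a list $\Lambda(v)\subseteq[k]$ of constant size $C = C(\xi,\eps,d,T)$, where $v$ is the vertex substituted into block $j^*$.

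The principal obstacle is converting these per-vertex heavy lists into a global labeling of $\mc{L}$ that satisfies a non-negligible fraction of edges. Here folding plus smoothness do the work: by Section~\ref{sec-folding} the linear coefficients satisfy the polynomial-folding identity $\sum_{i\in\pi_{e,u}^{-1}(\ell)}\tilde{c}_{H,M,i,u} = \sum_{i\in\pi_{e,w}^{-1}(\ell)}\tilde{c}_{H,M,i,w}$ for every edge $e=(u,w)$ and label $\ell\in[L]$, while the smoothness bound of Theorem~\ref{thm-LC-hardness} with $J \gg C^2$ ensures that $\pi_{e,v}$ is injective on $\Lambda(v)$ with probability close to $1$ over $e\sim v$. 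Under this injectivity the heavy mass on the two sides of the folding identity cannot cancel and must project to a common label, so picking a uniformly random element of $\Lambda(v)$ for each $v$ satisfies an $\Omega(1/C^2)$-fraction of edges in expectation (with weak expansion absorbing the small set of fixed vertices). Choosing $R$ large enough that $2^{-c_0R} \ll 1/C^2$ then contradicts the NO case of Theorem~\ref{thm-LC-hardness}, completing the proof.
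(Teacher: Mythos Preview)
Your proposal is correct and follows essentially the same approach as the paper: the Basic PCP Test with folding for the construction, the satisfying labeling's linear form for the YES case, and for the NO case the Main Structural Lemma (via Carbery--Wright and Lemma~\ref{lem-approx-structural-intro}) followed by Chernoff--Hoeffding on $\mc{I}_{j^*}$, then the folding identities plus smoothness and weak expansion to decode a good labeling. The only step you omit is the discretization of the Gaussian test distribution into a \emph{finite} point set $\mc{Q}$, which the paper handles in Section~\ref{sec:ends} using the Berry--Esseen-type result of \cite{DOSW11}; this is needed for the theorem as stated but is a standard add-on.
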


The last sentence of \cref{thm:main} is justified in Section \ref{sec:ends}.

\subsection{The Basic PCP Test}\label{sec-basic-pcp-test}

We begin with a Basic PCP Test given an instance
 $\mc{L}(G(V, E),k,L, \{\pi_{e,v}\}_{e \in E,v \in e})$ of \LC.  
For each vertex $v
\in V$, there is a set of variables $\{Y^v_i\}_{i=1}^k$, and the set
of all the variables $\mc{Y}$ is a union over all vertices $v \in V$ of these variable sets. 
The test is described by  
the sampling procedure in Figure \ref{fig:pcptest}, and yields a distribution over point-sign pairs which is independent of the constraints in $\mc{L}$. It uses some additional parameters set as follows: $T \coloneqq 10d$, $\eps \coloneqq (\xi/32Td)$, $\eta \coloneqq \Big(\frac{\eps\xi}{20kdT}\Big)^{d6^{3d}}$, where $d$ is from the statement of Theorem \ref{thm-hardness-redn}.

\subsubsection{Folding over constraints of $\mc{L}$} \label{sec-folding}
To ensure consistency across the edges of $\mc{L}$, the points generated by the Basic PCP Test are \emph{folded} over a specific subspace.  The points generated by the Basic PCP Test reside in the space $\mathbb{R}^{\mc{Y}}$. 
Now, for a fixed $e=(u,w) \in E$ and $j \in [L]$, we define the vector ${\bf h}^{e}_j \in \mathbb{R}^{\mc{Y}}$ as 
\begin{align}
	{\bf h}^e_j(Y^v_i) &= \begin{cases}
		1 & \mbox{ if } v = u\tn{ and } i \in \left(\pi_{e,u}\right)^{-1}(j), \\
		-1 &\mbox{ if } v = w\tn{ and } i \in \left(\pi_{e,w}\right)^{-1}(j), \\
		0 & \mbox{otherwise.}
		\end{cases}
\end{align}
Let $\mc{H} \subseteq \mathbb{R}^{\mc{Y}}$ be the subspace formed by the linear span of the vectors $\{{\bf h}^e_{j}\}_{e \in E, j \in [L]}$, and let $\mc{F}$ be the orthogonal complement of $\mc{H}$ in $\mathbb{R}^{\mc{Y}}$, i.e. $\mathbb{R}^{\mc{Y}} = \mc{H}\oplus \mc{F}$ and $\mc{H}\perp \mc{F}$. For each point-sign pair $({\bf y},b)$ generated by the Basic PCP Test, construct $(\ol{{\bf y}},b)$ where $\ol{\bf y}$ is the projection of ${\bf y}$ onto the subspace $\mc{F}$, represented in some (fixed) orthogonal basis for $\mc{F}$. 

Conversely, for any vector $\ol{\bf z} \in \mc{F}$, let ${\bf z}$ be its representation in $\mathbb{R}^{\mc{Y}}$. It is easy to see that such a ${\bf z}$ satisfies: for every $e=(u,w) \in E$ and  $j \in [L]$, $\langle {\bf z}, {\bf h}^{e}_j \rangle = 0$ which is equivalent to
\begin{equation}					\label{eq:folding_constraint}
\tn{\it Constraint }\ \ {\mc{C}_{e,j}:} \qquad\qquad \sum_{i \in \left(\pi_{e,u}\right)^{-1}(j)} {\bf z}(Y^u_i)  = \sum_{i \in \left(\pi_{e,w}\right)^{-1}(j)} {\bf z}(Y^w_i). 
\end{equation}
For our purpose we shall extend the above constraint to polynomials as well. Consider a polynomial $Q$ in $\mathbb{R}^{\mc{Y}}$. For any monomial $M$ over the variables $\mc{Y}$, let $c_{Q,M}$ be its coefficient in $Q$. Fix an edge $e = (u,w)$ and $j \in [L]$, and a monomial $M$ such that $M$ does not contain any variable from the set $\{Y^u_i\, \mid\, i\in \left(\pi_{e,u}\right)^{-1}(j)\}\cup \{Y^w_i\, \mid\, i\in \left(\pi_{e,w}\right)^{-1}(j)\}$. For such a choice of $e, j,$ and $M$ we say that $\mc{C}_{e,j,M}$ is a \emph{valid} constraint where:
\begin{equation}					\label{eq:folding_constraint_poly}
\tn{\it Constraint }\ \ {\mc{C}_{e,j,M}:} \qquad\qquad \sum_{i \in \left(\pi_{e,u}\right)^{-1}(j)} c_{Q,M\cdot Y^u_i}  = \sum_{i \in \left(\pi_{e,w}\right)^{-1}(j)} c_{Q,M\cdot Y^w_i}. 
\end{equation}
We have the following lemma.
\begin{lemma}\label{lem-folding-poly}
Let $\ol{Q}$ be a polynomial that resides in $\mc{F}$, i.e. is represented in an orthogonal basis\footnote{A polynomial $\ol{Q}$ being represented in an orthogonal basis for a subspace $\mc{F}$ means $\ol{Q}$ can be written as a polynomial over the linear forms corresponding to an orthogonal basis for $\mc{F}$.} for $\mc{F}$, and let $Q$ be its representation in $\mathbb{R}^{\mc{Y}}$. Then, $Q$ satisfies all valid constraints $\mc{C}_{e,j,M}$.
\end{lemma}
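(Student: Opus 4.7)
The plan is to derive the polynomial folding constraints as coefficient identities coming from a translation invariance of $Q$, exactly analogous to the linear case in \eqref{eq:folding_constraint}. First I would observe the following translation invariance: fix any $e = (u,w) \in E$ and $j \in [L]$, so that ${\bf h}^{e}_{j} \in \mc{H}$. Let $\{{\bf e}_{1}, \dots, {\bf e}_{m}\}$ be the orthogonal basis for $\mc{F}$ used to represent $\ol{Q}$, and let $L_k({\bf y}) = \langle {\bf e}_k, {\bf y}\rangle$ be the corresponding linear forms on $\R^{\mc{Y}}$. Because $\mc{F} \perp \mc{H}$, we have $\langle {\bf e}_k, {\bf h}^{e}_{j}\rangle = 0$ for every $k$, so $L_k({\bf y} + t{\bf h}^{e}_{j}) = L_k({\bf y})$ for all $t \in \R$. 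Since $Q({\bf y}) = \ol{Q}(L_1({\bf y}), \dots, L_m({\bf y}))$ by hypothesis, this yields
\begin{equation*}
Q({\bf y} + t {\bf h}^{e}_{j}) \;=\; \ol{Q}(L_1({\bf y}), \dots, L_m({\bf y})) \;=\; Q({\bf y})\qquad\text{for all }{\bf y}\in\R^{\mc{Y}},\ t\in\R.
\end{equation*}

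Next I would differentiate this identity with respect to $t$ at $t=0$ to obtain that the directional derivative $D_{{\bf h}^{e}_{j}} Q$ vanishes identically. Substituting the explicit coordinates of ${\bf h}^{e}_{j}$ into the chain rule gives
\begin{equation*}
D_{{\bf h}^{e}_{j}} Q \;=\; \sum_{i \in (\pi_{e,u})^{-1}(j)} \frac{\partial Q}{\partial Y^{u}_{i}} \;-\; \sum_{i \in (\pi_{e,w})^{-1}(j)} \frac{\partial Q}{\partial Y^{w}_{i}} \;\equiv\; 0.
\end{equation*}
Now fix any monomial $M$ satisfying the hypothesis of $\mc{C}_{e,j,M}$, i.e., $M$ contains no variable from $\{Y^{u}_{i} : i \in (\pi_{e,u})^{-1}(j)\} \cup \{Y^{w}_{i} : i \in (\pi_{e,w})^{-1}(j)\}$. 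I would then match the coefficient of $M$ on both sides of the identity above: since $M$ does not contain $Y^{u}_{i}$ for the relevant $i$, the only monomial $N$ appearing in $Q$ whose partial derivative $\partial N/\partial Y^{u}_{i}$ contributes to the coefficient of $M$ is $N = M\cdot Y^{u}_{i}$ itself, which contributes exactly $c_{Q, M\cdot Y^{u}_{i}}$. The same accounting holds for the $w$-side, and extracting the coefficient of $M$ from $D_{{\bf h}^{e}_{j}} Q \equiv 0$ precisely recovers the constraint
\begin{equation*}
\sum_{i \in (\pi_{e,u})^{-1}(j)} c_{Q, M\cdot Y^{u}_{i}} \;=\; \sum_{i \in (\pi_{e,w})^{-1}(j)} c_{Q, M\cdot Y^{w}_{i}},
\end{equation*}
which is $\mc{C}_{e,j,M}$.

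I don't expect a serious obstacle here; the only subtlety worth flagging carefully is the coefficient-matching step, where one must verify that no other monomial of $Q$ can produce $M$ after differentiation. This is exactly where the hypothesis on $M$ (avoiding all variables in the two preimage sets) is used: it rules out contributions from monomials $N \neq M\cdot Y^{u}_{i}$, such as those containing $(Y^{u}_{i})^{2}$ or containing some other variable $Y^{u}_{i'}$ with $i' \in (\pi_{e,u})^{-1}(j)$, which would otherwise spoil the clean identification of coefficients. With that bookkeeping done, the lemma is immediate.
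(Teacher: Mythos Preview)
Your proof is correct and takes a genuinely different route from the paper. The paper argues by contradiction: it assumes some constraint $\mc{C}_{e,j,M}$ fails, packages the offending coefficients into a vector ${\bf r}$ with $\langle {\bf r}, {\bf h}^{e}_{j}\rangle \neq 0$, extracts its nonzero $\mc{H}$-component ${\bf r}_1$, and then observes that in any orthogonal basis of $\R^{\mc{Y}}$ containing ${\bf r}_1$ the polynomial $Q$ has a term with the linear form ${\bf r}_1[\mc{Y}]$ as a genuine factor, contradicting that $Q$ is expressible over a basis for $\mc{F}$. Your argument is instead direct and analytic: you exploit the translation invariance $Q({\bf y} + t{\bf h}^{e}_{j}) = Q({\bf y})$ (immediate from $\mc{F}\perp\mc{H}$), differentiate once, and read off the constraint by matching the coefficient of $M$ in the vanishing directional derivative. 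Your approach is shorter and makes the role of the validity hypothesis on $M$ completely transparent --- it is exactly what forces each $\partial Q/\partial Y^{u}_{i}$ to contribute the single coefficient $c_{Q,M\cdot Y^{u}_{i}}$ with multiplicity one. The paper's basis-change viewpoint, on the other hand, is perhaps more suggestive of why the extension to general bases in Remark~\ref{rem-folding-poly} goes through, though your invariance argument handles that extension equally well.
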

\begin{proof}
Suppose for a contradiction $Q$ does not satisfy a valid constraint $\mc{C}_{e, j, M}$. Consider the vector ${\bf r}$ where,
$${\bf r}(Y^v_i) = \begin{cases}
c_{Q,M\cdot Y^u_i} & \tn{ if } v=u, i \in \left(\pi_{e,u}\right)^{-1}(j)\\
c_{Q,M\cdot Y^w_i} & \tn{ if } v=w, i \in \left(\pi_{e,w}\right)^{-1}(j)\\
0 & \tn{ otherwise.}
\end{cases}$$
Since Equation \eqref{eq:folding_constraint_poly} is not satisfied, it is easy to see that $\langle {\bf r}, {\bf h}^e_j\rangle \neq 0$, and thus ${\bf r} = {\bf r}_0 + {\bf r}_1$ where  ${\bf r}_0 \in \mc{F}$ and ${\bf r}_1 \in \mc{H}$. On the other hand, consider an orthogonal basis $\mc{B}$ for $\mathbb{R}^{\mc{Y}}$ that is an extension of $\{{\bf r}_1\}$, i.e. ${\bf r}_1$ is an element of $\mc{B}$. $P$ can now be represented as:
$$P \equiv  {\bf r}_1[\mc{Y}]\cdot P_1 + P_0,$$
where $P_1$ is a polynomial represented in $\mc{B}$, $P_0$ is represented in $\mc{B}\setminus\{{\bf r}_1\}$, and ${\bf r}_1[\mc{Y}]$ is the $\mc{Y}$-linear form $\sum_{Y \in \mc{Y}} {\bf r}_1(Y)\cdot Y$. Note that $P_1$ is not identically zero, in particular it contains the monomial $M$.
This implies that $P$ cannot be represented over any basis for $\mc{F}$, which is a contradiction.  
\end{proof}
\begin{remark}\label{rem-folding-poly}
Instead of monomials $M$, the constraints in \eqref{eq:folding_constraint_poly} analogously hold for elements $B$ of a basis $\mcb{B}$ for polynomials over any set of variables not containing $\{Y^u_i\, \mid\, i\in \left(\pi_{e,u}\right)^{-1}(j)\}\cup \{Y^w_i\, \mid\, i\in \left(\pi_{e,w}\right)^{-1}(j)\}$.
\end{remark}

\subsection{ The Final PCP Test}\label{sec-final-pcp-test}
Given a degree-$d$ polynomial $\ol{P}_{\tn{global}}$ over the space $\mc{F}$, the test samples $({\bf y},b)$ from the Basic PCP Test (as described in Figure \ref{fig:pcptest}), and constructs $(\ol{\bf y},b)$ as described in Section \ref{sec-folding}. The test accepts {\it iff} ${\rm sign}\left(\ol{P}_{\tn{global}}(\ol{\bf y})\right)=b$.

\begin{remark}The Basic PCP Test generates a distribution over $\mathbb{R}^{\mc{Y}}\times \{-1,1\}$ using
various independently Gaussian random variables. Therefore, the support set of this distribution is not finite. In Section \ref{sec:ends}, using techniques from \cite{DOSW11}, we discretize the Basic PCP Test. 
Building upon the discretized Basic PCP Test, the Final PCP Test yields the desired finite subset $\mc{Q}$ in polynomial time. 
\end{remark}

\subsection{Completeness Analysis}
Suppose there is a labeling $\sigma :V \to [k]$ which satisfies all the edges of $\mc{L}$. Define $L^*(\mc{Y})
= \sum_{v \in V} Y^v_{\sigma(v)}$ to be a linear form. Note that $L^*({\bf y}) := \langle {\bf r}^*, {\bf y}\rangle$ for some ${\bf r}^* \in \mc{F}$, and so $L^*$ can be represented in an orthogonal basis for $\mc{F}$. Thus, for any point ${\bf y} \in \mathbb{R}^{\mc{Y}}$, $L^*(y) = L^*(\ol{\bf y})$ where $\ol{\bf y}$ is the projection of ${\bf y}$ on to $\mc{F}$ as defined in Section \ref{sec-folding}.

Now consider $({\bf y}, b)$ generated by the Basic PCP Test.
By a union bound over the randomness of the test,  with probability at least $(1 - \eps T)$:  $(\sigma(v_j), j) \not\in \mc{I}$ for each $j \in [T]$. Given this, it is easy to see that $L^*({\bf y}) = b$, and by the above reasoning $L^*(\ol{\bf y}) = b$.  
Thus, $L^*$ satisfies the Final PCP Test with probability at least $(1 - \eps T)$. Our choice of $\eps$ yields the desired accuracy.

\section{Soundness Analysis}\label{sec-NO-case}
Given the \LC instance $\mc{L}$, suppose that there is a degree-$d$ polynomial (over $\mc{F}$) $\ol{P}_{\tn{global}}$ such that
the Final PCP Test accepts with probability $1/2 + \xi$. Our goal in the rest of this paper is to show that in this case there exists a labeling that satisfies at least $2^{-c_0R}$-fraction of the edges of $\mc{L}$, for an appropriate choice of constants $R$ and $J$ in Theorem \ref{thm-LC-hardness} and because of its NO Case we would be done.

Let $P_{\tn{global}}$ be the representation of $\ol{P}_{\tn{global}}$  in $\mathbb{R}^{\mc{Y}}$, so that  $\ol{P}_{\tn{global}}(\ol{\bf y}) = P_{\tn{global}}({\bf y})$ where $\ol{\bf y} \in \mc{F}$ is a point generated by the Final PCP Test from a point ${\bf y}$ generated by the Basic PCP Test as given in Section \ref{sec-final-pcp-test}. Therefore, $P_{\tn{global}}({\bf y}) = b$ with probability at least $1/2 + \xi$ over the pairs $({\bf y}, b)$ output by the Basic PCP Test. Using this, we focus on analyzing the structure of $P_{\tn{global}}$.

To begin the analysis note that with probability at least $2\xi$ over
the choices of the verifier other than $b$, $P_{\tn{global}}$ flips
its sign on flipping $b$.  Call a choice of $\{v_j\,\mid\, j \in [T]\}$ 
\emph{good} if conditioned on this, the same holds with
probability at least $\xi$ over the rest of the choices (other than $b$) of the
verifier. 
By averaging, with probability at least $\xi$,
the verifier makes a good choice. We now fix such a 
good choice $\{v_j\,\mid\, j \in [T]\}$.

For convenience, we shall use $P$ to denote the restriction of
$P_{\tn{global}}$ to ${\bf Y} := \{Y_{ij}\, \mid\, i\in[k], j\in [T]\}$. 
Let $\mc{D}$ be the distribution on $({\bf Y}, b)$ generated by the
steps of the verifier.
 Our analysis shall first show that  
in terms of this basis $P$ must have a  certain structure which  
will then be used to determine a good labeling for $\mc{L}$.

\subsection{Basis Transformations}
For the purpose of the analysis, we shall rewrite the variables
${\bf Y}$ in different bases. Before we
do that, we shall isolate the noisy set $\mc{I}$ of the Basic PCP Test.

\subsubsection{Choice of set $\mc{I}$}
The distribution $\mc{D}$ involves choosing the set $\mc{I}$ in which
each $(i,j)$ is added independently at random with probability $\eps$.
Let us call a setting of $\mc{I}$ as \emph{nice} if it satisfies:
\begin{enumerate}
\item For each $j$, $\left|\{i\,\mid\, (i,j)\in\mc{I}\}\right| \leq
k/2$.
\item 
With probability $\xi/2$ over the rest of the choices of the verifier
(except $b$), $P$ flips its sign on flipping $b$. 
\end{enumerate}
By our setting of $\eps$ and $T$, for a large enough value of $k$, and
applying the Chernoff Bound, a union bound and an averaging argument,
we have:
\begin{equation}
\Pr_{\mc{D}}\left[\mc{I} \tn{ is nice}\right] \geq \xi/4.
\label{eqn-nice-I}
\end{equation}

Going forward, we shall fix a nice choice of $\mc{I}$. By relabeling, we
may assume that there exist $k/2 \leq k_j \leq k$ for $j \in [T]$ such that
\begin{equation}
\mc{I} = \bigcup_{j=1}^T \{(i,j) \,\mid\, i = k_j+1,\dots, k\}.
\end{equation}
Based on this  {nice} choice of $\mc{I}$, we now define new bases
for the ${\bf Y}$ variables. Let $\mc{D}_{\mc{I}}$
denote the distribution of the variables after fixing a nice $\mc{I}$.

\subsubsection{Bases {\bf W} and {\bf U}} \label{sec:R=1Basis}
For each $j \in [T]$, we define $(W_{1j}, W_{2j}, \dots, W_{k_jj})$ as a fixed orthogonal transformation of $(Y_{1j}, Y_{2j}, \dots, Y_{k_jj})$ so that
\begin{equation}
W_{1j} = \frac{1}{ k_j} \sum_{i=1}^{k_j} Y_{ij},  \ \ \ \ \tn{and} \ \ \ \ 
W_{ij} = \sum_{\ell \in [k_j]}c_{i\ell}Y_{\ell j}\ \tn{for all $i \in [2, k_j]$}
\end{equation} 
where the vectors $\left\{{\bf c}_i =[c_{i1}, c_{i2}, \ldots ,c_{ik_j}]^{\sf{T}}\right\}_{i=2}^{k_j}$ satisfy 
\begin{itemize*}
	\item For all $i,i^\prime \in [k_j] \setminus \{1\}$ we have $\langle {\bf c}_i, {\bf c}_{i^\prime} \rangle = 0$
	\item Each vector ${\bf c}_i$ satisfies $\|{\bf c}_i\|^2 = 1/k_j$ and ${\bf c}_i \perp \mathbbm{1}$ where $\mathbbm{1}$ is the all ones vector in $\mathbb{R}^{k_j}$.
\end{itemize*}
We shall also define the vector ${\bf c}_1 = \frac{1}{k_j}\cdot\mathbbm{1}$ where $\mathbbm{1} \in \R^{k_j}$ is the the vector of all ones.
The above along with the distribution of $\{Y_{ij}\,\mid\, i = 1,\dots,k_j\}_{j=1}^T$ in $\mc{D}_{\mc{I}}$ directly implies the following.
\begin{lemma}\label{lem:W-dist}
Under the distribution $\mc{D}_\mc{I}$:
\begin{enumerate*}
\item[(i)]
$W_{1j} = \sqrt{\nfrac{(T-1)}{T}} \cdot \delta_j + b\eta$ 
\item[(ii)]  For $i \neq 1$, $W_{ij} = 0$.
\end{enumerate*}
\end{lemma}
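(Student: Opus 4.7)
The proof is essentially a direct computation substituting the definitions of the $W_{ij}$'s into the values taken by the $Y_{ij}$'s under $\mc{D}_\mc{I}$. Since $\mc{I}$ is nice, the indices of $Y_{ij}$ appearing in each $W_{ij}$ (namely $i \in [k_j]$ for the $j$th block) all lie outside $\mc{I}$, and hence by Step 5 of the Basic PCP Test every such $Y_{\ell j}$ takes the common deterministic value
\[
Y_{\ell j} = \sqrt{(T-1)/T}\cdot \delta_j + b\eta, \qquad \ell \in [k_j].
\]
This is the key observation that makes everything collapse cleanly.

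For part (i), I would just plug this into the definition $W_{1j} = \tfrac{1}{k_j}\sum_{\ell=1}^{k_j} Y_{\ell j}$ and note that all $k_j$ summands take the same value, giving $W_{1j} = \sqrt{(T-1)/T}\cdot\delta_j + b\eta$. For part (ii), for $i \neq 1$ I would substitute into $W_{ij} = \sum_{\ell \in [k_j]} c_{i\ell} Y_{\ell j}$ and factor out the common value to get
\[
W_{ij} = \left(\sqrt{(T-1)/T}\cdot\delta_j + b\eta\right)\cdot \sum_{\ell \in [k_j]} c_{i\ell} = \left(\sqrt{(T-1)/T}\cdot\delta_j + b\eta\right)\cdot \langle {\bf c}_i, \mathbbm{1}\rangle,
\]
which vanishes by the stipulated property that ${\bf c}_i \perp \mathbbm{1}$ for each $i \in [2, k_j]$.

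There is no real obstacle here; this is a sanity check verifying that the orthogonal basis $\{W_{ij}\}$ was defined precisely so as to isolate the sign-perturbation $b\eta$ (and the shared Gaussian noise $\delta_j$) into the single coordinate $W_{1j}$ per block, while all higher-indexed coordinates $W_{ij}$ ($i > 1$) vanish deterministically on the test distribution. The only mild care is to make sure we use the nice condition $k_j \leq k/2 \cdot 2 = k$ (so that $[k_j]$ is precisely the non-noisy part of the $j$th block), which is guaranteed by the choice of $\mc{I}$.
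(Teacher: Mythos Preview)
Your proof is correct and is exactly the direct computation the paper has in mind; the paper itself does not write out a proof but merely states that the lemma is ``directly implied'' by the definitions of the $W_{ij}$ and the distribution of the non-noisy $Y_{\ell j}$'s under $\mc{D}_\mc{I}$. Your last sentence is slightly garbled (the relevant point is simply that by the relabeling $[k_j]$ is precisely the set of non-noisy indices in block $j$), but the argument is fine.
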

Let $U_1, \dots, U_T$ be a fixed orthonormal transformation of $(W_{11}, \dots, W_{1T})$, where
\begin{equation}
U_1 = \frac{1}{\sqrt{T}} \sum_{j=1}^T W_{1j}, \ \ \ \ \tn{and}\ 
U_t = \sum_{j \in [T]}a_{tj}W_{1j} \ \tn{for all $t \in [2, T]$}
\end{equation}
where vectors ${\bf a}_2,\ldots,{\bf a}_T$ are orthonormal and each vector ${\bf a}_t = [a_{t1}, a_{t2}, \ldots ,a_{tT}]^{\sf{T}}$  satisfies   $\sum_{j \in T}{a}_{tj} =0$ (i.e., they are orthogonal to the all ones vector).
\begin{lemma}\label{lem:U-dist}
Under the distribution $\mc{D}_{\mc{I}}$,
\begin{itemize*}
\item[(i)] $U_1 = {b\eta}{\sqrt{T}}$
\item[(ii)] For each $1 < t \leq T$, $U_t \sim N(0,1)$ i.i.d.  
\end{itemize*}
\end{lemma}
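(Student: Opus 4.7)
The plan is to substitute the expression for $W_{1j}$ from the previous lemma into the definitions of $U_1$ and $U_t$, and then reduce everything to two ingredients: the constraint $\sum_j \delta_j = 0$, and the covariance structure of $(\delta_1,\dots,\delta_T)$ combined with the orthonormality of $\{\mathbf{a}_t\}$.

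For part (i), I would just plug in: by Lemma \ref{lem:W-dist} we have $W_{1j} = \sqrt{(T-1)/T}\,\delta_j + b\eta$, so
\[
U_1 \;=\; \frac{1}{\sqrt T}\sum_{j=1}^T W_{1j} \;=\; \sqrt{\frac{T-1}{T^2}}\sum_{j=1}^T \delta_j \;+\; \frac{1}{\sqrt T}\cdot T\cdot b\eta,
\]
and the first term vanishes because $\sum_j \delta_j = 0$, giving $U_1 = b\eta\sqrt{T}$ as claimed.

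For part (ii), the same substitution gives $U_t = \sqrt{(T-1)/T}\sum_j a_{tj}\delta_j + b\eta \sum_j a_{tj}$, and since each $\mathbf{a}_t$ is orthogonal to $\mathbbm{1}$ the $b\eta$ term drops. So $U_t$ is a fixed linear combination of the joint Gaussian vector $(\delta_1,\dots,\delta_T)$ and is therefore itself Gaussian; it only remains to compute the covariance. The covariance matrix of $\delta$ is $\Sigma = \frac{T}{T-1}\bigl(I - \tfrac{1}{T}\mathbbm{1}\mathbbm{1}^{\sf T}\bigr)$ by the stated marginals and pairwise covariances. A short calculation, using $\mathbf{a}_t \perp \mathbbm{1}$ and orthonormality $\langle \mathbf{a}_t,\mathbf{a}_{t'}\rangle = \delta_{tt'}$, yields
\[
\mathrm{Cov}(U_t,U_{t'}) \;=\; \frac{T-1}{T}\cdot \mathbf{a}_t^{\sf T} \Sigma\, \mathbf{a}_{t'} \;=\; \mathbf{a}_t^{\sf T}\mathbf{a}_{t'} \;=\; \delta_{tt'},
\]
so $(U_2,\dots,U_T)$ is jointly Gaussian with identity covariance, i.e.\ i.i.d.\ $N(0,1)$.

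There is no real obstacle here; the only thing to be careful about is the scaling factor $\sqrt{(T-1)/T}$ introduced in $W_{1j}$, which is precisely what compensates for the singularity of $\Sigma$ on the hyperplane $\{\sum_j \delta_j = 0\}$ and makes the variances come out to exactly $1$. I would present the proof as two short displayed computations corresponding to (i) and (ii) above.
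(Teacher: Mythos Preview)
Your proof is correct and takes essentially the same approach as the paper. The paper's proof is terse, citing Lemma~\ref{lem:W-dist} for part~(i) and deferring part~(ii) to Lemma~\ref{lem-ortho-transform} in the appendix, which computes $\E[f^2]$ and $\E[fh]$ for linear forms $f,h$ in the $\delta_j$'s by expanding sums; your covariance-matrix formulation $\Sigma = \tfrac{T}{T-1}(I - \tfrac{1}{T}\mathbbm{1}\mathbbm{1}^{\sf T})$ is simply a more compact packaging of that same computation.
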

\begin{proof}
Lemma \ref{lem:W-dist} along with the definition of $U_1$ yields the first part. 
The second part follows from an application of  Lemma \ref{lem-ortho-transform}.
\end{proof}
Before we proceed, we briefly summarize the variables and their
distribution under $\mc{D}_{\mc{I}}$.

\begin{itemize}
	\item {\bf Noisy Indices} For a fixed $j \in [T]$, $[k_j]$ is the set of non-noisy $i$'s where $k_j \geq k/2$.
	\item {\bf The $Y$-variables} . For each $(i,j) \in [k] \times [T] \setminus\mc{I}$, $Y_{ij} =  \sqrt{\nfrac{(T-1)}{T}}\cdot\delta_j +b\eta$. For $(i,j) \in \mc{I}$, $Y_{ij}$'s are independent $N(0,1)$ random variables. 
	\item {\bf The $W$-variables} For a fixed $j$, we define variables $W_{1j},\ldots,W_{k_jj}$ with $W_{1j} =  \sqrt{\nfrac{(T-1)}{T}}\cdot\delta_j + b\eta$ and $W_{2j},\ldots,W_{{k_j}j}$ are $0$. 
	\item {\bf $U$-variables} We define $U_1 = \frac{1}{\sqrt{T}}\sum_{j \in [T]}W_{1j}$ which is $b\eta\sqrt{T}$ and is independent of the 
variables $U_2,\ldots,U_T$ where each $U_t$ is i.i.d. $N(0,1)$ for $t>1$.
\end{itemize}

\subsection{A Hybrid Basis Relative to $j^*$ and $d^*$}\label{sec:hybrid}
Recall that we have fixed a nice $\mc{I}$. In this section, we define a basis for polynomials using a fixed choice of $j^* \in [T]$ and $d^* \in [d]$.  For convenience let $[T_{-j^*}] := [T]\setminus\{j^*\}$.

\begin{definition}\label{def-hermite-basis-partial}
Let $\mcb{H}_{-j^*}$ be the Hermite basis for
all polynomials over the independent Gaussian variables $\{Y_{ij}\,\mid\, i \in [k]\setminus[k_j], j\in[T_{-j^*}]\}$.
In particular, $\E[H^2] = \E[G^2] = 1$ and $\E[HG] = 0$ for each
$H, G \in \mcb{H}_{-j^*}$, $H\neq G$. Let $\mcb{H}_{-j^*d^*}$  be the
set of basis elements of $\mcb{H}_{-j^*}$ of degree exactly $d^*$. 
\end{definition}
\begin{definition}\label{def-monomial-Ws-partial}
Let $\mcb{M}_{-j^*}$ be the \emph{standard monomial basis} for
polynomials over the variables $\{W_{1j}\,\mid\,j\in [T_{-j^*}]\}$. In
particular, each element of $\mcb{M}_{-j^*}$ is of the form
$\prod_{j\in [T_{-j^*}]}W_{1j}^{a_j}$ for some non-negative integers
$a_j$ ($j \in [T_{-j^*}]$).
\end{definition}
\begin{definition}\label{def-combined-basis}
Let $\mcb{B}_{-j^*} := \mcb{H}_{-j^*} \circ \mcb{M}_{-j^*}$ be the
combined basis for polynomials over the variables of 
$\mcb{H}_{-j^*}$ and $\mcb{M}_{-j^*}$, where each element $B$ is of
the form $HM$ for some $H \in \mcb{H}_{-j^*}$ and $M \in
\mcb{M}_{-j^*}$ and $\tn{deg}(B) = \tn{deg}(H) + \tn{deg}(M)$. For
convenience  we also define the subset $\mcb{B}_{-j^*d^*} := 
\mcb{H}_{-j^*d^*} \circ \mcb{M}_{-j^*}$, i.e. each element of 
$\mcb{B}_{-j^*d^*}$ is of the form $HM$ where $H \in
\mcb{H}_{-j^*d^*}$ and $M \in \mcb{M}_{-j^*}$.
\end{definition}
Lastly, let $\mcb{S}_{j^*}$ be the set of all multisets of
$R_{j^*} = \{(i,j^*)\,\mid\, i\in [k]\}$. For an element $S \in
\mcb{S}_{j^*}$, let $S(i,j^*)$ denote the number of occurrences of
$(i,j^*)$ in $S$. Using this, we define $Y_S := \prod_{(i,j^*) \in R_{j^*}} Y_{ij^*}^{S(i,j^*)}$.

\medskip
Writing the polynomial $P$  in the basis given by products of $\mcb{B}_{-j^*}$, $\{W_{ij}: j \in [T_{-j^*}], i \in [k_j]\setminus\{1\}\}$ and $\{Y_{ij^*}: i \in [k]\}$, 
the polynomial $P$ can be  
represented as:
\begin{equation}
P = P_{\tn{omit}} + \sum_{\substack{S\in \mcb{S}_{j^*} \\ B\in \mcb{B}_{-j^*}}}
c_{S, B}  Y_S B, \label{eqn-P-hybrid-rep}
\end{equation}
where $c_{S,B}$ are constants and\footnote{The reason for treating $P_{\tn{omit}}$ separately is that it vanishes under the distribution $\mc{D}_{\mc{I}}$.} $P_{\tn{omit}}$ is the sub-polynomial of $P$ consisting of all monomials containing a variable from $\{W_{ij}: j \in [T_{-j^*}], i \in [k_j]\setminus\{1\}\}$. Of course, since $P$ is of degree at
most $d$, the only terms that occur in the above sum satisfy
$\tn{deg}(B) + |S| \leq d$.

For a fixed $0 \leq d^* \leq d-1$ we will be interested in capturing the
the mass of $P$ linear in $Y_{ij^*}$ and the subset
$\mcb{B}_{-j^*d^*}$. Abusing notation
to let $c_{(i,j^*), B} = c_{(S, B)}$ where $S = \{(i,j^*)\}$ is the
singleton multiset, define
\begin{equation}
c_{i,j^*,d^*} = \sqrt{\sum_{B \in \mcb{B}_{-j^*d^*}} c_{(i,j^*), B}^2} 
\label{eqn-lin-coeff}
\end{equation} 
for each $(i,j^*) \in R_{j^*}$ and $0 \leq d^* \leq d-1$. 

\subsection{Main Structural Lemma}
We are now ready to describe the structure that $P$ must exhibit in order to pass the Basic PCP test. 
Let us first define a {\em distinguished pair} $(j^*, d^*)$ for a fixed setting of $\mc{I}$.

\begin{definition}\label{def-distinguished-jd}
A pair $(j^*, d^*) \in [T]\times\{0,\dots,d-1\}$ is said to be
\emph{distinguished} for $\mc{I}$ if,
\begin{equation}
\sum_{(i,j^*) \in \mc{I}} c_{i,j^*,d^*}^2 \leq
\frac{\eps^4}{4} \cdot \left(\sum_{(i,j^*) \in
([k]\times\{j^*\})\setminus\mc{I}} c_{i,j^*,d^*}^2\right), \label{eqn-main-nice-I-1}
\end{equation}
and,
\begin{equation}
\sum_{(i,j^*) \in
([k]\times\{j^*\})\setminus\mc{I}} c_{i,j^*,d^*}^2 > 0. \label{eqn-main-nice-I-3}
\end{equation}
Here, $\eps$ is the noise parameter used in the PCP test.
\end{definition}

The main lemma that we prove is the following.

\begin{lemma}[Main Structural Lemma]\label{lem-main-nice-I}
For every \emph{nice} choice of $\mc{I}$, there exists $j^* \in [T]$
and $d^* \in \{0,1,\dots, d-1\}$ such that $(j^*, d^*)$ is
distinguished for $\mc{I}$.
\end{lemma}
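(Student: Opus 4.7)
\textbf{Proof plan for Lemma \ref{lem-main-nice-I}.} The strategy mirrors the informal argument of Section 1.2.2, but now carried out over the formal hybrid basis of Section \ref{sec:hybrid}. First, I would fix a nice $\mc{I}$ and re-express $P$ in a basis matched to $\mc{D}_{\mc{I}}$: replace the non-noisy $Y_{ij}$'s by the $W$-basis (so that $W_{\ell j} = 0$ for $\ell > 1$ under $\mc{D}_{\mc{I}}$), and then replace $(W_{11}, \dots, W_{1T})$ by the $U$-basis (so that $U_1 = b\eta\sqrt{T}$ is the only sign-carrying coordinate while $U_2, \dots, U_T$ are independent $N(0,1)$). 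This yields a decomposition $P = P_{\tn{omit}} + Q_0 + U_1 Q_1$, where $P_{\tn{omit}}$ collects all terms containing some $W_{\ell j}$ with $\ell \geq 2$ and therefore vanishes under $\mc{D}_{\mc{I}}$ by Lemma \ref{lem:W-dist}, and $Q_0, Q_1$ are polynomials in $U_2,\dots,U_T$ together with the noisy Gaussians.

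The next step is the relative-mass bound \eqref{eqn:relative}, namely $\|Q_0\|_2^2 \leq O(\eta^2)\|Q_1\|_2^2$ and $\|Q_1\|_2^2 > 0$. Because $\mc{I}$ is nice, flipping $b$ flips $\tn{sign}(P) = \tn{sign}(Q_0 + b\eta\sqrt{T}\,Q_1)$ with probability at least $\xi/2$, so a random-$b$ sign-flip can happen only when $|Q_0| \lesssim \eta|Q_1|$; Chebyshev's inequality controls the typical magnitude of $Q_1$ while the Carbery--Wright bound (\cref{thm:carbery-wright-prelim}) controls the concentration of $Q_0$ around zero, yielding the claimed estimate. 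These details belong to Section \ref{sec:anticonc}.

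With these facts in hand, I next locate the distinguished degree $d^*$ and block $j^*$. Decompose $Q_1 = \sum_{D=0}^{d-1} Q_1^{(D)}$ by Hermite degree in the noisy Gaussians and pick $d^*$ to be the largest $D \leq d-1$ with $\|Q_1^{(D)}\|_2^2 \geq \tfrac14 \rho^D \|Q_1\|_2^2$, where $\rho = \rho(\eps, k, T, d)$ is a small constant to be fixed at the end; existence follows from a geometric-series tail argument. This choice guarantees both $\|Q_1^{(d^*)}\|_2^2 \geq \tfrac14\rho^{d^*}\|Q_1\|_2^2$ and $\|Q_1^{(D)}\|_2^2 \leq \tfrac14\rho^{D}\|Q_1\|_2^2$ for $D > d^*$. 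Writing $Q_1^{(d^*)} = \sum_{H \in \mcb{H}_{d^*}} H \cdot Q_{1,H}(W_{11},\dots,W_{1T})$ and recalling $U_1 = (1/\sqrt{T})\sum_j W_{1j}$, I apply Lemma \ref{lem-approx-structural-intro} to each factor $(W_{11}+\dots+W_{1T})\cdot Q_{1,H}$: each $H$ admits at least $T/2 = 5d$ good blocks $j$, and an averaging argument over $[T]$, combined with \eqref{eqn:relative2}--\eqref{eqn:relative3} to pass between $\|\cdot\|_2$ and the monomial-coefficient squared mass, produces a single $j^* \in [T]$ for which $\sum_{H \in \mcb{H}_{-j^*d^*}} \sum_{M \in \mcb{M}_{-j^*}} c_{H, MW_{1j^*}}^2 \;\gtrsim\; \rho^{d^*}\|Q_1\|_2^2$.

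Finally, I would translate this estimate into the $c_{i,j^*,d^*}$ notation of Section \ref{sec:hybrid}. Switching back from the $\{W_{\ell j^*}\}$-basis to the $\{Y_{ij^*}\}$-basis in the $j^*$-th block is an orthonormal change of variables with $W_{1j^*} = (1/k_{j^*})\sum_{i \in [k_{j^*}]} Y_{ij^*}$, hence the $W_{1j^*}$-linear mass lower-bounds $\sum_{i \in [k_{j^*}]} c_{i,j^*,d^*}^2$ up to a factor of $k_{j^*}$ (contributions from $W_{\ell j^*}$ with $\ell > 1$ only enlarge the $Y$-side squared mass and may be dropped in a lower bound). For $i \in \mc{I}_{j^*}$, by contrast, every contribution to $c_{i,j^*,d^*}^2$ must come either from $Q_0$ (bounded by $O(\eta^2)\|Q_1\|_2^2$ via \eqref{eqn:relative} and \eqref{eqn:relative3}) or from $U_1 Q_1^{(d^*+1)}$, because the $Y_{ij^*}$ factor with $i$ noisy enters a Hermite monomial whose degree is forced to be at least $d^*+1$; this second piece is bounded by $\tfrac14\rho^{d^*+1}\|Q_1\|_2^2$ using the maximality of $d^*$. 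Choosing $\rho$ a sufficiently small multiple of $\eps^4/k$, and $\eta$ small enough in terms of $\rho$ (both compatible with the parameter values already declared in Section \ref{sec-basic-pcp-test}), then makes \eqref{eqn-main-nice-I-1} hold, while \eqref{eqn-main-nice-I-3} is immediate from $\|Q_1\|_2 > 0$. The main obstacle I anticipate is precisely this last step: bookkeeping several simultaneous basis transformations (Hermite $\leftrightarrow$ monomial on one side, $W \leftrightarrow Y$ on the other) accurately enough to separate genuine linear-in-$Y_{ij^*}$ contributions at Hermite-degree $d^*$ from the noisy-index leakage, and to invoke the maximality of $d^*$ to kill the latter.
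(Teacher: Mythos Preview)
Your proposal is correct and follows essentially the same approach as the paper: the decomposition $P = P_{\tn{omit}} + Q_0 + U_1Q_1$, the Carbery--Wright/Chebyshev relative-mass bound, the geometric selection of $d^*$, the application of Lemma~\ref{lem-approx-structural-intro} to locate $j^*$, and the final basis translation with the observation that noisy $Y_{ij^*}$ factors push Hermite degree up to $d^*+1$ are exactly the steps in Sections~\ref{sec:anticonc}--\ref{sec:findingj}. The only cosmetic difference is that the paper carries out the $d^*$ selection and the structural-lemma step in the monomial-based norm $\|\cdot\|_{\mcb{B}}$ (Definition~\ref{defn:basisA}) rather than in $\|\cdot\|_2$, converting between them via Lemma~\ref{lem:coeff_bounds}; you instead stay in $\|\cdot\|_2$ and defer the conversion, which is equivalent since the two norms agree up to $(O(dT))^{O(d)}$ factors that are absorbed into $\rho$.
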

The proof of the above lemma is given in Section \ref{sec:findingj}
building upon analysis in Section \ref{sec:anticonc}. Both Sections
\ref{sec:anticonc} and \ref{sec:findingj} assume a setting of nice $\mc{I}$.

Using \eqref{eqn-nice-I} and a simple averaging, 
the above lemma implies that there exists $(j^*, d^*)$ such that:
\begin{equation}
\Pr_{\mc{I}}\left[(j^*,d^*) \tn{ is distinguished for }\mc{I}\right] \geq
\frac{\xi}{4Td}.
\label{eqn-distinguished-I}
\end{equation}

\subsection{Implications of the Structural Lemma}
We now fix $(j^*, d^*)$ satisfying
\eqref{eqn-distinguished-I}. Let us consider the random choice of $\mc{I}$ as
first picking $\mc{I}_{-j^*} := \mc{I}\cap([k]\times ([T]\setminus
\{j^*\}))$, and then picking $\mc{I}_{j^*} := \mc{I}\cap([k]\times
\{j^*\})$. Note that the choice of $\mc{I}_{j^*}$ is independent of
$\mc{I}_{-j^*}$.
Call a choice of $\mc{I}_{-j^*}$ as \emph{shared-heavy} if,
\begin{equation}
\Pr_{\mc{I}_{j^*}}\left[(j^*,d^*) \tn{ is distinguished for
}\mc{I}_{j^*}\cup\mc{I}_{-j^*}\right] \geq \frac{\xi}{8Td}. 
\label{eqn-shared-heavy-def}
\end{equation}
From \eqref{eqn-distinguished-I} and an averaging argument we
have:
\begin{equation}
\Pr_{\mc{I}_{-j^*}}\left[\mc{I}_{-j^*} \tn{ is shared-heavy}\right]
\geq \frac{\xi}{8Td}.
\label{eqn-shared-heavy-prob}
\end{equation}

Let us fix a shared-heavy $\mc{I}_{-j^*}$. Note that with this fixing,
the bases given in Section \ref{sec:hybrid} are well defined, and in
particular $P$ can be represented as in
\eqref{eqn-P-hybrid-rep}. Since there is at least one choice of
$\mc{I}_{j^*}$ such that $(j^*, d^*)$ is distinguished for
$\mc{I}_{j^*}\cup \mc{I}_{-j^*}$, using \eqref{eqn-main-nice-I-3} this implies
\begin{equation}
\sum_{i \in
[k]} c_{i,j^*,d^*}^2 > 0. \label{eqn-main-shared-heavy-I-3}
\end{equation}
Further we have the following lemma. (This is where we are finally randomizing over $\mc{I}_{j^*}$.)
\begin{lemma}\label{lem:Azuma-apply}
There exists $i^* \in [k]$ such that,
$$c_{i^*,j^*,d^*}^2 \geq \nu^2 
\left(\sum_{i \in [k]} c_{i,j^*,d^*}^2\right),$$
for $\nu = \eps^2/2$. 
\end{lemma}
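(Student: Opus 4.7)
The proof plan is a probabilistic concentration argument over the random choice of $\mc{I}_{j^*}$, invoking the shared-heaviness of the fixed $\mc{I}_{-j^*}$. Suppose for contradiction that $c_{i,j^*,d^*}^2 < \nu^2 C$ for every $i \in [k]$, where $C := \sum_{i \in [k]} c_{i,j^*,d^*}^2$ and $\nu = \eps^2/2$. First, observe that $C > 0$: by shared-heaviness \eqref{eqn-shared-heavy-def}, at least one setting of $\mc{I}_{j^*}$ makes $(j^*,d^*)$ distinguished, and \eqref{eqn-main-shared-heavy-I-3} then forces $C \geq \sum_{(i,j^*)\notin\mc{I}} c_{i,j^*,d^*}^2 > 0$. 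Crucially, the coefficients $\{c_{i,j^*,d^*}\}_{i \in [k]}$ are functions only of $P$ and $\mc{I}_{-j^*}$ (they are read off the representation \eqref{eqn-P-hybrid-rep} which depends on the bases fixed by $\mc{I}_{-j^*}$), so they are determined before $\mc{I}_{j^*}$ is sampled.

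Next, consider the random variable $X := \sum_{i \in [k]} c_{i,j^*,d^*}^2 \cdot \Ind[(i,j^*) \in \mc{I}_{j^*}]$. Since each $(i,j^*)$ is placed in $\mc{I}_{j^*}$ independently with probability $\eps$, $X$ is a sum of independent random variables with each summand bounded in $[0, c_{i,j^*,d^*}^2]$ and $\E[X] = \eps C$. Unwinding \eqref{eqn-main-nice-I-1}, whenever $(j^*,d^*)$ is distinguished we must have $X \leq (\eps^4/4)(C - X)$, which yields $X < \eps^4 C/4 \ll \eps C/2$ for the tiny $\eps$ of the reduction. Applying Theorem \ref{thm:chernoff} with $\Delta_i = c_{i,j^*,d^*}^2$, together with the trivial bound $\sum_i c_{i,j^*,d^*}^4 \leq (\max_i c_{i,j^*,d^*}^2) \cdot C < \nu^2 C^2$ supplied by the contradiction hypothesis, one obtains
$$\Pr_{\mc{I}_{j^*}}\!\left[X \leq \tfrac{\eps C}{2}\right] \,\leq\, 2\exp\!\left(-\frac{\eps^2 C^2/2}{\sum_i c_{i,j^*,d^*}^4}\right) \,\leq\, 2\exp(-2/\eps^2).$$

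Since being distinguished forces $X \leq \eps C/2$, the shared-heaviness bound \eqref{eqn-shared-heavy-def} gives
$$\frac{\xi}{8Td} \,\leq\, \Pr_{\mc{I}_{j^*}}\!\left[(j^*,d^*)\text{ distinguished}\right] \,\leq\, 2\exp(-2/\eps^2),$$
which is grossly violated by the parameter setting $\eps = \xi/(32Td)$ from Section \ref{sec-basic-pcp-test} (the left-hand side decays only polynomially in $Td/\xi$, while $2/\eps^2$ grows quadratically). This contradiction forces some $i^* \in [k]$ with $c_{i^*,j^*,d^*}^2 \geq \nu^2 C$, as desired. There is no real analytic obstacle here; the only delicate point is parameter book-keeping, namely confirming that the noise probability $\eps$ was indeed chosen small enough relative to $\xi/(Td)$ for the Chernoff--Hoeffding bound to dominate the shared-heavy lower bound, which is precisely how $\eps$ is set.
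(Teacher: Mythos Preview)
Your proposal is correct and follows essentially the same approach as the paper's proof: assume for contradiction that no coefficient is $\nu^2$-heavy, apply the Chernoff--Hoeffding bound (Theorem~\ref{thm:chernoff}) to the random variable $\sum_{(i,j^*)\in\mc{I}_{j^*}} c_{i,j^*,d^*}^2$ using $\sum_i c_{i,j^*,d^*}^4 \le (\max_i c_{i,j^*,d^*}^2)\cdot C < \nu^2 C^2$, and contradict the shared-heaviness lower bound \eqref{eqn-shared-heavy-def} via the parameter choice $\eps = \xi/(32Td)$. Your explicit observation that the coefficients $c_{i,j^*,d^*}$ depend only on $\mc{I}_{-j^*}$ (and hence are fixed before $\mc{I}_{j^*}$ is sampled) is a worthwhile clarification that the paper leaves implicit.
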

\begin{proof}
Assume that there is no such $i^*$ as in the lemma.
Over the choice of ${\mc{I}_{j^*}}$, consider the random variable 
$\sum_{(i,j^*) \in \mc{I}_{j^*}} c_{i,j^*,d^*}^2$. The
contribution from each $i$ to this sum is independently $0$ with
probability $(1- \eps)$ and
$c_{i,j^*,d^*}^2$ with probability $\eps$.
Thus,
$$\E_{\mc{I}_{j^*}}\left[\sum_{(i,j^*) \in \mc{I}_{j^*}}
c_{i,j^*,d^*}^2\right] = \eps\left(\sum_{i \in [k]}
c_{i,j^*,d^*}^2\right).$$
Now,
{\allowdisplaybreaks
\begin{align*}
&\Pr\left[\sum_{(i,j^*) \in \mc{I}}
c_{i,j^*,d^*}^2 \leq (\eps/2)\left(\sum_{(i,j^*) \in
[k]\times\{j^*\}\setminus \mc{I}}
c_{i,j^*,d^*}^2\right) \right] \nonumber \\
&\leq \Pr\left[\sum_{(i,j^*) \in \mc{I}_{j^*}}
c_{i,j^*,d^*}^2 \leq (\eps/2)\left(\sum_{i \in [k]}
c_{i,j^*,d^*}^2\right) \right] \nonumber \\
&\leq  
\Pr\left[\left|\sum_{(i,j^*) \in \mc{I}_{j^*}}
c_{i,j^*,d^*}^2 - \E\left[\sum_{(i,j^*) \in \mc{I}_{j^*}}
c_{i,j^*,d^*}^2\right]\right| \geq (\eps/2)\left(\sum_{i \in [k]}
c_{i,j^*,d^*}^2\right) \right]\\
&\overset{1}{\leq} 2\cdot\tn{exp}\left(-\frac{2(\eps/2)^2\cdot \left(\sum_{i \in [k]}
c_{i,j^*,d^*}^2\right)^2}{\sum_{i\in[k]} c_{i,j^*,d^*}^4}\right)
\nonumber \\
&\leq  2\cdot \tn{exp}\left(-\frac{(\eps^2/2) \cdot \left(\sum_{i \in [k]}
c_{i,j^*,d^*}^2\right)^2}{\max_{i\in[k]} c_{i,j^*,d^*}^2
\sum_{i\in[k]} c_{i,j^*,d^*}^2}\right)
 \leq   2\cdot \tn{exp}\left(-\eps^2/2\nu^2\right) \leq  \eps,
\end{align*}}
for $\nu^2 = \eps^4/4 \leq \eps^2/(2\log(2/\eps))$. Here, step $1$ follows from the Chernoff-Hoeffding inequality (\cref{thm:chernoff}). Since our choice of $\eps <
\xi/(8Td)$, this yields a contradiction
to our choice of $\mc{I}_{-j^*}$,  \eqref{eqn-main-nice-I-1},
and  \eqref{eqn-shared-heavy-def}.

\end{proof}

\subsection{Decoding a Labeling for $\mc{L}$}

\begin{figure}[t]
\begin{mdframed}
\begin{center} \textbf{Randomized Partial Labeling $\sigma$} \end{center}  
\begin{enumerate*}
\item Choose $j^* \in [T]$ and $d^* \in \{0,\dots, d-1\}$
independently and u.a.r.
\item Choose $v_j \in V$ independently and u.a.r. for each $j \in
[T]\setminus\{j^*\}$.
\item Choose the random subset $\mc{I}_{-j^*}$ of $[k]\times ([T]\setminus
\{j^*\})$ by independently adding each element with probability $\eps$.   
\item For each $v \in V$, \\[-1.5em]
\begin{enumerate*}
\item Set $v_{j^*} = v$.
\item Letting $P$ be the restriction of $P_{\tn{global}}$ to 
${\bf Y} = \{Y_{ij}\,\mid\, i\in[k], j \in [T]\}$, define
the set:
\begin{eqnarray}
\Gamma_0(v) & := & \left\{i' \in [k]\,\mid\, c_{i',j^*,d^*}^2 >
\frac{\nu^2}{4}\left(\sum_{i \in [k]} c_{i,j^*,d^*}^2\right)\right\},
\label{eqn-Gamma-0} 
\end{eqnarray} 
where $\nu = \eps^2/4$ (as in Lemma \ref{lem:Azuma-apply}). 
\item If $\Gamma_0(v)$ is non-empty, assign $v$ a label chosen
uniformly at random from $\Gamma_0(v)$.
\end{enumerate*}
\end{enumerate*}
\end{mdframed}
\caption{Randomized Partial Labeling}
\label{fig:labeling}
\end{figure}
In Figure \ref{fig:labeling} we define a randomized (partial) labeling $\sigma$ 
for the vertices $V$ of $\mc{L}$.
To analyze $\sigma$, we first define the
following random subsets of vertices and edges, where the randomness
is over the choices made in the above procedure of labeling.

\medskip
\noindent
{\bf Vertex subset $V_0 \subseteq V$:} Consists of all $v \in V$ such
that:
\begin{itemize}
\item Setting $v_{j^*} = v$, the choice of $\{v_j \,\mid\, j \in
[T]\}$ is good,
\item The choice of $(j^*, d^*)$ satisfied
\eqref{eqn-distinguished-I} and,
\item The choice of $\mc{I}_{-j^*}$ is shared-heavy.
\end{itemize}
Over the randomness of the labeling procedure and a random choice of
$v$, the above happens with probability at least:
\begin{equation}\label{eqn-Delta0}
\Delta_0 := \xi\cdot\frac{1}{Td}\cdot\frac{\xi}{8Td}.
\end{equation}
Thus, $$\E\left[|V_0|\right] \geq \Delta_0|V|.$$ 
Moreover, by the weak
expansion property in Theorem \ref{thm-LC-hardness},
\begin{equation}\label{eqn-E0}
\E\left[|E(V_0)|\right] \geq 
\E\left[\left(\slfrac{|V_0|}{|V|}\right)^2\right]\cdot (|E|/2)
 \geq   \left(\E\left[\slfrac{|V_0|}{|V|}\right]\right)^2\cdot
(|E|/2) 
 \geq  \left(\Delta_0^2/2\right)|E|.
\end{equation}

\medskip
\noindent
{\bf Edge Set $E' \subseteq E(V_0)$:} 
Let us first define for each $v \in V$
\begin{eqnarray}
\Gamma_1(v) & := & \left\{i' \in [k]\,\mid\, c_{i',j^*,d^*}^2 >
\frac{\nu^2}{100\cdot 4^{2R}}\left(\sum_{i \in [k]}
c_{i,j^*,d^*}^2\right)\right\},
\label{eqn-Gamma-1}
\end{eqnarray}
when $v_{j^*}$ is set to $v$ in Step 4a of Figure \ref{fig:labeling}.
Here,  $R$ is the parameter (to be set) from Theorem
\ref{thm-LC-hardness}.
From \eqref{eqn-Gamma-0} and \eqref{eqn-Gamma-1}, we have $\Gamma_0(v) \subseteq \Gamma_1(v)$ along with
\begin{equation}\label{eqn-Gamma-bd}
\left|\Gamma_0(v)\right| \leq 4/\nu^2, \ \ \ \tn{ and } \ \ \ \left|\Gamma_1(v)\right| \leq (100\cdot4^{2R})/\nu^2.
\end{equation}
The set $E'$ is defined as:
\begin{equation}\label{eqn-edgesEprime}
E' := \big\{e = (u,w) \in E(V_0)\,\mid\, \left|\pi_{e,u}(\Gamma_1(u))\right| = \left|\Gamma_1(u)\right|\tn{ and } 
\left|\pi_{e,w}(\Gamma_1(w))\right| = \left|\Gamma_1(w)\right|
\big\}.
\end{equation}
Since the graph $G$ of the instance $\mc{L}$ is regular, using second bound in \eqref{eqn-Gamma-bd} along with 
the smoothness property  of Theorem \ref{thm-LC-hardness}, the fraction of edges $e = (u,w) \in E$ that do not 
satisfy $$\left(\left|\pi_{e,u}(\Gamma_1(u))\right| = \left|\Gamma_1(u)\right|\tn{ and } 
\left|\pi_{e,w}(\Gamma_1(w))\right| = \left|\Gamma_1(w)\right|\right)$$ 
is at most,
$$\Delta_1 := \left(\frac{10^4\cdot 4^{4R}}{\nu^4J}\right).$$
Thus,
\begin{equation}\label{eqn-Eprimebd}
\E\left[\left|E'\right|\right] \geq \left(\Delta_0^2/2 - \Delta_1\right)\left|E\right|.
\end{equation}
The following lemma gives the desired property of edges in $E'$.
\begin{lemma}\label{lem-intersect-proj}
For every edge $e = (u,w) \in E'$, 
\begin{equation}\label{eqn-lem-intesect-proj}
\pi_{e,u}\left(\Gamma_0(u)\right)\cap \pi_{e,w}\left(\Gamma_0(w)\right) \neq \emptyset.
\end{equation}
\end{lemma}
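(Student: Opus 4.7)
I will argue by contradiction: assume $\pi_{e,u}(\Gamma_0(u)) \cap \pi_{e,w}(\Gamma_0(w)) = \emptyset$ for some edge $e = (u,w) \in E'$, and derive a violation of the folding identity supplied by Lemma~\ref{lem-folding-poly}. The key tool will be Remark~\ref{rem-folding-poly} applied with the basis $\mcb{B}_{-j^*d^*}$: since the fixed vertices $\{v_j\}_{j \neq j^*}$ used in the labeling procedure can (with overwhelming probability, by union bounding the random choices) be taken distinct from $\{u,w\}$, the elements of $\mcb{B}_{-j^*d^*}$ involve no $Y^u$ or $Y^w$ variable. Hence for every label $\ell \in [L]$ and every $B \in \mcb{B}_{-j^*d^*}$, writing $c^v_{(i,j^*),B}$ for the coefficient of $B \cdot Y^v_i$ in $P_{\tn{global}}$,
\[
\sum_{i \in \pi_{e,u}^{-1}(\ell)} c^u_{(i,j^*),B} \;=\; \sum_{i' \in \pi_{e,w}^{-1}(\ell)} c^w_{(i',j^*),B}.
\]

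Next I apply Lemma~\ref{lem:Azuma-apply} with $v_{j^*} = u$ (legal because $u \in V_0$) to pick $i^* \in [k]$ with $c^u_{i^*,j^*,d^*}{}^2 \geq \nu^2\,C_u$ for $\nu = \eps^2/2$, where $C_u := \sum_i c^u_{i,j^*,d^*}{}^2$. Since $\nu^2$ dominates the weaker $\Gamma_0$-threshold $\nu_0^2/4$ with $\nu_0 = \eps^2/4$, we have $i^* \in \Gamma_0(u) \subseteq \Gamma_1(u)$. Set $\ell^* := \pi_{e,u}(i^*)$. The contradiction hypothesis forces every $i' \in \pi_{e,w}^{-1}(\ell^*)$ to lie outside $\Gamma_0(w)$, so $c^w_{i',j^*,d^*}{}^2 \leq (\nu_0^2/4)\,C_w$ for each such $i'$. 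Since $e \in E'$, $\pi_{e,u}$ is injective on $\Gamma_1(u)$ and $\pi_{e,w}$ is injective on $\Gamma_1(w)$: every $i \in \pi_{e,u}^{-1}(\ell^*)\setminus\{i^*\}$ lies outside $\Gamma_1(u)$ and hence $c^u_{i,j^*,d^*}{}^2 \leq (\nu_0^2/(100\cdot 4^{2R}))\,C_u$, and at most one $i' \in \pi_{e,w}^{-1}(\ell^*)$ lies in $\Gamma_1(w)$ (with the remaining ones satisfying the same tight bound relative to $C_w$).

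Encoding the coefficients as vectors $\vec{a}_i := (c^u_{(i,j^*),B})_B$ and $\vec{b}_{i'} := (c^w_{(i',j^*),B})_B$ in $\R^{|\mcb{B}_{-j^*d^*}|}$, folding rewrites as $\vec{a}_{i^*} = \sum_{i' \in \pi_{e,w}^{-1}(\ell^*)} \vec{b}_{i'} - \sum_{i \in \pi_{e,u}^{-1}(\ell^*)\setminus\{i^*\}} \vec{a}_i$. Triangle inequality combined with Cauchy--Schwarz over the at most $t_{\mc{L}} = 4^R$ summands on each side — together with the specific choice $\mu_1 = \nu_0^2/(100\cdot 4^{2R})$, which was engineered precisely to absorb the $t_{\mc{L}}^2$ loss — shows that the ``error'' contribution from $\pi_{e,u}^{-1}(\ell^*)\setminus\{i^*\}$ has squared norm only $O(\nu_0^2 C_u /100)$, negligible compared to $\|\vec{a}_{i^*}\|^2 \geq \nu^2 C_u$. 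Hence $\|\sum_{i'}\vec{b}_{i'}\|^2 = \Omega(\nu^2 C_u)$, and a final Cauchy--Schwarz extracts some $i' \in \pi_{e,w}^{-1}(\ell^*)$ carrying $\|\vec{b}_{i'}\|^2 = \Omega(\nu^2 C_u / 4^{2R})$.

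The main obstacle and concluding step is promoting this \emph{absolute} heaviness into the \emph{relative} heaviness $\|\vec{b}_{i'}\|^2 > (\nu_0^2/4)\,C_w$ needed to place $i'$ in $\Gamma_0(w)$, since $C_w$ may differ from $C_u$. I plan to close this gap by a dichotomy on $C_u / C_w$: either the ratio is large enough for the bound above to already exceed $(\nu_0^2/4)\,C_w$ (directly contradicting disjointness), or else one runs the fully symmetric argument starting from a heavy index on the $w$-side provided by Lemma~\ref{lem:Azuma-apply} with $v_{j^*} = w$, whose conclusion combines with the first inequality to force a relation of the form $4^{\Omega(R)} \leq \text{poly}(1/\eps)$, violated once $R$ in Theorem~\ref{thm-LC-hardness} is taken large enough. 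The conceptual ingredients (folding, heavy-coordinate extraction, $\Gamma_1$-injectivity) are routine; the real work is the meticulous tracking of the constants $\nu, \nu_0, \mu_0, \mu_1, t_{\mc{L}}$ through both Cauchy--Schwarz steps so that composing the two sides genuinely contradicts a sufficiently large $R$.
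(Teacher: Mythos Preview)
Your overall strategy---assume disjointness, invoke folding over $\mcb{B}_{-j^*d^*}$, pick a heavy index $i^*\in\Gamma_0(u)$ via Lemma~\ref{lem:Azuma-apply}, and exploit $\Gamma_1$-injectivity from $e\in E'$---matches the paper exactly. The gap is in your final step. Your ``final Cauchy--Schwarz'' extracts an $i'\in\pi_{e,w}^{-1}(\ell^*)$ with $\|\vec b_{i'}\|^2=\Omega(\nu^2 C_u/4^{2R})$, and the proposed dichotomy on $C_u/C_w$ does not close: if the ratio is moderate you fall into the ``else'' branch, but the symmetric argument from the $w$-side only yields $C_w/C_u\le O(4^{2R})$, which together with the Case~1 failure $C_u/C_w\le O(4^{2R})$ gives no contradiction at all---certainly not one of the form $4^{\Omega(R)}\le\mathrm{poly}(1/\eps)$. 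The $4^{2R}$ loss from Cauchy--Schwarz on the $w$-side is fatal and cannot be undone by running the argument twice.

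The paper sidesteps this entirely by a cleaner move you are missing: it never tries to \emph{extract} a heavy coordinate on the $w$-side. Instead it assumes without loss of generality that $C_u\ge C_w$, and then simply \emph{upper bounds} $\bigl\|\sum_{i'\in\pi_{e,w}^{-1}(\ell^*)}\vec b_{i'}\bigr\|$ by triangle inequality, using that every such $i'$ lies outside $\Gamma_0(w)$ (so $\|\vec b_{i'}\|\le(\nu/2)\sqrt{C_w}$) and all but one lie outside $\Gamma_1(w)$ (so those contribute at most $(\nu/10)\sqrt{C_w}$ in total). This yields $\|\vec a_{i^*}\|\le (\nu/10)\sqrt{C_u}+(\nu/2+\nu/10)\sqrt{C_w}\le(7\nu/10)\sqrt{C_u}$, directly contradicting $\|\vec a_{i^*}\|\ge\nu\sqrt{C_u}$. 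No dichotomy, no dependence on $R$, and no Cauchy--Schwarz loss. Replacing your extraction-plus-dichotomy with this WLOG-plus-upper-bound fixes the proof.
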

\begin{proof}
Suppose for a contradiction that \eqref{eqn-lem-intesect-proj} does not hold for an edge $e = (u,w) \in E'$, i.e.
\begin{equation}\label{eqn-supposition}
\pi_{e,u}\left(\Gamma_0(u)\right)\cap \pi_{e,w}\left(\Gamma_0(w)\right) = \emptyset.
\end{equation}
Let us now define for $v \in\{u, w\}$, and $i \in [k]$, vector ${\bf C}_{v, i} \in \mathbb{R}^{\mcb{B}_{j^*d^*}}$ where for
any $B \in \mcb{B}_{j^*d^*}$
\begin{equation}
{\bf C}_{v, i}(B) = c_{(i,j^*),B} \ \ \ \tn{ when } v_{j^*}\tn{ is set to }v.
\end{equation}
Without loss of generality, we may assume that 
\begin{equation}\label{eqn-WLOG-umore}
\sum_{i\in [k]} \left\|{\bf C}_{u, i}\right\|_2^2 \ \geq \ \sum_{i\in [k]} \left\|{\bf C}_{w, i}\right\|_2^2.
\end{equation}
Since both $u \in V_0$, \eqref{eqn-main-shared-heavy-I-3} and Lemma \ref{lem:Azuma-apply} imply that there exists $i_u \in [k]$ such that
\begin{equation}\label{eqn-u-distinguished}
\left\|{\bf C}_{u, i_u}\right\|_2 \geq \nu\left(\sum_{i\in [k]} \left\|{\bf C}_{u, i}\right\|_2^2\right)^{\frac{1}{2}} > 0.
\end{equation}
This implies that $i_u \in \Gamma_0(u)$.
Now, let $\ell^* := \pi_{e,u}(i_u)$. Since $P$ is a restriction of $P_{\tn{global}}$ which is a representation of the folded polynomial $\ol{P}_{\tn{global}}$, Lemma \ref{lem-folding-poly}
along with Remark \ref{rem-folding-poly} (applied to elements $B$ of $\mcb{B}_{-j^*d^*}$) implies
\begin{equation}\label{eqn-apply-folding}
\sum_{i \in \pi_{e,u}^{-1}(\ell^*)}{\bf C}_{u, i} = \sum_{i \in \pi_{e,w}^{-1}(\ell^*)}{\bf C}_{w, i}.
\end{equation}
On the other hand, since $e \in E'$, \eqref{eqn-edgesEprime} along with our supposition \eqref{eqn-supposition} and the construction of $\{\Gamma_r(v)\,\mid\, r\in\{0,1\},\ v\in\{u,w\}\}$  
implies that
\begin{itemize}
\item For all $i \in \pi_{e,u}^{-1}(\ell^*)\setminus\{i_u\}$
\begin{equation}\label{eqn-projbd-1}
\left\|{\bf C}_{u, i}\right\|_2 \leq \frac{\nu}{10\cdot 4^R} \left(\sum_{i\in [k]} \left\|{\bf C}_{u, i}\right\|_2^2\right)^{\frac{1}{2}}.
\end{equation}
\item For all $i \in \pi_{e,w}^{-1}(\ell^*)$
\begin{equation}\label{eqn-projbd-2}
\left\|{\bf C}_{w, i}\right\|_2 \leq \frac{\nu}{2} \left(\sum_{i\in [k]} \left\|{\bf C}_{w, i}\right\|_2^2\right)^{\frac{1}{2}}.
\end{equation}
\item There exists at most one $i' \in [k]$ such that,
\begin{equation}\label{eqn-projbd-2}
\left\|{\bf C}_{w, i}\right\|_2 > \frac{\nu}{10\cdot 4^R} \left(\sum_{i\in [k]} \left\|{\bf C}_{w, i}\right\|_2^2\right)^{\frac{1}{2}}.
\end{equation}
\end{itemize}
The above implications along with \eqref{eqn-apply-folding} and \eqref{eqn-WLOG-umore} yields
\begin{eqnarray}\label{eqn-mass-contra}
\left\|{\bf C}_{u, i_u}\right\|_2 & \leq &  \sum_{\substack{i \in \pi_{e,u}^{-1}(\ell^*)\\ i \neq i_u}} \left\|{\bf C}_{u, i}\right\|_2 + \sum_{i \in \pi_{e,w}^{-1}(\ell^*)}\left\|{\bf C}_{w, i}\right\|_2 \nonumber \\
& \leq & \frac{\nu \left|\pi_{e,u}^{-1}(\ell^*)\right|}{10\cdot 4^R} \left(\sum_{i\in [k]} \left\|{\bf C}_{u, i}\right\|_2^2\right)^{\frac{1}{2}} + 
 \left(\frac{\nu}{2} + \frac{\nu \left|\pi_{e,w}^{-1}(\ell^*)\right|}{10\cdot 4^R}\right) \left(\sum_{i\in [k]} \left\|{\bf C}_{w, i}\right\|_2^2\right)^{\frac{1}{2}} \nonumber \\
& \leq & \frac{\nu}{10}\left(\sum_{i\in [k]} \left\|{\bf C}_{u, i}\right\|_2^2\right)^{\frac{1}{2}} + \left(\frac{\nu}{2} + \frac{\nu}{10}\right) \left(\sum_{i\in [k]} \left\|{\bf C}_{w, i}\right\|_2^2\right)^{\frac{1}{2}} \nonumber \\
& \leq & \frac{7\nu}{10}\left(\sum_{i\in [k]} \left\|{\bf C}_{u, i}\right\|_2^2\right)^{\frac{1}{2}},
\end{eqnarray}
where we used the property (from Theorem \ref{thm-LC-hardness}) that $\left|\pi_{e,u}^{-1}(\ell^*)\right|, \left|\pi_{e,w}^{-1}(\ell^*)\right| \leq 4^R$. Clearly, \eqref{eqn-mass-contra}
 is a contradiction to \eqref{eqn-u-distinguished} which completes the proof of the lemma.
\end{proof}
Note that the set $E'$ is determined by Step 3 of the randomized labeling procedure. Lemma \ref{lem-intersect-proj} implies that in the subsequent steps of the procedure, each edge $e = (u,w) \in E'$ is satisfied with probability at least
$$\frac{1}{\left|\Gamma_0(u)\right|\left|\Gamma_0(w)\right|} \geq \frac{\nu^4}{16},$$
using the first bound in \eqref{eqn-Gamma-bd}. The above along with \eqref{eqn-Eprimebd} lower bounds the expected fraction of edges $\sigma$ satisfies by
$$\Delta_2 := \left(\Delta_0^2/2 - \Delta_1\right)\left(\frac{\nu^4}{16}\right).$$
Choosing $R$ to be large enough and $J \gg 4^{4R}$ we can ensure that $\Delta_2 > 2^{-c_0R}$ which yields a contradiction to the soundness of Theorem \ref{thm-LC-hardness}, completing the NO case analysis.

\subsection{Loose Ends}\label{sec:ends}

\paragraph{Discretization of the Basic PCP Test Distribution.}
Let $\mc{H}_N$ be the distribution of $\left(\sum_{i=1}^N B_i\right)/\sqrt{N}$
where each $B_i$ is an independent $\{-1,1\}$-valued balanced
Bernoulli random variable. The following theorem was proved in \cite{DOSW11}.

\begin{theorem} Fix any constant $D \geq 1$, and let $f(x_1,\dots,
x_m)$ be any degree-$D$ polynomial over $\mathbb{R}^m$. Let $(y, z)
\in \mathbb{R}^m\times \mathbb{R}^m$ be generated by sampling each
$(y_i,z_i)$ from $(N(0,1), \mc{H}_N)$ where $N = m^{24D^2}$. Then,
$$\Pr\left[\tn{sign}(f(y)) \neq \tn{sign}(f(z))\right] \leq O(1/m).$$
\end{theorem}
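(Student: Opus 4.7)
The plan is to couple $(y_i, z_i)$ coordinate-wise via the quantile (Skorokhod) coupling $z_i = F_N^{-1}(\Phi(y_i))$ (where $F_N$ is the CDF of $\mc{H}_N$ and $\Phi$ the standard normal CDF), and then combine three ingredients: a Berry--Esseen bound to show $y$ and $z$ are pointwise close on a high-probability event, Carbery--Wright anti-concentration to show $|f(y)|$ is not too small, and a polynomial Lipschitz estimate to show $|f(y)-f(z)|$ is much smaller than $|f(y)|$.

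\textbf{Step 1: coupling.} Normalize so that $\|f\|_2 = 1$ when the input is $N(0,1)^m$. Let $L = C\sqrt{\log m}$ so that $\Pr[\max_i |y_i| > L] \le 1/m^{10}$ by a Gaussian tail union bound. The Berry--Esseen theorem gives $\|F_N - \Phi\|_\infty \le C'/\sqrt N$, and since the Gaussian density at $y_i$ is at least $e^{-L^2/2}$ whenever $|y_i|\le L$, the mean value theorem applied to $\Phi^{-1}$ yields $|y_i - z_i| \le C'' e^{L^2/2}/\sqrt N$ on this event. With $N = m^{24 D^2}$ and an appropriate choice of constants, this gives $\|y-z\|_\infty \le m^{-10 D^2}$ and thus $\|y - z\|_2 \le m^{-5D^2}$ on an event of probability at least $1 - 1/m^{9}$.

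\textbf{Step 2: anti-concentration and polynomial smoothness.} By Carbery--Wright (\cref{thm:carbery-wright-prelim}), $\Pr[|f(y)| \le \tau] \le O(D \tau^{1/D})$; choose $\tau = m^{-3D}$ so this is $O(m^{-2})$. For the difference $|f(y) - f(z)|$, I would use Taylor's theorem along the segment from $y$ to $z$, bounding $|f(y)-f(z)| \le D\,\|y-z\|_2 \,\sup_{x \in [y,z]} \|\nabla f(x)\|_2$. On the (high-probability) event that $\|y\|_2, \|z\|_2 \le \sqrt{m}\cdot L$, standard multivariate Markov--Bernstein-type inequalities together with a bound relating the $L^2(\gamma)$ norm and the sup-norm on a ball of radius $R$ (namely $\|f\|_{\infty,B(0,R)} \le (CR)^D \|f\|_2 \cdot \text{poly}(m,D)$, obtainable by expanding $f$ in the Hermite basis and using $\|\hat f\|_2 = 1$ together with $\|H_\alpha\|_{\infty, B(0,R)}\le (CR)^{|\alpha|}$) give $\sup_{x\in B(0, \sqrt m L)}\|\nabla f(x)\|_2 \le m^{O(D)}$. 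Hence on the good coupling event, $|f(y)-f(z)| \le m^{O(D)} \cdot m^{-5D^2} \ll \tau$ for all sufficiently large $m$.

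\textbf{Step 3: combining.} On the intersection of (i) the tail-control event $\{\max_i |y_i|\le L\}$, (ii) the coupling event $\{\|y-z\|_\infty \le m^{-10D^2}\}$, and (iii) the anti-concentration event $\{|f(y)| \ge \tau\}$, we have $|f(z)| \ge \tau - |f(z)-f(y)| > 0$ and $\mathrm{sign}(f(z)) = \mathrm{sign}(f(y))$. The complement has probability $O(m^{-2}) + O(m^{-9}) = O(1/m)$ (in fact $O(1/m^2)$, but the loose bound $O(1/m)$ suffices).

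\textbf{Main obstacle.} The delicate part is step 2: relating the pointwise smoothness of the degree-$D$ polynomial $f$ on balls of radius $\sqrt{m\log m}$ to its $L^2(\gamma)$ norm with polynomial (not exponential in $m$) loss. This is exactly the place where the precise exponent $24 D^2$ in $N = m^{24D^2}$ is calibrated: the $\|y-z\|$-gain must beat the $m^{O(D)}$ Bernstein/Markov factor and still leave enough slack to dominate the Carbery--Wright threshold $\tau = m^{-\Theta(D)}$. Once the polynomial inequalities are pinned down, steps 1 and 3 are essentially union bounds and arithmetic.
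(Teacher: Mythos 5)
You should note first that the paper does not actually prove this statement: it is quoted verbatim from \cite{DOSW11} (see Section~\ref{sec:ends}), so there is no in-paper proof to compare against. That said, your argument is essentially the original one of Diakonikolas et al.: they too couple each Gaussian coordinate to the normalized Bernoulli sum so that the two differ by at most $\mathrm{poly}(m)/\sqrt{N}$ outside a low-probability event, bound $|f(y)-f(z)|$ by a crude polynomial-growth estimate that loses only $m^{O(D)}$ (they use a monomial-coefficient versus $\ell_2$-norm bound, reproduced in this paper as Fact~\ref{fact:hermite}, where you use a Hermite sup-norm bound on a ball), and invoke Carbery--Wright (\cref{thm:carbery-wright-prelim}) to rule out $|f(y)|$ falling below the threshold $\tau$; the exponent $24D^2$ in $N=m^{24D^2}$ exists precisely so that the coupling error dominates these polynomial losses, as you observe. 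The only blemish is constant bookkeeping in Step~1: for $D=1$ you cannot simultaneously have $\Pr[\max_i |y_i|>L]\le m^{-10}$ and $\|y-z\|_\infty\le m^{-10D^2}$, since the former forces $e^{L^2/2}\ge m^{11}$ while $\sqrt{N}=m^{12}$, leaving a per-coordinate error of only $m^{-1}$; rebalancing the exponents as functions of $D$ (e.g.\ taking $e^{L^2/2}=m^{7}$ for $D=1$, where $\sup\|\nabla f\|_2\le 1$) repairs this, and the final $O(1/m)$ bound is unaffected. One further small point worth making explicit: the theorem as stated does not specify the joint distribution of $(y_i,z_i)$, and your quantile coupling is the intended choice --- existence of such a coupling is exactly what the discretization argument in Section~\ref{sec:ends} needs.
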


In our Basic PCP Test distribution (for a fixed choice of the vertices of
the \LC instance) we have $m = \Theta(kT)$ Gaussian random
variables. Choosing $D = d$ and $N = m^{24D^2}$, we can completely
discretize the test distribution using $\tn{exp}((kT)^{O(d^2)})$ points. Note
that this also incorporates the possible $2^{O(kT)}$ choices of the
noise set $\mc{I}$. From the above theorem, this discretization
results in an at most $O(1/kT)$ loss in the acceptance probability of
the test. This discretization is done for all
possible choices by the test of the vertices of the instance. 

\paragraph{Ruling out functions of constantly many degree-$d$ PTFs.}
Analogous to the argument in \cite{KS11}, consider any function $\ol{h}$ of
$K$ degree-$d$ PTFs (over $\mc{F}$) that passes the Final PCP test with probability $1/2 +
\xi$. Let $h$ be the function $\ol{h}$ with the PTFs represented over $\mathbb{R}^{\mc{Y}}$. 
By averaging, $h$ flips its sign with respect to flipping $b$ for at
least $\xi$ fraction of the rest of the choices made by the Basic PCP Test.
Again by averaging, there must be a degree-$d$ PTF $\tn{sign}\left(P'_{\tn{global}}\right)$ satisfying the
same for at least $\xi/K$ fraction of the choices. The entire
analysis can then be repeated using $P'_{\tn{global}}$.

\section{Relative bounds for mass in $P$}\label{sec:anticonc}

\newcommand{\Wu}{{\widetilde{\bf U}}}
Let $\mbf{Z}$ denote the set of variables $\{Y_{ij}: j \in [T], k_j < i\leq k\}$. 
As shown in Section \ref{sec-NO-case}, the $\mbf{Z}$ variables are all i.i.d. $N(0,1)$ under the test distribution. 
We begin by expressing $P$ as 
\begin{eqnarray}
P\Big({\bf Z},\{U_i\}_{i \in [T]},\{W_{ij}\}_{i \ne 1}\Big) & =  & P_{\rm omit}\left({\bf Z},\{U_i\}_{i \in [T]},\{W_{ij}\}_{i \in [2,k_j], j\in [T]}\right) + 
Q_0({\bf Z},U_2,\ldots,U_T) \nonumber \\
& & + U_1Q_1({\bf Z},U_1,\ldots,U_T) \label{eqn-split-P-PrelQ0Q1} 
\end{eqnarray}
where $P_{\rm omit}$ consists of all the terms that contain some $\{W_{ij}\,\mid\, i \in [2,k_j], j\in [T]\}$ as a factor, and $Q_0$ is the part in the remaining polynomial independent of $U_1$.
From the nice setting of $\mc{I}$, we have that with probability at least $\xi/2$ over the rest of the choices of the verifier, $P$ flips its sign on flipping $b$. Since $P_{\rm omit}$ evaluates to zero under the test distribution and $Q_0$ is independent of $b\eta$ by construction, we obtain that $Q_1$ is not identically zero. For the time being, our analysis ignores $P_{\rm omit}$.
Extending Definitions \ref{def-hermite-basis-partial} and \ref{def-monomial-Ws-partial}, let $\mcb{H}$ be the Hermite basis over all the ${\bf Z}$ variables, and $\mcb{M}$ be the monomial basis over the variables $\{W_{1j} : j \in [T]\}$. Using these we define two norms to quantify the relevant mass of polynomials.
For convenience, let ${\bf U}$ denote the variables $U_1,\ldots,U_T$, $\Wu$ denote the set ${\bf U} \setminus \{U_1\}$, and ${\bf W}$ denote the set of variables $W_{11},\ldots,W_{1T}$.
\begin{definition}[$\|\cdot\|_2$-norm] 
	Given a polynomial $Q$ over the variables defined in the PCP test, define its $\|\cdot\|_2$-norm as 
	\begin{equation*}
	\|Q\|_2 = \sqrt{\E_{{\bf x} \sim \mc{D}_{\mc{I}}}\Big[|Q({\bf x})|^2\Big]}.
	\end{equation*}
\end{definition}

\begin{definition}[$\|\cdot\|_{\rm mon,1},\|\cdot\|_{\rm mon,2}$-norms]\label{def-monp-norm}
	Given a polynomial $Q({\bf W}) = \sum_{W_S \in \mcb{M}} c_SW_S$ represented in the monomial basis $\mcb{M} = \{W_S\}$, for any $p \ge 1$ define its $\|\cdot\|_{\rm mon,p}$-norm as  
\begin{equation*}
\|Q\|_{\rm mon,p} = \left(\sum_{W_S \in \mcb{M}} |c_S|^p\right)^{1/p}.
\end{equation*}
	In particular, $\|\cdot\|_{\rm mon,1}$ is the absolute sum of the coefficients, and $\|\cdot\|^2_{\rm mon,2}$ is the squared sum of the coefficients in $Q$,
\end{definition}

As pointed out above, $Q_1$ is not identically zero and therefore by definition it satisfies.
\begin{equation}			\label{eq:Q1-nonzero}
	\|Q_1\|_2 > 0
\end{equation}
Our goal in this section is to prove the following lemma lower bounding $\|Q_1\|_2$ relative to $\|Q_0\|_2$.
\begin{lemma}					\label{lem:Q1-lowerBound}
Using the definitions given above,
\begin{equation}
\|Q_0\|_2 \leq \left(\frac{8\eta\sqrt{T}}{(\xi/4d)^{d}\sqrt{\xi}}\right)  \|Q_1\|_2
\end{equation}
\end{lemma}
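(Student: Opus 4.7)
\textbf{Proof Plan for Lemma \ref{lem:Q1-lowerBound}.} The plan is to convert the sign-flipping guarantee from the niceness of $\mc{I}$ into a pointwise bound on $|Q_0|$ in terms of $Q_1$, then Chebyshev away large values of $Q_1$, and finally invoke Carbery--Wright anti-concentration on $Q_0$. Under $\mc{D}_{\mc{I}}$ the summand $P_{\rm omit}$ vanishes (every monomial in it carries a factor $W_{ij}$ with $i\ne 1$, and all such $W_{ij}$ are identically $0$ by Lemma \ref{lem:W-dist}), and $U_1 = b\eta\sqrt{T}$ by Lemma \ref{lem:U-dist}. Writing $Q_1^{\pm} := Q_1(\mbf{Z},\pm\eta\sqrt{T}, U_2,\ldots,U_T)$, the value of $P$ on the two settings of $b$ is
\[
P\big|_{b=+1} = Q_0 + \eta\sqrt{T}\,Q_1^{+}, \qquad P\big|_{b=-1} = Q_0 - \eta\sqrt{T}\,Q_1^{-}.
\]
Whenever these two have opposite signs, a case analysis forces $|Q_0| \le \eta\sqrt{T}\cdot \max(|Q_1^{+}|,|Q_1^{-}|)$. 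By the second defining property of a nice $\mc{I}$, this sign-flip event occurs with probability at least $\xi/2$ over the remaining randomness (i.e. over $\mbf{Z}$ and $U_2,\ldots,U_T$, which are i.i.d.\ $N(0,1)$ by Lemma \ref{lem:U-dist}).

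Next I would bound the typical size of $\max(|Q_1^{+}|,|Q_1^{-}|)$ using Chebyshev. Note that
\[
\E\bigl[\max(|Q_1^{+}|,|Q_1^{-}|)^{2}\bigr] \le \E[(Q_1^{+})^{2}] + \E[(Q_1^{-})^{2}] = 2\,\|Q_1\|_{2}^{2},
\]
since $\|Q_1\|_{2}^{2} = \tfrac{1}{2}\|Q_1^{+}\|_{2}^{2} + \tfrac{1}{2}\|Q_1^{-}\|_{2}^{2}$ by the distribution of $b$ under $\mc{D}_{\mc{I}}$. Setting $t := 2\sqrt{2}\,\|Q_1\|_{2}/\sqrt{\xi}$, Markov's inequality gives $\Pr[\max(|Q_1^{+}|,|Q_1^{-}|) > t] \le \xi/4$. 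Combining this with the sign-flip event from the previous step yields that, with probability at least $\xi/2 - \xi/4 = \xi/4$ over the Gaussian randomness,
\[
|Q_0| \le \eta\sqrt{T}\,\cdot\,\frac{2\sqrt{2}\,\|Q_1\|_{2}}{\sqrt{\xi}}.
\]

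Finally, since $Q_0$ is a degree-$\le d$ polynomial in the independent standard Gaussian variables $\mbf{Z}\cup\{U_2,\ldots,U_T\}$ (by construction $Q_0$ does not depend on $U_1$), \Cref{thm:carbery-wright-prelim} gives $\Pr[|Q_0|\le \eps_0 \|Q_0\|_{2}] = O(d\,\eps_0^{1/d})$. If $\|Q_0\|_{2}$ were larger than the target bound $\tfrac{8\eta\sqrt{T}}{(\xi/4d)^{d}\sqrt{\xi}}\,\|Q_1\|_{2}$, then the displayed bound on $|Q_0|$ would correspond to an $\eps_0 < (\xi/(4d))^{d}$ up to absorbed constants, and Carbery--Wright would force the above probability to be strictly less than $\xi/4$, contradicting the lower bound from the previous paragraph. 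The only subtlety is getting the constants in $(\xi/4d)^d$ to match cleanly; this is handled by absorbing the universal constant from Carbery--Wright into the factor of $8$ in the numerator. The main conceptual step is the passage from a sign-flip event to a pointwise relation between $Q_0$ and $Q_1$; everything afterwards is a routine combination of Chebyshev and Carbery--Wright.
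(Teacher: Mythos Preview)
Your proposal is correct and follows essentially the same approach as the paper: split $Q_1$ according to $b=\pm 1$, use $\|Q_1\|_2^2 = \tfrac12\|Q_1^+\|_2^2 + \tfrac12\|Q_1^-\|_2^2$ together with Chebyshev to control $|Q_1^\pm|$, apply Carbery--Wright to $Q_0$, and combine with the $\xi/2$ sign-flip guarantee to force the claimed inequality. The only cosmetic difference is the order of the steps (you first extract the pointwise implication $|Q_0|\le \eta\sqrt{T}\max(|Q_1^+|,|Q_1^-|)$ from the sign-flip and then apply the tail bounds, whereas the paper first establishes the high-probability event and then argues that the sign cannot flip there); the constants and the handling of the Carbery--Wright absolute constant are treated identically in both.
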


\begin{proof}
		From Lemma \ref{lem:U-dist}, we know that $U_1 = b\eta\sqrt{T}$ under the distribution $\mc{D}_{\mc{I}}$.  
		Since $Q_1$ is dependent on $U_1$, its distribution can be dependent on $b$. Let $Q^+_1 :=  Q_1|_{b=1}$, and  and $Q^{-}_1 := Q_1|_{b=-1}$. Thus,
		\begin{eqnarray}
		\|Q_1\|^2_2 = \E_{b,Z,{\bf U}}\big[|Q_1|^2\big] &=& \frac{1}{2}\E_{Z,{\bf U}}\Big[|Q_1|^2 \big| b= 1 \Big] + \frac{1}{2}\E_{Z,{\bf U}}\Big[|Q_1|^2 \big| b= -1 \Big] \nonumber \\  &=& \frac{1}{2}\|Q^+_1\|^2_2 + \frac{1}{2}\|Q^-_1\|^2_2. \label{eqn-Q-1-split}
		\end{eqnarray}
		Using the above along with Chebyshev's inequality (see Section \ref{sec-conc-anticonc}) we obtain for any $a > 0$
	\begin{eqnarray}
	\Pr_{\mbf{Z}, \Wu} \left[\left|Q^+_1\right|, \left|Q^-_1\right|  \leq  a\|Q_1\|_2 \right] 
	& \geq & 1 - \Pr\left[\left|Q^+_1\right| \geq  a\|Q_1\|_2 \right] - \Pr\left[\left|Q^-_1\right| \geq  a\|Q_1\|_2 \right] \nonumber \\
	& \geq & 1 - \left(\frac{\|Q^+_1\|^2_2 + \|Q^-_1\|^2_2}{a^2\|Q_1\|_2^2}\right) = 1 - 2/a^2, \label{eqn-both-Q1-small}
	\end{eqnarray}
where the last step follows from   \eqref{eqn-Q-1-split}.
On the other hand note that $Q_0$ is a polynomial over standard Gaussian variables and is independent of $b$. 
Applying the bound of Carbery-Wright (Theorem \ref{thm:carbery-wright-prelim}) 
we obtain the following.
	\begin{equation}
	\Pr\Big[|Q_0| \le (\xi/4d)^d\|Q_0\|_2\Big] \le \frac{\xi}{4} 
	\end{equation}
Setting $a = 4/\sqrt{\xi}$ in   \eqref{eqn-both-Q1-small} and using the above we obtain that with probability at least $1 - \xi/4 - \xi/8 = 1 - 3\xi/8$ over the choice of the variables
$\mbf{Z}$ and $U_2,\dots, U_T$
$$(\eta \sqrt{T})|Q^+_1|, \  (\eta \sqrt{T})|Q^-_1| \ \leq \ (4\eta\sqrt{T/\xi})\|Q_1\|_2, \ \ \ \ \tn{and,} \ \ \ \ |Q_0| \ > \ (\xi/4d)^d\|Q_0\|_2.$$
When $\eta \sqrt{T}(|Q^+_1| + |Q^-_1|) < |Q_0|$ then flipping $b$ does not change the sign of $P$.
Since the sign of $P$ must flip with $b$ with probability at least $\xi/2$ over the choice of $\mbf{Z}$ and $U_2,\dots, U_T$, the above is a contradiction unless,
$$ \|Q_0\|_2 \leq \left(\frac{8\eta\sqrt{T}}{(\xi/4d)^{d}\sqrt{\xi}}\right)  \|Q_1\|_2,$$
which completes the proof of the lemma.
\end{proof}

\section{Proof of Main Structural Lemma \ref{lem-main-nice-I}}\label{sec:findingj}

\newcommand{\Hq}{\widehat{Q}}
\newcommand{\Hp}{\widehat{P}}
\newcommand{\basisA}{{\mcb{B}}}
\newcommand{\basisB}{{\mcb{B}_{-j^*}}}

As in the previous section, we have ${\bf U}$ denote the variables $U_1,\ldots,U_T$, $\Wu$ denote the set ${\bf U} \setminus \{U_1\}$, and ${\bf W}$ denote the set of variables $W_{11},\ldots,W_{1T}$. Similarly, we use $\mbf{Y} = \{Y_{ij} : i \in [k],j \in[T]\}$ to denote the set of all the $Y$ variables. We use $\mbf{Z}$ to denote the set of variables $\{Y_{ij}: j \in [T], k_j < i\leq k\}$. The $\mbf{Z}$ variables are all $N(0,1)$ under the test distribution. For a particular $j^* \in [T]$, let $\mbf{Z}_{j^*} = \mbf{Z} \cap \{Y_{ij^*}: i \in [k]\}$, and let $\mbf{Z}_{-j^*} = \mbf{Z}\setminus \mbf{Z}_{j^*}$. Also, for given $j^*\in [T]$, define $\mbf{Y}_{j^*} = \mbf{Y}\cap \{Y_{ij^*}: i \in [k]\}$ and $\mbf{Y}_{-j^*}= \mbf{Y}\setminus \mbf{Y}_{j^*}$. 
Finally, for given $j^* \in [T]$, we define $\mbf{W}_{j^*}$ and $\mbf{W}_{-j^*}$ similarly.

Recall the definitions of the bases in Definitions \ref{def-hermite-basis-partial}, \ref{def-monomial-Ws-partial} and \ref{def-combined-basis}. Extending these as in the previous section, let $\mcb{H}$ be the Hermite basis for polynomials in the variables $\mbf{Z}$ and $\mcb{M}$ the monomial basis for polynomials in the variables $\mbf{W}$. For any $D \in [d]$, we also define $\mcb{H}_D$ to be the set of all Hermite monomials of degree exactly $D$. 

For convenience of measuring the monomial mass, we use Definition \ref{def-monp-norm} to define two different norms as follows:
 
\begin{definition}[$\|\cdot\|_\basisA$-Norm]				\label{defn:basisA}
For a polynomial $L({\bf Z},{\bf W}) = \sum_{H \in \mcb{H}}H(\mbf{Z}) \cdot L_H({\bf W})$, let
\begin{equation}
\|L({\bf Z},{\bf W})\|^2_\mcb{B} = \sum_{H \in \mcb{H}} \|L_{H}({\bf W})\|^2_{\rm mon,2} 
\end{equation}	
\end{definition}

\begin{definition}[$\|\cdot\|_{\mcb{B}_{-j^*,d^*, J}}$-Norm]\footnote{Note that although we call it so, $\|\cdot\|_{\mcb{B}_{-j^*,d^*,J}}$ is not an actual norm, as it may vanish even for non-zero polynomials.}				\label{defn:norm-BasisB}
Suppose $j^* \in [T], d^* \in [d-1]$ and $J \subseteq [k]$ are given. Then, for any polynomial $M(\mbf{Z}_{-j^*},\mbf{Y}_{j^*}, \mbf{W}_{-j^*})$ of the form
\begin{equation*}
M(\mbf{Z}_{-j^*},\mbf{Y}_{j^*}, \mbf{W}_{-j^*}) = \sum_{H \in \mcb{H}_{-j^*}}\sum_{S \in \mcb{S}_{j^*}} H(\mbf{Z}_{-j^*}) \cdot Y_{S}\cdot M_{H,S}({\bf W}_{-j^*}),
\end{equation*}
we define: 
\begin{equation}
\bigg\|M(\mbf{Z}_{-j^*},\mbf{Y}_{j^*}, \mbf{W}_{-j^*}) \bigg\|^2_{\mcb{B}_{-j^*,d^*, J}} = \sum_{H \in \mcb{H}_{-j^*,d^*}}\sum_{i \in J} \|M_{H,\{(i,j^*)\}}({\bf W}_{-j^*})\|^2_{\rm mon,2} 
\end{equation}
\end{definition}

Finally, for $j^* \in [T]$, we shall find it convenient to define the sets $\mc{A}^{j^*}_1 = \{i : (i,j^*) \in  \mc{I}\}$, and $\mc{A}^{j^*}_0 = [k]\setminus \mc{A}^{j^*}_1$.

\subsection{An intermediate Lemma}
We start by writing the polynomial $P$ in the variables $\mbf{Z}, \{W_{ij}: j \in [T], 1 < i \leq k_j\}, \mbf{U}$:
$$P = P_{\tn{omit}} + P_{\tn{rel}} = P_{\tn{omit}} + \ol{Q}_0(\mbf{Z},\mbf{U}\setminus \{U_1\}) + U_1 \cdot \ol{Q}_1(\mbf{Z}, \mbf{U})$$
where $P_{\tn{omit}}$ contains all monomials depending on variables in $\{W_{ij}: j \in [T], 1 < i \leq k_j\}$. 

Let $Q_0(\mbf{Z}, \mbf{W})$ and $Q_1(\mbf{Z}, \mbf{W})$  be
$\ol{Q}_0$ and $\ol{Q}_1$ respectively after a change of variables from $\mbf{U}$ to $\mbf{W}$. For $a=0,1$, we write $Q_a(\mbf{Z},\mbf{W})$ in the $\mcb{H} \circ \mcb{M}$
basis: $Q_a(\mbf{Z}, \mbf{W}) = \sum_{H \in \mcb{H}} H(\mbf{Z}) \cdot Q_{a,H}(\mbf{W})$. For a fixed $d^* \in \{0\} \cup [d-1]$, we let
$$Q_a^{(d^*)}(\mbf{Z}, \mbf{W}) =  \sum_{H \in \mcb{H}_{d^*}} H(\mbf{Z}) \cdot Q_{a,H}(\mbf{W}).$$

For a fixed $j^* \in [T]$, we define $P_{{\rm omit},j^*}$ as the sub-polynomial of $P$ containing all the monomials containing at least one variable from $\{W_{ij}: j \ne j^*,i \ne 1\}$, and let $P_{{\rm rel},j^*}$ be the rest of the polynomial.

We shall prove Lemma \ref{lem-main-nice-I} using the following intermediate result:
\begin{lemma}					\label{thm:goodj}
There exists choice of $d^* \in \{0,1,\ldots,d-1\}$ and $j^* \in [T]$ such that the following properties hold simultaneously:
	\begin{itemize}
		\item[1.] $\|Q_0\|^2_\basisA \le \rho^{2d}\|Q_1\|^2_\basisA$
		\item[2.] $\|Q_1^{(d^* + 1)}\|^2_\basisA \le \frac14\rho^{d^* + 1}\|Q_1\|^2_\basisA$
		\item[3.] $\Big\|\Wq\Big\|^2_{\mcb{B}_{-j^*,d^*, \mc{A}^{j^*}_0}} \ge \frac{1}{8kT^2}(20dT)^{-4^d}\rho^{d^*}\|Q_1\|^2_\basisA$
	\end{itemize}
	
	\noindent where  $\rho = (20dkT^3/\epsilon^4)^{-6^d}(kT)^{-1}$ and $\Wq({\bf Z}_{-j^*},{\bf Y}_{j^*},{\bf W}_{-j^*})$ is the polynomial obtained by rewriting the ${\bf W}_{j^*}$ variables in $P_{{\rm rel},j^*}$
in terms of the ${\bf Y}_{j^*}$ variables. 
\end{lemma}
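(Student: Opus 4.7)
The plan is to establish the three properties in the order given, with Property 3 being the principal contribution. I assume the nice setting of $\mc{I}$ throughout, so that Section \ref{sec:anticonc} applies and $\|Q_1\|_2 > 0$.

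For Property 1, I would combine Lemma \ref{lem:Q1-lowerBound} with two-sided comparisons between $\|\cdot\|_2$ and $\|\cdot\|_\basisA$. For each $H$, $Q_{1,H}$ and $Q_{0,H}$ are polynomials of degree at most $d$ in $T$ variables, and under $\mc{D}_\mc{I}$ the $W_{1j}$'s are an orthonormal image of $U_1 = b\eta\sqrt{T}$ together with i.i.d.\ $N(0,1)$ variables $U_2,\dots,U_T$. Since there are $O(T^d)$ such monomials, a Cauchy-Schwarz bound yields $\|Q_1\|_2^2 \le C(T,d)\|Q_1\|_\basisA^2$; because $Q_0$ is independent of $U_1$, the change of basis between the monomial basis in $\{W_{1j}\}$ and the Hermite basis in $(U_2,\dots,U_T)$ is a bounded linear map, giving $\|Q_0\|_\basisA^2 \le C(T,d)\|Q_0\|_2^2$. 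Plugging into Lemma \ref{lem:Q1-lowerBound} then yields $\|Q_0\|_\basisA^2 \le C'(T,d,\xi)\eta^2\|Q_1\|_\basisA^2$, and the setting $\eta = (\eps\xi/(20kdT))^{d6^{3d}}$ is plainly small enough that this is at most $\rho^{2d}\|Q_1\|_\basisA^2$.

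For Property 2, I would define $d^*$ as the largest integer in $\{0,\dots,d-1\}$ satisfying $\|Q_1^{(d^*)}\|_\basisA^2 \ge \tfrac14\rho^{d^*}\|Q_1\|_\basisA^2$. Existence follows from a pigeonhole argument: $Q_1$ has degree at most $d-1$ (since $\deg(U_1 Q_1) \le d$), so $\|Q_1\|_\basisA^2 = \sum_{D=0}^{d-1}\|Q_1^{(D)}\|_\basisA^2$; if the condition failed for every $D$, summing gives $\|Q_1\|_\basisA^2 < (1/(4(1-\rho)))\|Q_1\|_\basisA^2 < \tfrac12\|Q_1\|_\basisA^2$, a contradiction because the norm comparison from the previous paragraph already forces $\|Q_1\|_\basisA^2>0$. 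Property 2 is then immediate from the maximality of $d^*$.

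For Property 3, I would write $U_1 Q_1^{(d^*)} = \sum_{H \in \mcb{H}_{d^*}} H \cdot (1/\sqrt{T})(W_{11}+\cdots+W_{1T})Q_{1,H}(\mbf{W})$ and apply Lemma \ref{lem-approx-structural-intro} to each factor $(W_{11}+\cdots+W_{1T})Q_{1,H}$, noting that $T = 10d > 2d$ satisfies the hypothesis. For each $H$ this yields a set $J_H \subseteq [T]$ of size at least $T/2$ such that, for $j \in J_H$, the squared $\|\cdot\|_{\rm mon,2}$-mass of the terms of the product linear in $W_{1j}$ is at least $c(T,d)\|Q_{1,H}\|_{\rm mon,2}^2$. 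Double-counting over $(H,j)$ pairs with $j \in J_H$ produces some $j^*$ with $\sum_{H:j^*\in J_H}\|Q_{1,H}\|_{\rm mon,2}^2 \ge \tfrac12\|Q_1^{(d^*)}\|_\basisA^2 \ge \tfrac18\rho^{d^*}\|Q_1\|_\basisA^2$, so the accumulated squared-coefficient mass in $U_1 Q_1^{(d^*)}$ on terms of the form $H \cdot M \cdot W_{1j^*}$ with $H \in \mcb{H}_{d^*}$, $M \in \mcb{M}_{-j^*}$ is $\Omega_{T,d}(\rho^{d^*})\|Q_1\|_\basisA^2$. Finally, substituting $W_{1j^*}=(1/k_{j^*})\sum_{i\in[k_{j^*}]} Y_{ij^*}$ and using the $1/\sqrt{k_{j^*}}$-orthogonal transformation $\{W_{ij^*}\}_{i=1}^{k_{j^*}}\to\{Y_{ij^*}\}_{i=1}^{k_{j^*}}$ converts this $W_{1j^*}$-linear mass into mass linear in $\{Y_{ij^*}\}_{i\in\mc{A}^{j^*}_0}$ with at most an $O(k_{j^*})$ loss, mirroring \eqref{eqn-Y-W-intro}, which yields the lower bound in Property 3 up to routine bookkeeping.

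The main obstacle will be the quantitative bookkeeping in the final step: the coefficients of $\Wq$ indexed by $\mcb{B}_{-j^*,d^*,\mc{A}^{j^*}_0}$ pick up contributions not only from the $U_1 Q_1^{(d^*)}$ piece just lower-bounded, but also from $Q_0$ (via the rewriting of $W$ into $Y$) and from higher-degree parts $U_1 Q_1^{(D)}$ with $D > d^*$ (when a noisy factor $Y_{ij^*}$ with $i > k_{j^*}$ is absorbed into $H$, raising its Hermite degree). Property 1 handles the $Q_0$ spillover via the extreme smallness of $\eta$, while Property 2 together with the maximality of $d^*$ handles the higher-degree $Q_1$ spillover. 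Since $\rho$ is chosen polynomially small in $(k,T,d,\eps)$ by construction, both spillover contributions come out strictly subdominant, and Property 3 emerges with the declared prefactor $\tfrac{1}{8kT^2}(20dT)^{-4^d}\rho^{d^*}$.
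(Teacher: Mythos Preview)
Your proposal is correct and follows essentially the same route as the paper: Property~1 via Lemma~\ref{lem:Q1-lowerBound} plus norm comparisons (the paper packages these as Lemma~\ref{lem:coeff_bounds}), Property~2 by the maximal-$d^*$ pigeonhole, and Property~3 by applying the structural lemma termwise over $H$, averaging to pick $j^*$, then pushing $W_{1j^*}$-linear mass to $Y_{ij^*}$-linear mass via the orthogonal $\mbf{W}_{j^*}\to\mbf{Y}_{j^*}$ map.

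One small gap: in your double-counting step you select $j^*$ only from the sets $J_H$ furnished by Lemma~\ref{lem-approx-structural-intro}, but the norm $\|\cdot\|_{\mcb{B}_{-j^*,d^*,\mc{A}^{j^*}_0}}$ sums over $H\in\mcb{H}_{-j^*d^*}$, i.e.\ Hermite monomials containing no $\mbf{Z}_{j^*}$ variables. You must therefore also exclude, for each $H$, the at most $d-1$ indices $j$ appearing in $H$; since $T=10d$ this still leaves $\ge T/2-(d-1)>0$ admissible $j$ per $H$, and your averaging goes through with a slightly worse constant (the paper absorbs this into a $1/T$ factor). Also, your last paragraph's worry about spillover from $U_1Q_1^{(D)}$ with $D>d^*$ into the $\mc{A}^{j^*}_0$ part is misplaced: for non-noisy $i\in[k_{j^*}]$ the Hermite degree in $\mbf{Z}_{-j^*}$ is unchanged, so only $Q_0$ contaminates the lower bound (exactly as the paper handles in Claim~\ref{cl:W-to-Y}); the $D>d^*$ concern belongs to the \emph{upper} bound on the noisy part, which is used downstream in Lemma~\ref{lem-main-nice-I} rather than in Property~3 itself.
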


Using this, we give a proof of Lemma \ref{lem-main-nice-I}.
\begin{proof}[Proof of Lemma \ref{lem-main-nice-I}]
	Let $d^*$ and $j^*$ be as given in Lemma \ref{thm:goodj}. Let $\Wq({\bf Z}_{-j^*},{\bf Y}_{j^*},{\bf W}_{-j^*},)$ be as in the Lemma \ref{thm:goodj}. We can express $\Wq$ as :	
	\begin{equation}				\label{eqn:Qtilde_defn}
	\Wq({\bf Z}_{-j^*},{\bf W}_{-j^*},{\bf Y}_{j^*}) = \sum_{D = 0}^{d-1}\sum_{H \in \mcb{H}_{-j^*D}}\sum_{S \in \mcb{S}_{j^*}}HY_S\Wq_{H,S}({\bf W}_{-j^*})
	\end{equation}

	\noindent where $\mcb{H}_{-j^*D}$ is the set of Hermite monomials which are of degree $D$ and do not contain ${\bf Z}_{j^*}$ variables. By construction we have
	\begin{equation}					\label{eq:c_sum1}
	\sum_{(i,j^*) \in \mc{I}} c_{i,j^*,d^*}^2  = \| \Wq\|^2_{\mcb{B}_{-j^*,d^*,\mc{A}^{j^*}_1}}
	\end{equation}

	Consider a term that contributes to the RHS of \eqref{eq:c_sum1} (as defined in \ref{defn:norm-BasisB}). Since the additional $Y_{ij^*}$ (for $(i,j^*) \in \mc{I}$) variable adds to the degree of $H$, the corresponding term appears in the $\basisA$-representation of $P_{\rm rel}$ as $HM$ where the degree of $H$ is of degree $d^*+1$. Therefore it must be a part of $Q^{(d^*+1)}_0$ or $Q^{(d^* + 1)}_1$. Hence,
	 \begin{equation}
	 	\| \Wq\|^2_{\mcb{B}_{-j^*,d^*,\mc{A}^{j^*}_1}}   
	 	\le \|U_1{Q}^{(d^* + 1)}_1\|^2_\basisA + \|{Q}_0\|^2_\basisA  
	 	\overset{1}{\le}     T\|{Q}^{(d^* + 1)}_1\|^2_\basisA + \rho^{2d}\|Q_1\|^2_\basisA 
	 	\le 2T\rho^{d^* + 1}\|Q_1\|^2_\basisA			\label{eq:c_sum3} 		
	 \end{equation}
    \noindent where the upper bound on the first term in step $1$ follows from  
	\begin{eqnarray*}
	\|U_1({\bf W}){Q}^{(d^* + 1)}_1({\bf W})\|^2_\basisA 
	&=&\sum_{H \in \mcb{H}_{d^* + 1}} \|U_1({\bf W}) Q_H({\bf W})\|^2_{\rm mon,2} \\
	&\le& \sum_{H \in \mcb{H}_{d^* + 1}} \|U_1({\bf W})\|^2_{\rm mon,1} \|Q_H({\bf W})\|^2_{\rm mon,2}  \qquad\qquad \Big(\mbox{Claim } \ref{cl:compare_masses2}\Big)\\
	&=&T\|Q^{(d^* + 1)}_1\|^2_\basisA
	\end{eqnarray*}
	\noindent and the upper bound on the second term in step $1$ follows from Lemma \ref{thm:goodj} (part 1). The last inequality uses Part 2. of Lemma \ref{thm:goodj}. On the other hand we have,
	\begin{equation}				\label{eq:csum_2}					
		\sum_{(i,j^*) \in ([k]\times\{j^*\})\setminus\mc{I}} c_{i,j^*,d^*}^2 = \|\Wq\|^2_{\mcb{B}_{-j^*,d^*,\mc{A}^{j^*}_0}} 
	\end{equation}
	From Lemma \ref{thm:goodj} (part 3) and the choice of $\rho$ in Lemma \ref{thm:goodj} we have
	\begin{equation}			\label{eq:c_sum4}
	\|\Wq\|^2_{\mcb{B}_{-j^*,d^*,\mc{A}^{j^*}_0}}  \ge \frac{1}{8kT^2}(20dT)^{-4^d}\rho^{d^*}\|Q_1\|^2_\basisA \ge \frac{16T}{\epsilon^4}\rho^{d^* + 1}\|Q_1\|^2_\basisA	
	\end{equation}
	Combining   \eqref{eq:c_sum3},\eqref{eq:csum_2} and \eqref{eq:c_sum4}, we get an upper bound on LHS of \eqref{eq:c_sum1} which gives us
	\begin{equation}
		\sum_{(i,j^*) \in \mc{I}} c_{i,j^*,d^*}^2  \le \frac{\epsilon^4}{8} \bigg(\sum_{(i,j^*) \in ([k]\times\{j^*\})\setminus\mc{I}} c_{i,j^*,d^*}^2\bigg)
	\end{equation}
	thus implying inequality \eqref{eqn-main-nice-I-1}. Furthermore, from \eqref{eq:Q1-nonzero}, we know that $\|Q_1\|^2_2 > 0$, which along with Lemma \ref{lem:coeff_bounds}(part 1) implies that $\|Q_1\|^2_\basisA > 0$. Therefore, combining \eqref{eq:c_sum4} and \eqref{eq:csum_2}, we get that the LHS of \eqref{eq:csum_2} is strictly positive, thus implying \eqref{eqn-main-nice-I-3}. Hence, the choice of $(d^*,j^*)$ satisfy \eqref{eqn-main-nice-I-1} and \eqref{eqn-main-nice-I-3}.
\end{proof}

\subsection{Proof of Lemma \ref{thm:goodj}}

\subsubsection{Upper bounding $\|Q_0\|_\basisA$ in terms $\|Q_1\|_\basisA$}		\label{sec:Q0Q1-comparison}
  In this section, we show that $\|Q_0\|_\basisA$ is small compared  terms $\|Q_1\|_\basisA$ due to our choice of $\eta$. 
	\begin{lemma}				\label{lem:Q0-Q1-bound}
		Let $\rho$ be chosen as in Lemma \ref{thm:goodj}. Then $\|Q_0\|^2_\basisA \le \rho^{2d}\|Q_1\|^2_\basisA$
	\end{lemma}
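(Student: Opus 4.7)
The plan is to combine Lemma \ref{lem:Q1-lowerBound}, which bounds $L^2$-norms as $\|Q_0\|_2 \le (8\eta\sqrt{T}/(\xi/4d)^d\sqrt\xi)\|Q_1\|_2$, with comparison inequalities between $\|\cdot\|_2$ and $\|\cdot\|_\basisA$. The choice $\eta = (\epsilon\xi/20kdT)^{d \cdot 6^{3d}}$ is doubly exponentially small in $d$, so it easily absorbs any polynomial-in-$(k,T,d)$ loss incurred when switching between these two norms, leaving the desired bound with room to spare.

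Concretely, I would first establish two auxiliary estimates that relate the two norms, in the two directions that are needed. For $Q_0$ (which does not involve $U_1$), I would show $\|Q_0\|_\basisA^2 \le A_0(k,T,d)\cdot\|Q_0\|_2^2$ with $A_0$ polynomial in $k,T,d$. Since $Q_0$ is independent of $U_1$ and $U_2,\dots,U_T$ are i.i.d.\ $N(0,1)$ under $\mc{D}_\mc{I}$ (\cref{lem:U-dist}), writing $Q_0$ in the Hermite basis of $(\mbf{Z},U_2,\dots,U_T)$ makes $\|Q_0\|_2^2$ equal to the sum of squared Hermite coefficients. The $\basisA$-representation uses the Hermite basis on $\mbf{Z}$ but the monomial basis on $\mbf{W}$; converting between $\mbf{W}$-monomial and $\mbf{U}$-Hermite coefficients at each fixed Hermite monomial in $\mbf{Z}$ is a linear change of basis on a constant number ($T$) of variables with degree at most $d$, hence costs only a factor that is polynomial in $T, d$. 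For $Q_1$, the analogous lower bound $\|Q_1\|_2^2 \ge A_1(k,T,d)^{-1}\|Q_1\|_\basisA^2$ follows by applying the reverse direction of the same basis-conversion estimate coefficient-by-coefficient in the $\mbf{Z}$-Hermite expansion of $Q_1$, using that the $\mbf{W}$ variables have bounded moments under $\mc{D}_\mc{I}$. These bounds are exactly the kind stated in the earlier referenced Lemma \ref{lem:coeff_bounds}, and use only the orthogonal relationship between $\mbf{U}$ and $\mbf{W}$ plus standard Hermite-versus-monomial comparisons for bounded degree polynomials.

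Combining then yields
\begin{equation*}
\|Q_0\|_\basisA^2 \;\le\; A_0\,\|Q_0\|_2^2 \;\le\; A_0\left(\frac{8\eta\sqrt{T}}{(\xi/4d)^d\sqrt\xi}\right)^{\!2}\|Q_1\|_2^2 \;\le\; A_0 A_1\left(\frac{8\eta\sqrt{T}}{(\xi/4d)^d\sqrt\xi}\right)^{\!2}\|Q_1\|_\basisA^2,
\end{equation*}
and it remains to verify that $A_0 A_1 (8\sqrt{T}/(\xi/4d)^d\sqrt{\xi})^2 \cdot \eta^2 \le \rho^{2d}$. Taking logarithms, the left side has leading magnitude $2d\cdot 6^{3d}\log(20kdT/\epsilon\xi)$ plus lower-order terms, while $\log(\rho^{-2d}) = 2d\cdot 6^d\log(20dkT^3/\epsilon^4) + 2d\log(kT)$. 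Since $6^{3d}$ exceeds $6^d$ by the multiplicative factor $6^{2d}\ge 36$, the $\eta^2$ contribution dominates the required $\rho^{2d}$ by a huge margin, absorbing $A_0, A_1$ and all $k,T,d,\epsilon,\xi$-dependent constants.

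The main obstacle is cleanly establishing the norm-comparison constants $A_0, A_1$ with only polynomial dependence on $k,T,d$. The delicate point is that under $\mc{D}_\mc{I}$ the $\mbf{W}$-distribution is degenerate (it lives in a $(T-1)$-dimensional affine subspace because $U_1 = b\eta\sqrt{T}$ is essentially fixed), so one must carefully argue that the monomial basis on $\mbf{W}$ and the Hermite basis on the genuinely random variables $U_2,\dots,U_T$ are equivalent up to polynomial factors; since $U_1$ is negligibly small under $\eta \ll 1$, the degeneracy amounts to a perturbation that can be absorbed into the $A_0, A_1$ factors. Once these norm comparisons are in hand, the rest of the argument is the routine arithmetic verification sketched above.
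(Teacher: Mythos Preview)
Your proposal is correct and follows essentially the same approach as the paper: use Lemma~\ref{lem:coeff_bounds} (part~2 for $Q_0$, which applies precisely because $Q_0$ is independent of $U_1$; part~1 for $Q_1$) to pass between $\|\cdot\|_\basisA$ and $\|\cdot\|_2$, insert Lemma~\ref{lem:Q1-lowerBound}, and absorb the polynomial-in-$(T,d)$ losses into the doubly-exponentially-small $\eta$. Your worry about the degeneracy of the $\mbf{W}$-distribution is already handled by Lemma~\ref{lem:coeff_bounds}, so no additional argument is needed there.
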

  \begin{proof}
	We express $Q_0$ as 
	\begin{equation*}
	Q_0({\bf Z},{\bf W}) = \sum_{H \in \mcb{H}} H Q_{0,H}({\bf W})
	\end{equation*}
	where $H \in \mcb{H}$ are the Hermite monomials. Then by definition of $\|\cdot\|^2_\basisA$ we have,
	\begin{eqnarray*}
		\|Q_0({\bf Z},{\bf W})\|^2_\basisA &=& \sum_{H \in \mcb{H}} \|Q_{0,H}({\bf W})\|^2_{\rm mon,2} \\
		&\overset{1}{\le} & (10dT)^{14d}\sum_{H \in \mcb{H}} \|Q_{0,H}(\Wu)\|^2_{2} \\
		&{=} & (10dT)^{14d}\|Q_0(\Wu)\|^2_{2} \\
		&\overset{2}{\le} & \frac{\rho^{4d}}{4}\|Q_1\|^2_{2} 
	\end{eqnarray*}
	\noindent where step $1$ follows from Lemma \ref{lem:coeff_bounds} (part $2$), and step $2$ follows from Claim \ref{lem:Q1-lowerBound} and our choice of $\eta$ in Section \ref{sec:reduction}. Furthermore, we can relate the $\|Q_1\|^2_2$ to $\|Q_1\|^2_\basisA$ as follows
	\begin{equation*}				
	\|Q_1\|^2_\basisA = \sum_{H \in \mcb{H}}\|Q_{1,H}({\bf W})\|^2_{\rm mon,2} \overset{1}{\ge} (20dT)^{-10d}\sum_H\|Q_{1,H}({\bf U})\|^2_2
	=  (20dT)^{-10d}\|Q_1\|^2_2 
	\end{equation*}
	\noindent where step $1$ follows from Lemma \ref{lem:coeff_bounds} (part $1$). Combining the bounds, we get $\|Q_0\|^2_\basisA \le \rho^{2d}\|Q_1\|^2_\basisA$.
	\end{proof}
\subsubsection{Finding a heavy $d^* \in \{0,1,\ldots,d-1\}$}

We begin by finding a $d^* \in \{0\} \cup [d-1]$ such that $Q_1$ restricted to Hermite monomials in $\mcb{H}_{d^*}$ has large mass compared to those from $\mcb{H}_{d^* + 1}$.

\begin{lemma}				\label{lem:heavy-d*}
	There exists $d^* \in \{0\} \cup [d-1]$ such that 
	\begin{itemize}
		\item[1.]  $\|Q^{(d^*+1)}_1\|^2_\basisA \le \frac14\rho^{d^*+1}\|Q_1\|^2_\basisA$
		\item[2.] $\|Q^{(d^*)}_1\|^2_\basisA \ge \frac14\rho^{d^*}\|Q_1\|^2_\basisA$
	\end{itemize}
\end{lemma}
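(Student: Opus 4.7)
The plan is a standard ``largest heavy level'' argument. Since $P$ has degree at most $d$ and $U_1 Q_1$ is the $U_1$-linear part of $P_{\mathrm{rel}}$, the polynomial $Q_1$ has degree at most $d-1$, so we may write
\begin{equation*}
Q_1 \;=\; \sum_{D=0}^{d-1} Q_1^{(D)} \qquad \text{and hence} \qquad \|Q_1\|_\basisA^2 \;=\; \sum_{D=0}^{d-1} \|Q_1^{(D)}\|_\basisA^2,
\end{equation*}
where the equality of norms uses that $\|\cdot\|_\basisA$ (as in Definition \ref{defn:basisA}) is defined on a per-Hermite-monomial basis and the Hermite levels are orthogonal. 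Recall from \eqref{eq:Q1-nonzero} together with Lemma \ref{lem:coeff_bounds}(part 1) (invoked implicitly in the proof of Lemma \ref{lem:Q0-Q1-bound}) that $\|Q_1\|_\basisA^2 > 0$.

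Define
\begin{equation*}
d^* \;:=\; \max\Bigl\{\, D \in \{0,1,\ldots,d-1\} \;:\; \|Q_1^{(D)}\|_\basisA^2 \;\ge\; \tfrac14 \rho^D \|Q_1\|_\basisA^2 \,\Bigr\}.
\end{equation*}
I first verify that this set is non-empty, so that $d^*$ is well defined. Suppose toward contradiction that $\|Q_1^{(D)}\|_\basisA^2 < \tfrac14 \rho^D \|Q_1\|_\basisA^2$ for every $D \in \{0,\ldots,d-1\}$. Summing over $D$ and using the geometric series bound, together with $\rho \le 1/2$ (which holds by the choice of $\rho$ in Lemma \ref{thm:goodj}, as $\rho \le (kT)^{-1}$), gives
\begin{equation*}
\|Q_1\|_\basisA^2 \;=\; \sum_{D=0}^{d-1} \|Q_1^{(D)}\|_\basisA^2 \;<\; \tfrac14 \|Q_1\|_\basisA^2 \cdot \sum_{D=0}^{d-1} \rho^D \;\le\; \tfrac14 \cdot \tfrac{1}{1-\rho} \cdot \|Q_1\|_\basisA^2 \;\le\; \tfrac12 \|Q_1\|_\basisA^2,
\end{equation*}
contradicting $\|Q_1\|_\basisA^2 > 0$. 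Hence $d^*$ exists.

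Property 2 of the lemma follows immediately from the definition of $d^*$. For property 1, there are two cases. If $d^* \le d-2$, then by the maximality of $d^*$ we have $\|Q_1^{(d^*+1)}\|_\basisA^2 < \tfrac14 \rho^{d^*+1} \|Q_1\|_\basisA^2$, as required. If $d^* = d-1$, then $Q_1^{(d^*+1)} = Q_1^{(d)} \equiv 0$ since $\deg(Q_1) \le d-1$, and the inequality holds trivially. There is no obstacle here; the only ``choice'' is whether $\rho$ is small enough to make the geometric-series existence step work, which is guaranteed by the setting of $\rho$ fixed earlier.
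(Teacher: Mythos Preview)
Your proof is correct and follows essentially the same ``largest heavy level'' argument as the paper: define $d^*$ as the maximal $D$ with $\|Q_1^{(D)}\|_\basisA^2 \ge \tfrac14\rho^D\|Q_1\|_\basisA^2$, show non-emptiness by summing over levels to a contradiction, and handle the boundary case $d^* = d-1$ by noting $Q_1^{(d)} \equiv 0$. Your version is slightly more explicit about the orthogonality of Hermite levels and the smallness of $\rho$, but otherwise the arguments coincide.
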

\begin{proof}
    We claim that there exists $D \in \{0\} \cup[d-1]$ such that $ \|Q^{(D)}_1\|^2_\basisA \ge \frac14\rho^{D}\|Q_1\|^2_\basisA$. If not, then for all $D \in \{0\} \cup [d-1]$ we have $ \|Q^{(D)}_1\|^2_\basisA <  \frac14\rho^{D}\|{Q}_1\|^2_\basisA$.  Then,
    \begin{equation*}
    \|Q_1\|^2_\basisA =  \sum_{D= 0}^{d-1} \|Q^{(D)}\|^2_\basisA \le \sum_{D= 0}^{d-1} \frac{\rho^{D}}{4}\|Q_1\|^2_\basisA < \frac{1}{2}\|Q_1\|^2_\basisA
    \end{equation*}
    \noindent which is a contradiction. 	

	Now we set $d^*$ to be the largest such $D \in \{0\} \cup [d-1]$ such that $\|Q^{(D)}_1\|^2_\basisA \ge \frac14 \rho^D\|Q_1\|^2_\basisA$. If $d^* < d-1$, then by construction we know that $\|Q^{(d^* + 1)}_1\|^2_\basisA < \frac14\rho^{d^* + 1}\|Q_1\|^2_\basisA$. On the other hand if $d^* = d - 1$, then by construction $Q^{(d^* + 1)}_1$ is identically $0$ (since $Q_1$ is of degree at most $d-1$) and hence the claim is vacuously true. 
\end{proof}

\subsubsection{Locating a good $j^* \in [T]$}

Let  $d^* \in \{0\} \cup [d-1]$ be as in Lemma \ref{lem:heavy-d*}. Now, we shall find a good $j^* \in [T]$ in the sub-polynomial $U_1Q_1^{(d^*)}$ which contains a sub-polynomial linear in $W_{1j^*}$ with significant $\|\cdot\|_\basisA$-mass.

\begin{lemma}					\label{lem:goodj}
Let the polynomial $U_1Q^{(d^*)}({\bf Z},{\bf W})$ be expressed in the basis $\mcb{B}$ as 
\begin{equation*}
U_1Q^{(d^*)}({\bf Z},{\bf W}) = \sum_{H \in \mcb{H}_{d^*}}\sum_{M \in \mcb{M}} c_{H,M}HM 
\end{equation*}
Then there exists $j^* \in [T]$ such that 
\begin{equation}							\label{eq:goodj-main}
\sum_{H \in \mcb{H}_{-j^*d^*}}\sum_{M \in \mcb{M}_{-j^*}} c^2_{H,MW_{1j^*}} \ge \frac{1}{T^2}(20dT)^{-4^d}\bigg(\sum_{H \in \mcb{H}_{d^*}}\sum_{M \in \mcb{M}} c^2_{H,M}\bigg)
\end{equation}
\end{lemma}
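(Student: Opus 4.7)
The plan is to prove Lemma \ref{lem:goodj} by slicing $U_1 Q_1^{(d^*)}$ along the Hermite basis, applying Lemma \ref{lem-approx-structural-intro} to each ${\bf W}$-slice, and pigeonholing over $j\in[T]$. First I would write
\[
U_1 Q_1^{(d^*)}({\bf Z},{\bf W}) \;=\; \sum_{H \in \mcb{H}_{d^*}} H\,\cdot\, U_1\, Q_{1,H}({\bf W}),
\]
where $Q_{1,H}({\bf W})$ is the coefficient polynomial of $H$ in the Hermite expansion of $Q_1^{(d^*)}$ (of degree $\leq d-1-d^*$). Because $U_1 = (1/\sqrt{T})(W_{11}+\cdots+W_{1T})$, for each fixed $H$ the polynomial $\sqrt{T}\,U_1 Q_{1,H}({\bf W})$ has exactly the form $(W_{11}+\cdots+W_{1T})\cdot S$ required by Lemma \ref{lem-approx-structural-intro}, with $S=Q_{1,H}$ of degree $\leq d-1$ and ambient degree $\leq d$; the hypothesis $T>2d$ holds because $T=10d$. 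Applying that lemma and absorbing the $\sqrt{T}$ rescaling (squared) gives a set $J^*_H \subseteq [T]$ with $|J^*_H| \geq T/2$ such that for every $j \in J^*_H$,
\[
\sum_{M \in \mcb{M}_{-j}} c^2_{H,\,MW_{1j}} \;\geq\; \frac{c(T,d)}{T}\,\|Q_{1,H}\|^2_{\rm mon,2},
\]
where $c(T,d)$ is the constant supplied by Lemma \ref{lem-approx-structural-intro}.

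Next I would double count to locate a single $j^*$ that is good for a constant fraction of the ``Hermite mass''. For each $H\in\mcb{H}_{d^*}$ let $J_H \subseteq [T]$ be the set of blocks whose ${\bf Z}$-variables actually appear in $H$; since $\deg(H)=d^*$, we have $|J_H|\leq d^*$. Setting $K_H := J^*_H \setminus J_H$ yields $|K_H|\geq T/2-d^*\geq T/2-(d-1)$, and crucially $j\in K_H$ implies $H\in\mcb{H}_{-jd^*}$. Writing $w_H := \|Q_{1,H}\|^2_{\rm mon,2}$ (so $\sum_H w_H = \|Q_1^{(d^*)}\|^2_\basisA$), double counting gives
\[
\sum_{j=1}^T \sum_{\substack{H \in \mcb{H}_{-jd^*} \\ j \in J^*_H}} w_H \;=\; \sum_{H} |K_H|\,w_H \;\geq\; (T/2-d+1)\sum_H w_H,
\]
so some $j^*\in[T]$ satisfies
\[
\sum_{\substack{H \in \mcb{H}_{-j^*d^*} \\ j^* \in J^*_{H}}} w_H \;\geq\; \bigg(\frac{1}{2} - \frac{d-1}{T}\bigg)\|Q_1^{(d^*)}\|^2_\basisA \;\geq\; \frac{2}{5}\|Q_1^{(d^*)}\|^2_\basisA,
\]
using $T=10d$.

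Combining the per-$H$ inequality with this pigeonhole bound yields
\[
\sum_{H \in \mcb{H}_{-j^*d^*}} \sum_{M \in \mcb{M}_{-j^*}} c^2_{H,\,MW_{1j^*}} \;\geq\; \frac{2\,c(T,d)}{5T}\,\|Q_1^{(d^*)}\|^2_\basisA.
\]
To reshape the right-hand side into the form demanded by \eqref{eq:goodj-main}, I would use the inequality $\sum_{H,M} c^2_{H,M} = \|U_1 Q_1^{(d^*)}\|^2_\basisA \leq T\,\|Q_1^{(d^*)}\|^2_\basisA$, which follows exactly as in the derivation of \eqref{eq:c_sum3} via Claim \ref{cl:compare_masses2} together with $\|U_1\|^2_{\rm mon,1}=T$. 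Substituting,
\[
\sum_{H \in \mcb{H}_{-j^*d^*}} \sum_{M \in \mcb{M}_{-j^*}} c^2_{H,\,MW_{1j^*}} \;\geq\; \frac{2\,c(T,d)}{5T^2}\sum_{H,M} c^2_{H,M},
\]
which matches \eqref{eq:goodj-main} provided $c(T,d) \geq \tfrac{5}{2}(20dT)^{-4^d}$, the quantitative form delivered by the proof of Lemma \ref{lem-approx-structural-intro} in Section \ref{sec:struct}.

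The main obstacle is ensuring sufficient slack in the averaging step: Lemma \ref{lem-approx-structural-intro} produces $T/2$ good indices per Hermite slice, but each slice also ``kills'' up to $d^*\leq d-1$ indices (those blocks already present in $H$), so the argument only survives once $T$ is sufficiently large relative to $d$. The choice $T=10d$ made in the Basic PCP Test gives precisely the cushion needed. The remaining work is careful bookkeeping between $\|\cdot\|_\basisA$, $\|\cdot\|_{\rm mon,2}$, and the raw coefficient sums, and propagating the single constant $c(T,d)$ cleanly through each step.
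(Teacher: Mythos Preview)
Your proof is correct and follows essentially the same approach as the paper's: slice $U_1Q_1^{(d^*)}$ along the Hermite basis, apply the structural lemma (Lemma~\ref{lem:robust_polynomial}) to each ${\bf W}$-slice to obtain at least $T/2$ good indices, discard the at most $d-1$ blocks appearing in $H$, and average over $j\in[T]$. The only cosmetic differences are that you double-count over the full set $K_H$ of good indices (yielding a $(1/2-(d-1)/T)$ factor) whereas the paper picks a single $j^*(H)$ per slice and pigeonholes (yielding $1/T$), and you convert from $\|Q_1^{(d^*)}\|^2_\basisA$ to $\sum_{H,M}c^2_{H,M}$ at the end via Claim~\ref{cl:compare_masses2} rather than baking it into the per-slice inequality as in Lemma~\ref{lem:robust_poly_restated}; both variants are absorbed into the $(20dT)^{-4^d}$ slack.
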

\begin{proof}
Consider the following representation of $U_1Q^{(d^*)}_1$:
\begin{equation}				\label{eq:U1Q1-decomp}
U_1Q^{(d^*)}_1({\bf Z},{\bf W}) = \sum_{H \in \mcb{H}_{d^*}}HU_1Q_{1,H}({\bf W})
\end{equation}
Using the fact that $U_1 = (1/\sqrt{T})\sum_{j=1}^TW_{1j}$ and $T = 10d$, the following lemma is directly implied by Lemma \ref{lem:robust_polynomial}.
\begin{lemma}				\label{lem:robust_poly_restated}
	Fix $H \in \mcb{H}_{d^*}$. Let $U_1Q_{1,H}({\bf W})$ (as defined in   \eqref{eq:U1Q1-decomp}) be expressed in the basis $\mcb{B}$ as 
	\begin{equation*}
	U_1Q_{1,H}({\bf W}) = \sum_{M \in \mcb{M}}c_{H,M}M
	\end{equation*}
	Then there exists at least $T/2$ choices of $j^* \in [T]$ such that  
	\begin{equation}					\label{eq:j*-mass}
	\sum_{M \in \mcb{M}_{-j^*}}c^2_{H,MW_{1j^*}} \ge  \frac{1}{T}(20dT)^{-4^d}\sum_{M \in \mcb{M}}c^2_{H,M}
	\end{equation}
\end{lemma} 
For a fixed Hermite monomial $H \in \mcb{H}_{d^*}$, we call a $j^* \in [T]$ to be \emph{good} for $H$ if the following conditions hold:
\begin{itemize}
\item[1.] The Hermite monomial $H$ does not contain ${\bf Z}_{j^*}$-variables.
\item[2.] The index $j^*$ satisfies   \eqref{eq:j*-mass} with respect to $H$
\end{itemize}
Now for a fixed Hermite monomial $H \in \mcb{H}_{d^*}$, out of $T$ values of $j$, at most $d-1$ can appear in $H$. Furthermore, Lemma \ref{lem:robust_poly_restated} guarantees that for at least $T/2$-values of $j \in [T]$,   \eqref{eq:j*-mass} is satisfied. Since $T = 10d$, for each Hermite monomial $H$ there exists at least some $ j^*(H)$ which is good for $H$. Therefore by averaging over all $H \in \mcb{H}_{d^*}$, there exists $j^* \in [T]$ such that 
\begin{equation*}		
\sum_{H \in \mcb{H}_{-j^*d^*}}\sum_{M \in \mcb{M}_{-j^*}} c^2_{H,MW_{1j^*}} \ge \frac{1}{T}\sum_{H \in \mcb{H}_{-d^*}}\sum_{M \in \mcb{M}_{-j^*(H)}}c^2_{H,MW_{1j^*(H)}} \ge \frac{1}{T^2}(20dT)^{-4^d}\bigg(\sum_{H \in \mcb{H}_{d^*}}\sum_{M \in \mcb{M}} c^2_{H,M}\bigg)
\end{equation*}

\end{proof}

\subsubsection{Substituting ${\bf W}_{j^*}$ with ${Y}_{j^*}$-variables} 

For the $j^* \in [T]$ chosen in the previous section, $P_{{\rm rel},j^*}$ can be rewritten by expanding ${\bf W}_{j^*}$ in the ${\bf Y}_{j^*}$-variables as $\Wq({\bf Z}_{-j^*},{\bf Y}_{j^*},{\bf W}_{-j^*})$ which can be expressed in the basis $\mcb{B}_{-j^*}$ as follows:
\begin{equation}				\label{eq:expanded_Qtilde}				
\Wq({\bf Z}_{-j^*},{{\bf W}_{- j^*}},{\bf Y}_{j^*}) = \sum_{D = 0}^{d-1}\sum_{H \in \mcb{H}_{-j^*D}}\sum_{M \in \mcb{M}_{-j^*}}\sum_{S \in \mcb{S}_{j^*}}\tilde{c}_{H,M,S}HMY_S 
\end{equation}
	where $\mcb{H}_{-j^*D}$,$\mcb{M}_{-j^*}$ and $\mcb{S}_{j^*}$ are as defined in Section \ref{sec:hybrid}. Now we show that the squared sum of coefficients in the above expression, restricted to factors to terms of the form $HMY_{ij^*}$ capture a significant fraction of mass. 
\begin{cl}					\label{cl:W-to-Y}
	Let $\Wq({\bf Z}_{-j^*},{{\bf W}_{- j^*}},{\bf Y}_{j^*})$ be as in   \eqref{eq:expanded_Qtilde}. Then,
	\begin{equation}					
	 \sum_{H \in \mcb{H}_{-j^*d^*} }\sum_{M \in \mcb{M}_{-j^*}}\sum_{i \in [k_{j^*}]}\tilde{c}^2_{H,M,({ij^*})} \ge \frac{1}{2k_{j^*}}\bigg(\sum_{H \in \mcb{H}_{-j^*d^*}}\sum_{M \in \mcb{M}_{-j^*}} c^2_{H,MW_{1j^*}}\bigg)				\label{eq:Qtilde-lower-bound}
	\end{equation}
\end{cl}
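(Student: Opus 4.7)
The plan relies on the orthogonality of the rows $\{\mathbf{c}_\ell\}_{\ell \in [k_{j^*}]}$ of the transformation $W_{\ell j^*} = \sum_{i' \in [k_{j^*}]} c_{\ell, i'} Y_{i' j^*}$. Since $\langle \mathbf{c}_\ell, \mathbf{c}_{\ell'}\rangle = \delta_{\ell \ell'}/k_{j^*}$, for any reals $\{a_\ell\}$ one has $\sum_{i' \in [k_{j^*}]} \bigl(\sum_\ell c_{\ell, i'} a_\ell\bigr)^2 = \frac{1}{k_{j^*}} \sum_\ell a_\ell^2 \geq \frac{1}{k_{j^*}} a_1^2$, and I apply this identity coefficient-wise for each $(H, M) \in \mcb{H}_{-j^*d^*} \times \mcb{M}_{-j^*}$ to relate the $\tilde{c}$-coefficients of $\Wq$ to the $c$-coefficients of $U_1 Q_1^{(d^*)}$.

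To set things up, I expand $P_{{\rm rel}, j^*}$ in a mixed basis in which $\mathbf{Z}_{-j^*}$ and $\mathbf{Z}_{j^*}$ are both in Hermite form (with factors $H \in \mcb{H}_{-j^*}$ and $H' \in \mcb{H}_{j^*}$), $\mathbf{W}_{-j^*}$ is in monomial form, and $\{W_{\ell j^*} : \ell \in [k_{j^*}]\}$ is in monomial form. A case analysis shows that only terms of $P_{{\rm rel}, j^*}$ linear in exactly one $W_{\ell j^*}$ factor can contribute to the linear-in-$Y_{i'j^*}$ coefficients $\tilde{c}_{H, M, (i' j^*)}$ (for $i' \in [k_{j^*}]$), and in such terms the $H'$-factor contributes only through its constant coefficient $H'(\mathbf{0})$ in the Hermite-to-monomial expansion. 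This produces
\begin{equation*}
\tilde{c}_{H, M, (i'j^*)} = \sum_{\ell \in [k_{j^*}]} c_{\ell, i'} \, a_{H, M, \ell}, \qquad a_{H, M, \ell} := \sum_{H' \in \mcb{H}_{j^*}} H'(\mathbf{0}) \, p_{HH', MW_{\ell j^*}},
\end{equation*}
where $p_{\cdot, \cdot}$ denotes a mixed-basis coefficient of $P_{{\rm rel}, j^*}$. Applying the orthogonality identity then yields $\sum_{i' \in [k_{j^*}]} \tilde{c}_{H, M, (i'j^*)}^2 \geq \frac{1}{k_{j^*}} a_{H, M, 1}^2$.

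Next, I identify $a_{H, M, 1}$ with the desired quantity. The ``extra'' monomials in $P_{{\rm rel}, j^*} \setminus P_{\rm rel}$ (inherited from $P_{\rm omit} \setminus P_{{\rm omit}, j^*}$) all carry some $W_{\ell' j^*}$ factor with $\ell' > 1$ in the mixed basis and therefore do not contribute to $a_{H, M, 1}$, so $a_{H, M, 1}$ reduces to an aggregate of mixed-basis coefficients of $P_{\rm rel} = Q_0 + U_1 Q_1$. Isolating the $H' = 1$ term (and using that $\deg(H) = d^*$, so the $U_1 Q_1$ contribution comes exactly from $U_1 Q_1^{(d^*)}$) yields $a_{H, M, 1} = c_{H, MW_{1j^*}} + q_{0, H, MW_{1j^*}} + \mathrm{err}_{H, M}$, where $\mathrm{err}_{H, M} := \sum_{H' \neq 1} H'(\mathbf{0}) \, p_{HH', MW_{1j^*}}$ collects contributions from $Q_0$ and from $U_1 Q_1^{(D)}$ with $D > d^*$. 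Using $(x + y)^2 \geq \tfrac{1}{2} x^2 - y^2$ and summing over $(H, M) \in \mcb{H}_{-j^*d^*} \times \mcb{M}_{-j^*}$ reduces the claim to showing that $\sum_{H, M} q_{0, H, MW_{1j^*}}^2$ and $\sum_{H, M} \mathrm{err}_{H, M}^2$ are each much smaller than $\sum_{H, M} c_{H, MW_{1j^*}}^2$. The $q_0$-contribution is at most $\|Q_0\|^2_\basisA \leq \rho^{2d} \|Q_1\|^2_\basisA$ by Lemma \ref{lem:Q0-Q1-bound}; the $\mathrm{err}$-contribution is controlled via Cauchy--Schwarz (noting only $H'$ with every component degree even contribute non-trivially) together with $\|U_1 Q\|^2_\basisA \leq T \|Q\|^2_\basisA$ and the maximality of $d^*$ from Lemma \ref{lem:heavy-d*}, which gives $\|Q_1^{(D)}\|^2_\basisA \leq \tfrac{1}{4} \rho^D \|Q_1\|^2_\basisA$ for all $D > d^*$. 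The main technical obstacle I expect is the careful bookkeeping of these Hermite-to-monomial conversion weights $H'(\mathbf{0})$ and verifying that the resulting absolute error bound is dominated by $\sum c_{H, MW_{1j^*}}^2$, which holds due to the extremely small choice of $\rho$ in Lemma \ref{thm:goodj}.
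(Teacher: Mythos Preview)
Your proposal is correct and follows essentially the same approach as the paper: both exploit the orthogonality $\langle \mathbf{c}_\ell, \mathbf{c}_{\ell'}\rangle = \delta_{\ell\ell'}/k_{j^*}$ to obtain $\sum_{i' \in [k_{j^*}]} \tilde{c}^2_{H,M,(i'j^*)} \geq \tfrac{1}{k_{j^*}} \alpha^2_{H,M,1}$ (the paper's $\alpha_{H,M,1}$ plays the role of your $a_{H,M,1}$), and then identify this $W_{1j^*}$-coefficient with $c_{H,MW_{1j^*}}$ up to small error terms controlled via Lemmas \ref{lem:Q0-Q1-bound}, \ref{lem:heavy-d*}, and \ref{lem:goodj}.

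Your treatment is actually more careful than the paper's in one respect. The paper asserts that ``for $i=1$ the RHS of \eqref{eq:expand-P-lin} has contribution either from terms in $U_1Q_1^{(d^*)}$ or $Q_0$,'' and thereafter bounds only the $\|Q_0\|_\basisA$ error. This glosses over the basis mismatch for the $\mathbf{Z}_{j^*}$ variables: the coefficients $c_{H,MW_{1j^*}}$ are defined with $\mathbf{Z}_{j^*}$ in Hermite form, whereas the singleton coefficients $\tilde{c}_{H,M,(i'j^*)}$ (for $i' \le k_{j^*}$) are defined with $\mathbf{Z}_{j^*}$ in monomial form, so even-degree Hermite factors $H'$ over $\mathbf{Z}_{j^*}$ can leak in through their constant term $H'(\mathbf{0})$, producing additional contributions from $U_1Q_1^{(D)}$ with $D>d^*$. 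You correctly isolate these extra error terms and bound them using the maximality of $d^*$ in Lemma \ref{lem:heavy-d*}; the paper's stated bound still holds because $\rho$ is chosen small enough, but your bookkeeping makes the argument complete.
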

\begin{proof}
	Consider the polynomial $P_{\rm lin}$ defined as follows:
	\begin{equation}				\label{eq:expand-P-lin}
	P_{\rm lin}({\bf Z},{\bf W}) = \sum_{H \in \mcb{H}_{-j^*d^*}}\sum_{M \in \mcb{M}_{-j^*}}\sum_{i \in [k_{j^*}]}\alpha_{H,M,i}HMW_{ij^*}
	\end{equation}
	\noindent which is the sub-polynomial in $P$ consisting of monomials containing exactly one ${\bf W}_{j^*}$-variable. Note that terms on the RHS of \eqref{eq:expand-P-lin} for $i > 1$ 
are contained in $P_{\tn{omit}}$.    

	Fix a $HM \in \mcb{H}_{-j^*d^*} \circ \mcb{M}_{-j^*}$ and $i \in [k_{j^*}]$. Under the linear transformation ${\bf W}_{j^*} \mapsto {\bf Y}_{j^*}$ we have
	\begin{equation}
	\tilde{c}_{H,M,({ij^*})} = \sum_{ l \in [k_{j^*}] }\alpha_{H,M,l}c_{l,i}
	\end{equation}
	where the $c_{1,l},\ldots,{c}_{T,l}$ are the $l^{th}$ coordinates of vectors ${\bf c}_1,\ldots,{\bf c}_T$ (as in Section \ref{sec-NO-case}). Recall that $\langle {\bf c}_i,{\bf c}_{i^\prime} \rangle = 0$ for all $i \ne i^\prime$. Therefore
	\begin{eqnarray}
	\sum_{i \in [k_{j^*}]} \tilde{c}^2_{H,M,Y_{ij^*}} &=& \Big\|\sum_{ l \in [k_{j^*}] }\alpha_{H,M,l}{\bf c}_{l}\Big\|^2   \\
											   &=& \sum_{ l \in [k_{j^*}] }\Big\|\alpha_{H,M,l}{\bf c}_{l}\Big\|^2   \\
											   &\ge& \alpha^2_{H,M,1}\|{\bf c}_{1}\|^2 = \frac{\alpha^2_{H,M,1}}{k_{j^*}} 
	\end{eqnarray}
	
	To finish the proof, we note that for $i = 1$ the RHS of \eqref{eq:expand-P-lin} has contribution either from terms in $U_1Q_1^{(d^*)}$ or $Q_0$. 
	 Summing over all pairs $HM \in \mcb{B}_{-j^*}$ and using the triangle inequality we obtain 
	
	\begin{eqnarray}
	\sqrt{\sum_{H \in \mcb{H}_{-j^*d^*}}\sum_{M \in \mcb{M}_{-j^*} }\alpha^2_{H,M,1}} &\ge& \sqrt{\sum_{H \in \mcb{H}_{-j^*d^*}}\sum_{M \in \mcb{M}_{-j^*}} c^2_{H,M,W_{1j^*}}} - \|Q_0\|_{\basisA} \\
	&\ge& \frac{1}{\sqrt{2}}\sqrt{\sum_{H \in \mcb{H}_{-j^*d^*}}\sum_{M \in \mcb{M}_{-j^*}} c^2_{H,M,W_{1j^*}}}  
	\end{eqnarray}
	\noindent where we upper bound $\|Q_0\|_{\basisA}$ as follows:
	\begin{eqnarray}
	\|Q_0\|^2_{\basisA} \overset{1}{\le} \rho^{2d}\|Q_1\|^2_{\basisA} \overset{2}{\le} \rho^{d}\|Q^{(d^*)}_1\|^2_{\basisA} 
	&=& \rho^d\sum_{H \in \mcb{H}_{d^*}}\sum_{M \in \mcb{M}} c^2_{H,M} \\
	&\overset{3}{\le}& \frac{1}{16}   \sum_{H \in \mcb{H}_{-j^*d^*}}\sum_{M \in \mcb{M}_{-j^*}} c^2_{H,M,W_{1j^*}}  
	\end{eqnarray}
	\noindent where inequality $1$ follows from Lemma \ref{lem:Q0-Q1-bound}, inequality $2$ follows from Lemma \ref{lem:heavy-d*} and the last inequality follows from Lemma \ref{lem:goodj} and our choice of $\rho$.
\end{proof}

\subsubsection{Completing the proof of Lemma \ref{thm:goodj}}

	Part $1$ follows from Lemma \ref{lem:Q0-Q1-bound} and Part $2$ follows directly from Lemma \ref{lem:heavy-d*}. For Part $3$, observe that the LHS of Part $3$ (in Lemma \ref{thm:goodj}) is equal to the LHS of   \eqref{eq:Qtilde-lower-bound}, which can be lower bounded using Claim \ref{cl:W-to-Y}, Lemma \ref{lem:goodj} and Lemma \ref{lem:heavy-d*} as follows
	\begin{eqnarray}
	\frac{1}{2k_{j^*}}\bigg(\sum_{H \in \mcb{H}_{-j^*d^*}}\sum_{M \in \mcb{M}_{-j^*}} c^2_{H,MW_{1j^*}}\bigg) &\ge& \frac{1}{2k_{j^*}T^2}(20dT)^{-4^d}\bigg(\sum_{H \in \mcb{H}_{d^*}}\sum_{M \in \mcb{M}} c^2_{H,M}\bigg) \\
	 &=& \frac{1}{2T^2k_{j^*}}(20dT)^{-4^d}\|Q^{(d^*)}_1\|^2_\basisA \\
	 &\ge& \frac{1}{8T^2k_{j^*}}(20dT)^{-4^d}\rho^{d^*}\|Q_1\|^2_\basisA 
	\end{eqnarray}
	
	which completes the proof.

\section{A Linear Mass Bound for Low Degree Polynomials}\label{sec:struct}

In this section we study the structure of polynomials over the variable set $\{W_1,\dots, W_T\}$. For a polynomial $P(W_1,\dots,W_T)$, dropping the subscript we use $\|P\|$ to denote the $\ell_2$-norm of the coefficients of $P$ in the monomial basis.  
Let $U := \sum_{j=1}^T W_j.$ Define $Q(W_1, \dots, W_T) = U \cdot S(W_1, \dots, W_T)$, a polynomial of degree $d+1$. For any $j \in [T]$, write:
		\begin{align}
			S(W_1,\ldots,W_T) &= \sum_{\ell = 0}^{d}W^\ell_j \cdot S_{j,\ell}({\bf W}_{\ne j}) \label{eq:expand_S}\\
			Q(W_1,\ldots,W_T) &= \sum_{\ell = 1}^{d+1}W^\ell_j \cdot Q_{j,\ell}({\bf W}_{\ne j}) 		\label{eq:expand_Q}
		\end{align}
		where $\mb{W}_{\neq \sigma}= \{W_i\}_{i \notin \sigma}$ for any list $\sigma$ of indices.
The main result of this section is the following lemma showing that for many $j \in [T]$, the $W_j$-linear sub-polynomial $Q_{j,1}$ has significant mass:
\begin{lemma}					\label{lem:robust_polynomial}
	For polynomials $S$ and $Q$ as above, if $T > 2d$, there are at least $T/2$ choices of $j \in [T]$ such that $\|Q_{j,1}\|\geq (20dT)^{-3^d}\|S\|$. 
\end{lemma}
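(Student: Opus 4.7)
My plan is to prove Lemma~\ref{lem:robust_polynomial} by induction on $d$, the degree of $S$. The base case $d=0$ is immediate: $S$ is a constant $c$, so $Q = cU$ and $Q_{j,1} = c$ for every $j \in [T]$, giving $\|Q_{j,1}\| = \|S\|$ at all $T$ indices. For the inductive step, the central algebraic observation is that writing $U = W_j + U_{-j}$ with $U_{-j} := \sum_{i \ne j} W_i$ and expanding $Q = (W_j + U_{-j})\sum_{\ell} W_j^{\ell} S_{j,\ell}$ immediately yields the decomposition
\[
 Q_{j,1} \;=\; S_{j,0} \;+\; U_{-j}\cdot S_{j,1},
\]
where both summands are polynomials in ${\bf W}_{\ne j}$ of degree at most $d$.

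The key counting identity
\[
\sum_{j=1}^T \|S_{j,0}\|^2 \;=\; \sum_M s_M^2\,(T - |\mathrm{supp}(M)|) \;\ge\; (T-d)\|S\|^2 \;\ge\; (T/2)\|S\|^2,
\]
(using $T > 2d$) combined with a Markov-type argument guarantees that $\|S_{j,0}\| \ge \sqrt{\beta_0}\|S\|$ for at least a constant fraction of $j \in [T]$ (with $\beta_0$ an absolute constant). For each such $j$ I split into two cases around the threshold $\tau_d := \alpha_{d-1}^2$, where $\alpha_{d-1} := (20(d-1)(T-1))^{-3^{d-1}}$ is the inductive constant.

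In Case~1 ($\|S_{j,1}\| \le \tau_d \|S\|$), the monomial-expansion bound $\|U_{-j} S_{j,1}\| \le \sqrt{d(T-1)}\,\|S_{j,1}\|$ (by Cauchy--Schwarz on the monomial coefficients) makes the second summand negligible compared to $\|S_{j,0}\|$, so the triangle inequality yields $\|Q_{j,1}\| \ge (\sqrt{\beta_0}/2)\|S\|$, comfortably above $\alpha_d \|S\|$. In Case~2 ($\|S_{j,1}\| > \tau_d \|S\|$), I apply the inductive hypothesis to the polynomial $U_{-j}\cdot S_{j,1}$, which has exactly the required form (a sum-of-variables times a lower-degree polynomial) over the $T-1$ variables ${\bf W}_{\ne j}$ with $S_{j,1}$ of degree $d-1$; the hypothesis applies since $T-1 > 2(d-1)$. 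This yields at least $(T-1)/2$ indices $j' \ne j$ with $\|(U_{-j} S_{j,1})_{j',1}\| \ge \alpha_{d-1} \|S_{j,1}\| \ge \alpha_{d-1}\tau_d \|S\| = \alpha_{d-1}^3 \|S\|$, which matches the target order up to the polynomial slack in the base $(20dT)$. A monomial-level bookkeeping step then transfers this mass into a lower bound on $\|Q_{j',1}\|$: since $U_{-j}S_{j,1}$ contributes to the coefficient of $W_j W_{j'} M'$ in $Q$ for every monomial $M'$ supporting $(U_{-j} S_{j,1})_{j',1}$, these coefficients propagate through the expansion $Q_{j',1} = S_{j',0} + U_{-j'}S_{j',1}$. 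Combining the indices produced by Cases~1 and~2 via a final averaging covers at least $T/2$ values in total.

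The main obstacle will be handling the potential cancellation between $S_{j,0}$ and $-U_{-j} S_{j,1}$ inside $Q_{j,1}$: even when both summands have substantial norm, their sum could be small. Case~1 circumvents this by forcing $\|U_{-j} S_{j,1}\|$ to be negligible, while Case~2 sidesteps a direct analysis of $\|Q_{j,1}\|$ by transporting the lower bound to a different index $j'$ via induction. The cubic blowup $3^d = 3\cdot 3^{d-1}$ in the exponent arises naturally in Case~2: one factor of $\alpha_{d-1}$ from the inductive application combines with two factors from the threshold $\tau_d = \alpha_{d-1}^2$ to give $\alpha_d \approx \alpha_{d-1}^3$, with the polynomial-in-$(dT)$ losses absorbed into the base $(20dT)$ of $\alpha_d$.
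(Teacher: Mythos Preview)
Your inductive strategy is natural, and the decomposition $Q_{j,1} = S_{j,0} + U_{-j}\,S_{j,1}$ together with the counting identity $\sum_{j}\|S_{j,0}\|^2 \ge (T-d)\|S\|^2$ are both correct and useful. Case~1 is also fine. The real gap is in Case~2, specifically the ``monomial-level bookkeeping'' that is supposed to transfer the inductive lower bound on $\|(U_{-j} S_{j,1})_{j',1}\|$ into a lower bound on $\|Q_{j',1}\|$. The only sub-polynomial of $Q_{j',1}$ that naturally receives $(U_{-j} S_{j,1})_{j',1}$ is its $W_j$-linear part $(Q_{j',1})_{j,1}$, which equals the $W_jW_{j'}$-bilinear part of $Q$. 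Computing this gives
\[
 (Q_{j',1})_{j,1} \;=\; (S_{j,0})_{j',1} \;+\; (U_{-j}\,S_{j,1})_{j',1},
\]
where $(S_{j,0})_{j',1}$ is the part of $S$ with no $W_j$ and exactly one $W_{j'}$. Your induction controls the second summand at the scale $\alpha_{d-1}^3\|S\|$, but $(S_{j,0})_{j',1}$ is uncontrolled and can have norm as large as $\|S\|$, so it can annihilate the second summand entirely. In other words, you have not resolved the cancellation problem you flagged; you have merely shifted it from the pair $(S_{j,0},\,U_{-j}S_{j,1})$ inside $Q_{j,1}$ to the pair $((S_{j,0})_{j',1},\,(U_{-j}S_{j,1})_{j',1})$ inside $(Q_{j',1})_{j,1}$, at a scale where the uncontrolled term dominates by a factor of order $\alpha_{d-1}^{-3}$. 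There is also a secondary counting issue: your Markov step yields only $|G|\ge T\cdot\frac{1/2-\beta_0}{1-\beta_0}$ indices with large $\|S_{j,0}\|$, which is strictly below $T/2$ for any $\beta_0>0$; if every $j\in G$ falls in Case~1 you never reach $T/2$ good indices.

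For contrast, the paper does not induct on $d$. It argues by contradiction: if more than $T/2$ indices have $\|Q_{j,1}\|$ small, then for each such $j$ one has an approximate factorization $S = (W_j - \sum\mbf{W}_{\neq j})\,S_j^{(1)} + W_j^2 R_j^{(1)} + \Delta_j^{(1)}$ with tiny error $\Delta_j^{(1)}=Q_{j,1}$. A dedicated ``Variable Removal Lemma'' then equates two such representations (for $X=W_1$ and $Y=W_j$) and deduces that $S_1^{(1)}$ itself admits an analogous factorization of one lower degree in one fewer variable, with controlled error growth. Iterating $d$ times exhausts the degree; a separate lemma (a degree-$d$ polynomial cannot be approximately divisible by $W_j^2$ for more than $d$ indices) supplies the contradiction. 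This sidesteps the cancellation issue that your Case~2 runs into.
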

The rest of this section is devoted to proving Lemma \ref{lem:robust_polynomial}.

\subsection{The Variable Removal Lemma}

The key ingredient that is needed to prove this is the following lemma that will be iteratively applied while reducing the number of variables and the degree at each iteration:
\begin{lemma}[Variable Removal]					\label{lem:varred}
Let $d\geq 1$. For variables $X, Y, Z$, suppose there are polynomials $S_1, S_2$ of degree $d-1$, polynomials $R_1, R_2$ of degree $d-2$, and error polynomials $\Delta^X,\Delta^Y$ of degree $d$ satisfying:
    \begin{eqnarray}	
	& & (aX- Y- Z)S_1(Y,Z) + \Delta^X(X,Y,Z) + X^2R_1(X,Y,Z) \nonumber \\ 
	& = & (aY- X- Z)S_2(X,Z) + \Delta^Y(X,Y,Z) + Y^2R_2(X,Y,Z). \label{eqn:rp}
	\end{eqnarray}
Then,
$$S_1(Y,Z) = \Big((a+1)Y - Z \Big)C(Z) +Y^2A_{1}(Y,Z)+\Delta(Y,Z)$$ 
where $\Delta$ is such that $\|\Delta\| \le 20a\max(\|\Delta^X\|,\|\Delta^Y\|)$. Furthermore, we have $\deg(C(Z)) \le d-2$, $\deg(A_1(Y,Z)) \leq d-3$, and $\deg(\Delta(Y,Z))\leq d-1$. 
\end{lemma}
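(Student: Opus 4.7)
The plan is to Taylor-expand $S_1$ around $Y = 0$ and read off the desired decomposition, so that only one scalar-valued bound remains. Writing $S_1(Y, Z) = S_1(0, Z) + Y\, \partial_Y S_1(0, Z) + Y^2 A_1(Y, Z)$ (with $A_1$ the quadratic-in-$Y$ tail, of degree at most $d - 3$), I will set $C(Z) := \partial_Y S_1(0, Z)/(a+1)$, of degree at most $d-2$, so that the $Y^1$-coefficient of the target $((a+1)Y - Z)\, C(Z) + Y^2 A_1 + \Delta$ matches that of $S_1$. This forces $\Delta$ to be the $Z$-only polynomial $S_1(0, Z) + Z\, C(Z)$, of degree at most $d-1$. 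Everything then reduces to a single estimate, namely that $\|(a+1)\, S_1(0, Z) + Z\, \partial_Y S_1(0, Z)\| \le O(1) \cdot \max(\|\Delta^X\|, \|\Delta^Y\|)$.

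To prove that estimate, I would view both sides of \eqref{eqn:rp} as polynomials in $X$ over $Y, Z$, expanding $S_2(X, Z) = S_2^{(0)}(Z) + X\, S_2^{(1)}(Z) + X^2(\cdots)$ and similarly splitting $\Delta^X, \Delta^Y, R_2$ by $X$-degree. Matching the $X^0$ and $X^1$ coefficients on each side of \eqref{eqn:rp} then yields two polynomial identities in $Y, Z$ that automatically drop $R_1$ (since $X^2 R_1$ contributes nothing to $X^0$ or $X^1$):
\begin{align*}
-(Y+Z)\, S_1(Y, Z) + \Delta^X_0 &= (aY - Z)\, S_2^{(0)}(Z) + \Delta^Y_0 + Y^2 R_2^{(0)}(Y, Z), \\
a\, S_1(Y, Z) + \Delta^X_1 &= (aY - Z)\, S_2^{(1)}(Z) - S_2^{(0)}(Z) + \Delta^Y_1 + Y^2 R_2^{(1)}(Y, Z).
\end{align*}

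Next I would apply the two operations that suppress the remaining $Y^2 R_2^{(\cdot)}$ terms: setting $Y = 0$ in the first identity gives $S_2^{(0)}(Z) = S_1(0, Z) + \eta_0(Z)$ with $\eta_0(Z) = [\Delta^Y(0,0,Z) - \Delta^X(0,0,Z)]/Z$, and differentiating the second identity in $Y$ and setting $Y = 0$ gives $S_2^{(1)}(Z) = \partial_Y S_1(0, Z) + \eta_1(Z)$ with $\eta_1(Z) = \tfrac{1}{a}[\partial_X\partial_Y \Delta^X - \partial_X\partial_Y \Delta^Y](0,0,Z)$. The apparent division by $Z$ in $\eta_0$ is legitimate because \eqref{eqn:rp} at $X = Y = Z = 0$ forces $\Delta^X(0,0,0) = \Delta^Y(0,0,0)$. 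Substituting these into the second identity at $Y = 0$ telescopes the $S_1$-terms and produces the target estimate $(a+1)\, S_1(0, Z) + Z\, \partial_Y S_1(0, Z) = [\partial_X \Delta^Y - \partial_X \Delta^X](0,0,Z) - Z\,\eta_1(Z) - \eta_0(Z)$, whose right side has norm at most $(4 + 2/a)\, \max(\|\Delta^X\|, \|\Delta^Y\|)$. Dividing by $a+1$ gives $\|\Delta\| = O(1/a)\, \max$, comfortably inside the claimed $20a\, \max$.

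The main obstacle will be the norm bookkeeping: every step needs the facts that extracting a subset of monomial coefficients does not increase $\ell_2$-norm, that multiplication by the single variable $Z$ preserves norm, and that division by $Z$ preserves norm whenever divisibility holds. Verifying divisibility at each step is the one place where the structural input of \eqref{eqn:rp} really matters, via the base identity $\Delta^X(0,0,0) = \Delta^Y(0,0,0)$ enforced by evaluating \eqref{eqn:rp} at the origin.
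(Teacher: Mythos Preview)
Your argument is correct and follows essentially the same coefficient-matching strategy as the paper, just organized through a Taylor-expansion lens rather than the paper's explicit monomial decomposition $S_1 = Y^2A_1 + YB_1(Z) + ZC_1(Z) + D_1$. The one substantive difference is the choice of $C$: the paper takes $C(Z) = -C_1(Z)$ (the $Z$-only coefficient of $S_1$), whereas you take $C(Z) = \partial_Y S_1(0,Z)/(a+1) = B_1(Z)/(a+1)$; since the paper shows $B_1 = -(a+1)C_1 + (\text{error})$, these differ only by an error term and both are legitimate. A pleasant side effect of your choice is that your $\Delta$ depends on $Z$ alone, while the paper's $\Delta$ carries a $Y$-linear piece; your constant $(4+2/a)/(a+1)$ is also tighter than the paper's (though your phrasing ``$O(1/a)$'' is imprecise for small $a$---it is $O(1)$, still well inside $20a$).
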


\begin{proof}
We write the polynomials $S_1$ and $S_2$ in the following way\footnote{If $d\leq 2$, then some of the polynomials below are automatically $0$.}:
	\begin{eqnarray*}
	S_1(Y,Z) &=& Y^2 \cdot A_{1}(Y,Z) + Y \cdot B_1(Z) + Z \cdot C_1(Z) + D_1 \\
	S_2(X,Z) &=& X^2\cdot A_{2}(X,Z) + X\cdot B_2(Z) + Z\cdot C_2(Z) + D_2
	\end{eqnarray*}
Note that $C_1(Z)$ and $A_1(Y,Z)$ can be of degree at most $d-2$ and $d-3$ respectively. Additionally, we write the error polynomials as:	
	\begin{eqnarray*}
	\Delta^X &=& X\cdot \Delta^X_X + Z\cdot \Delta^X_Z + Z^2\cdot \Delta^X_{Z^2}(Z) + YZ\cdot \Delta^X_{YZ}(Z) + \tilde{\Delta}^X(X,Y,Z) \\
	\Delta^Y &=& X\cdot \Delta^Y_X + Z\cdot \Delta^Y_Z + Z^2\cdot \Delta^Y_{Z^2}(Z) + YZ\cdot \Delta^Y_{YZ}(Z) + \tilde{\Delta}^Y(X,Y,Z)
	\end{eqnarray*}
To be clear, the functions without any arguments, such as $\Delta^X_X$ or $\Delta^Y_Z$, are constants. The above decomposition is unique.
 Now we match coefficients in \eqref{eqn:rp}. 
	
	\begin{itemize}
		\item[1.] Matching terms of the form $X^0Y^0Z^{\ge 2}$, we get  $-C_1(Z) + \Delta^X_{Z^2} = -C_2(Z) + \Delta^Y_{Z^2} \Rightarrow C_2(Z) = C_1(Z) + \Delta^Y_{Z^2} - \Delta^X_{Z^2}$
		\item[2.] Matching terms of the form $X^1Y^0Z^0$, we get $aD_1 + \Delta^X_{X} = -D_2 + \Delta^Y_X \Rightarrow D_2 = - aD_1 + \Delta^Y_X - \Delta^X_X$
		\item[3.] Matching terms of the form $X^0Y^0Z^1$, we get $-D_1 + \Delta^{X}_Z= -D_2 + \Delta^{Y}_Z$. Substituting $D_2$ from above:
		\begin{align*}
		-D_1 &= - D_2  + \Delta^Y_Z - \Delta^X_Z
		     = aD_1 - (\Delta^Y_X - \Delta^X_X)  + (\Delta^Y_Z - \Delta^X_Z) 
		\end{align*}
		which on rearranging gives us
		$D_1 = -\frac{1}{a+1}\Big[ \Delta^Y_Z - \Delta^X_Z - \Delta^Y_X + \Delta^X_X\Big]$
\item[4.] Matching $X^0Y^1Z^{\ge 1}$ we get $-B_1(Z) - C_1(Z) + \Delta^X_{YZ} = aC_2(Z) + \Delta^Y_{YZ}$. Substituting $C_2(Z)$ from above,
		\begin{eqnarray*}
		-B_1(Z) &=& aC_2(Z) + C_1(Z) + \Delta^Y_{YZ} - \Delta^X_{YZ}  \\
				&=& a\Big(C_1(Z) + \Delta^Y_{Z^2} - \Delta^X_{Z^2}\Big) + C_1(Z) + \Delta^Y_{YZ} - \Delta^X_{YZ}  \\	
				&=& (a + 1)C_1(Z) + a\big(\Delta^Y_{Z^2} - \Delta^X_{Z^2}\big) + \Delta^Y_{YZ} - \Delta^X_{YZ} 	
		\end{eqnarray*}
	\end{itemize}
	Finally by substituting $B_1(Z)$ and $D_1$ in the expression for $S_1(Y,Z)$ and collecting the error terms, we get
	\begin{align*}
	S_1(Y,Z) &= Y^2A_1(Y,Z) - Y\Big[(a + 1)C_1(Z) + a\big(\Delta^Y_{Z^2} - \Delta^X_{Z^2}\big) + \Delta^Y_{YZ} - \Delta^X_{YZ}\Big] + ZC_1(Z) + D_1 \\
	&=   Y^2A_1(Y,Z) -C_1(Z)\Big[(a+1)Y - Z\Big] - Y\Big[a\big(\Delta^Y_{Z^2} - \Delta^X_{Z^2}\big) + \Delta^Y_{YZ} - \Delta^X_{YZ}\Big]\\ 
	&- \frac{1}{a+1}\Big[\Delta^Y_Z - \Delta^X_Z - \big(\Delta^Y_X - \Delta^X_X \big)\Big]\\
	&=   Y^2A_1(Y,Z) -C_1(Z)\Big[(a+1)Y - Z\Big] + \Delta(Y,Z)
	\end{align*}
We obtain the lemma setting $C(Z) = -C_1(Z)$ and $\Delta(Y,Z) = - Y\Big[a\big(\Delta^Y_{Z^2} - \Delta^X_{Z^2}\big) + \Delta^Y_{YZ} - \Delta^X_{YZ}\Big]$.
The upper bound on $\|\Delta\|$ follows by triangle inequality.		
	\end{proof}
	
		\subsection{Proof of Lemma \ref{lem:robust_polynomial}}

 Fix $j \in [T]$. Comparing the coefficients of the sub-polynomial that are degree $1$ in $W_j$ in the expansion of $Q$ (see  \eqref{eq:expand_Q}) and $US$ (see \eqref{eq:expand_S}), we get
		\begin{equation}							\label{eq:Delta}
			 Q_{j,1}({\bf W}_{\ne j}) = S_{j,0} + S_{j,1}({\bf W}_{\ne j})\sum {\bf W}_{\ne j}
		\end{equation}
where $\sum \mb{W}_{\neq\sigma}$ is the sum of all variables in $\mb{W}_{\neq\sigma}$ for any list $\sigma$ of indices.		
Denote $Q_{j,1}({\bf W}_{\ne j})$ as $\Delta^{(1)}_j$.

		The proof is by contradiction i.e., we assume that more than $T/2$ of the $\Delta^{(1)}_j$ polynomials have small mass. We show first that there exist many $j$'s such that the sub-polynomial of $S$ not divisible by $W_j^2$ retains significant mass. This is achieved using Lemma \ref{lem:lower_bound}. Next, we apply Lemmas \ref{lem:varred} and \ref{lem:lower_bound} as well as the degree bound on $S$ to obtain a contradiction.
	
	\subsubsection{Finding a non-quadratic sub-polynomial with significant mass}
	\begin{lemma}				\label{lem:lower_bound}
		Given a polynomial $P$ on variables $W_1,\ldots,W_T$ of degree $d$ such that $\|P\| = 1$,  let $P = W_j^2 P_j(W_1,\ldots,W_T) + R_j(W_1,\ldots,W_T)$ for every $j \in [T]$ where $R_j(\cdot)$ is the sub-polynomial which does not contain a $W_j^2$ factor. Then, if $T > d$, there exists $j \in [T]$ such that $\|R_j\| > 4^{-2^d}\|P\|$.
	\end{lemma}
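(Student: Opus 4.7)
}
The plan is to argue via a simple averaging bound on $\sum_{j \in [T]} \|R_j\|^2$, which is in fact far stronger than the claimed $4^{-2^d}\|P\|$ lower bound. First, I would write $P$ in the monomial basis, $P = \sum_M c_M \cdot M$, where each monomial $M = \prod_i W_i^{a_i}$ has total degree $\sum_i a_i \leq d$. By definition, a monomial $M$ contributes to $R_j$ (with the same coefficient $c_M$) if and only if $W_j^2 \nmid M$, i.e.\ $a_j \in \{0,1\}$. The orthogonality of the monomial basis then gives $\|R_j\|^2 = \sum_{M : a_j \leq 1} c_M^2$.

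Next, the key combinatorial observation is that for any monomial $M$ of degree at most $d$, the number of indices $j \in [T]$ for which $W_j^2 \mid M$ is at most $d/2$, since each such index contributes at least $2$ to the total degree of $M$. Hence each monomial $M$ belongs to $R_j$ for at least $T - d/2$ choices of $j \in [T]$. Summing over $j$ and swapping the order of summation yields
\begin{equation*}
\sum_{j=1}^T \|R_j\|^2 \;=\; \sum_M c_M^2 \cdot |\{j : a_j \le 1\}| \;\geq\; (T - d/2)\,\|P\|^2.
\end{equation*}

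Finally, by averaging there exists $j \in [T]$ with $\|R_j\|^2 \geq (1 - d/(2T))\,\|P\|^2$. Since the hypothesis $T > d$ gives $d/(2T) < 1/2$, we conclude $\|R_j\|^2 > \tfrac12 \|P\|^2$, so $\|R_j\| > \|P\|/\sqrt{2}$, which comfortably exceeds $4^{-2^d}\|P\|$ for every $d \geq 1$. There is no real obstacle here: the averaging bound is tight up to constants, and the slack in the claimed $4^{-2^d}$ is presumably chosen for cosmetic consistency with the (much harder) iterative Variable Removal argument in Lemma~\ref{lem:varred} used elsewhere. I would simply present the two-line averaging argument and observe that $1/\sqrt{2} \geq 4^{-2^d}$ to close the proof.
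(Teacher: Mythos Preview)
Your averaging argument is correct and considerably simpler than the paper's own proof. The paper proceeds by contradiction: assuming $\|R_j\| \leq 4^{-2^d}$ for every $j \in [d]$, it shows inductively (Claim~\ref{clm:deg}) that $P$ can be written as $W_1^2 \cdots W_j^2 H_j + L_j$ with $\|L_j\| \leq 4\cdot 4^{-2^{d-j+1}}$, so that at $j = d$ the polynomial $H_d$ is nonzero and $P$ contains a monomial of degree $\geq 2d > d$, contradicting the degree bound. The doubly-exponential constant $4^{-2^d}$ is thus an artifact of the error doubling at each of the $d$ peeling steps, not a cosmetic choice.

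Your route bypasses all of this: the single combinatorial observation that any degree-$d$ monomial can be divisible by $W_j^2$ for at most $d/2$ indices $j$ immediately gives $\sum_j \|R_j\|^2 \geq (T - d/2)\|P\|^2$, and averaging yields $\|R_j\| > \|P\|/\sqrt{2}$ for some $j$, a vastly stronger bound. The paper's iterative structure mirrors the way the lemma is later invoked inside the proof of Lemma~\ref{lem:robust_polynomial}, but for the statement of Lemma~\ref{lem:lower_bound} itself your argument is cleaner and sharper.
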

	\begin{proof}
Without loss of generality, assume $\|P\| = 1$ by rescaling.
Suppose that for all $j \in [d]$, $\|R_j\| \leq \eta \coloneqq 4^{-2^d}$. We show that this violates the degree bound on $P$ using the following claim.
\begin{claim}\label{clm:deg}
For every $j \in [d]$, if polynomials $H_j$ and $L_j$ are defined such that $P = W_1^2 \cdots W_j^2 \cdot H_j + L_j$ and $L_j$ is not divisible by $W_1^2 \cdots W_j^2$, then $\|L_j\|\leq 4 \cdot \eta^{1/2^{j-1}}$.
\end{claim}
This claim proves the lemma because it shows $\|L_d\| \leq 4 \cdot \eta^{1/2^{d-1}} < 1/2$, so $\|H_d\|>0$ (since they contribute disjoint monomials to $P$), and therefore $P$ contains a monomial of degree $2d$, a contradiction.
	\end{proof}
	
\begin{proof}[Proof of Claim \ref{clm:deg}]
The proof is by induction on $j$. The base case $j=1$ is clear, since $L_1 = R_1$.

For the inductive step, suppose the claim is true for $j-1$. Then, we have that $W_j^2 P_j + R_j = P = W_1^2 \cdots W_{j-1}^2 H_{j-1} + L_{j-1}$ with $\|L_{j-1}\| \leq 4\eta^{1/2^{j-2}}$. Write $H_{j-1} = W_j^2 H_j' + L_j'$ where $L_j'$ is not divisible by $W_j^2$. Now, $P = W_1^2 \cdots W_j^2 H_j' + W_1^2 \cdots W_{j-1}^2 L_j' + L_{j-1}$.

By looking at the terms divisible by $W_j^2$, we have that $\|W_j^2 P_j\| = \|P_j\| \leq \|H_j'\| + \|L_{j-1}\|$. Since $\|P_j\| \geq 1-\eta$ and $\|L_{j-1}\| \leq 4 \eta^{1/2^{j-2}}$, we get that $\|H_j'\| \geq 1-8\eta^{1/2^{j-2}}$.  

 Let $H_j = H_j'$ and $L_j = W_1^2 \cdots W_{j-1}^2 L_j' + L_{j-1}$. Then, 
\begin{align*}
\|L_j\|^2 = 1-\|H_j\|^2 &= 1-\|H'_j\|^2 \leq 1-(1-8\eta^{1/2^{j-2}})^2 \leq 16 \eta^{1/2^{j-2}}
\end{align*}
\end{proof}
	\subsubsection{Iterative expansion of $S$}
We are now ready to prove Lemma \ref{lem:robust_polynomial}. For contradiction, suppose that $\max_{j \in [T/2]}\|\Delta^{(1)}_j\| \le C_{\max} \coloneqq (20dT)^{-3^d}$. By rescaling, we can assume $\|S\| = 1$. We expand the polynomial $S$ iteratively using Lemma \ref{lem:varred}. At each step, we shall use Lemma \ref{lem:lower_bound} to find a $W_j$ variable such that $S$ contains a sub-polynomial of significant mass which is not divisible by $W_j^2$.

	As a first step, using \eqref{eq:Delta} and the definition of $\Delta_1^{(1)}$, for every $j \in [T/2]$, we can write:
\begin{equation}\label{eqn:str}
S(\mb{W}) = \left(W_j - \sum \mb{W}_{\neq j}\right) \cdot S_j^{(1)}(\mb{W}_{\neq j}) + W_j^2 \cdot R_j^{(1)}(\mb{W}) + \Delta_j^{(1)}(\mb{W})
\end{equation}
where $S_i^{(1)}$, $R_i^{(1)}$ and $\Delta_j^{(1)}$ are polynomials of degrees at most $d-1$, $d-2$ and $d$ respectively and $\|\Delta_j^{(1)}\|\leq C_{\max}$. Because $T/2 > d$, using Lemma \ref{lem:lower_bound} and re-indexing, we can assume that the sub-polynomial of $S$ not divisible by $W_1^2$ has $\ell_2$-norm at least $\eta \coloneqq 4^{-2^d}$. 

Now, applying the variable reduction lemma (Lemma \ref{lem:varred})  for every $j \in [2, T/2]$, with $a=1, X=W_1, Y=W_j,$ and $Z = \sum \mb{W}_{\neq 1,j}$, we obtain that there exist polynomials $S^{(2)}_j$, $R^{(2)}_j$ and $\Delta^{(2)}_j$ of degrees $d-2$, $d-3$ and $d-1$ respectively such that
$$S^{(1)}_1(\mb{W}_{\neq 1}) = \left(2W_j - \sum \mb{W}_{\neq 1, j}\right) \cdot S^{(2)}_j(\mb{W}_{\neq 1,j}) + W_j^2 \cdot R^{(2)}_j(\mb{W}_{\neq 1}) + \Delta_j^{(2)}(\mb{W}_{\neq 1})
$$
and $\|\Delta_j^{(2)}\| \leq 20C_{\max}$. Again, by Lemma \ref{lem:lower_bound} and re-indexing, we can ensure that the sub-polynomial of $S_1^{(1)}$ not divisible by $W_2^2$ has $\ell_2$-norm at least $\eta \|S^{(1)}_1\|$. 
 
Applying the variable reduction lemma again with $a=2$, we obtain polynomials $S_j^{(3)}, R_j^{(3)}$ and $\Delta_j^{(3)}$ of degrees $d-3, d-4,$ and $d-2$ respectively such that for any $j \in [3,T/2]$:
$$S^{(2)}_2(\mb{W}_{\neq 1,2}) = \left(3W_j - \sum \mb{W}_{\neq 1, 2, j}\right) \cdot S^{(3)}_j(\mb{W}_{\neq 1,2, j}) + W_j^2 \cdot R^{(3)}_j(\mb{W}_{\neq 1,2}) + \Delta_j^{(3)}(\mb{W}_{\neq 1,2})
$$
and $\|\Delta_j^{(3)}\| \leq 20^2 \cdot 2 \cdot C_{\max}$. Continuing this way, we get that for every $1\leq \ell < j \leq T/2$, there exist polynomials $S^{(\ell)}_j, R^{(\ell)}_j$ and $\Delta^{(\ell)}_j$ of degrees $d-\ell, d-\ell-1$, and $d-\ell+1$ such that:
\begin{eqnarray}
S^{(\ell-1)}_{\ell-1}(\mb{W}_{\neq [\ell-1]}) & = & \left(\ell W_j - \sum \mb{W}_{\neq [\ell-1]\cup \{j\}}\right) \cdot S^{(\ell)}_j(\mb{W}_{\neq [\ell-1] \cup \{j\}}) \nonumber \\
& & + W_j^2\cdot R^{(\ell)}_j(\mb{W}_{\neq [\ell-1]}) + \Delta_j^{(\ell)}(\mb{W}_{\neq [\ell-1]})\label{eqn:genl}
\end{eqnarray}
and $\|\Delta_j^{(\ell)}\| \leq (20\ell)^{\ell-1} C_{\max}$.  Here, $S_0^{(0)} = S$. Moreover, using Lemma \ref{lem:lower_bound}, we can assume that the sub-polynomial of $S_{\ell-1}^{(\ell-1)}$ not divisible by $W_\ell^2$ has $\ell_2$-mass at least $\eta \|S^{(\ell-1)}_{\ell-1}\|$. 

For $\ell = d$, we obtain a linear polynomial $S^{(d-1)}_{d-1}(\mb{W}_{\neq 1, \dots, d-1})$ such that for every $j \in [d, T/2]$, there exists constant $S^{(d)}_j$ and linear polynomial $\Delta^{(d)}_j$ such that:
$$S_{d-1}^{(d-1)}(\mb{W}_{\neq 1, \dots, d-1}) = \left(dW_j - \sum \mb{W}_{\neq 1,\dots,d-1, j}\right) \cdot S^{(d)}_j + \Delta^{(d)}_j(\mb{W}_{\neq 1, \dots, d-1})
$$
Note that $R_j^{(d)} = 0$ because $S^{(d-1)}_{d-1}$ is not divisible by $W_j^2$ being a linear polynomial. 

Applying Lemma \ref{lem:varred} one final time, we get that $|S_d^{(d)}| \leq (40d)^{d} C_{\max}$. On the other hand, we have the following claim:
\begin{claim}
For any $0\leq \ell \leq T/2$, $\|S^{(\ell)}_\ell\| \geq \left(\frac{\eta}{T}\right)^\ell - 2\frac{(20\ell)^\ell C_{\max}}{T}$. 
\end{claim}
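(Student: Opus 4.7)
The plan is to prove the claim by induction on $\ell$, using the recursive equation \eqref{eqn:genl} together with the mass-preservation guarantee from Lemma \ref{lem:lower_bound}. For the base case $\ell=0$, we have $\|S_0^{(0)}\| = \|S\| = 1$, and the right-hand side $1 - 2C_{\max}/T$ is trivially $\le 1$ (using the convention $0^0 = 1$). The interesting work is in the inductive step.

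For the inductive step, I would instantiate \eqref{eqn:genl} with $j = \ell$, writing
\[
S^{(\ell-1)}_{\ell-1} = \Big(\ell W_\ell - \sum \mb{W}_{\neq [\ell]}\Big)\cdot S^{(\ell)}_\ell + W_\ell^2 \cdot R^{(\ell)}_\ell + \Delta^{(\ell)}_\ell,
\]
and then isolate the sub-polynomial not divisible by $W_\ell^2$ on both sides. On the LHS, by the re-indexing performed after invoking Lemma \ref{lem:lower_bound}, this sub-polynomial has $\ell_2$-norm at least $\eta \|S^{(\ell-1)}_{\ell-1}\|$. On the RHS, the factor $(\ell W_\ell - \sum \mb{W}_{\neq[\ell]})\cdot S^{(\ell)}_\ell$ is \emph{entirely} not divisible by $W_\ell^2$, since $S^{(\ell)}_\ell$ depends only on $\mb{W}_{\neq[\ell]}$ and the linear factor has degree at most $1$ in $W_\ell$. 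Thus, by the triangle inequality,
\[
\Big\|\Big(\ell W_\ell - \sum \mb{W}_{\neq[\ell]}\Big)\cdot S^{(\ell)}_\ell\Big\| \ \geq\ \eta\,\|S^{(\ell-1)}_{\ell-1}\| - \|\Delta^{(\ell)}_\ell\|\ \geq\ \eta\,\|S^{(\ell-1)}_{\ell-1}\| - (20\ell)^{\ell-1} C_{\max}.
\]

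Next I would obtain a matching upper bound on the same product norm in terms of $\|S^{(\ell)}_\ell\|$. Since $S^{(\ell)}_\ell$ is $W_\ell$-free, the two pieces $\ell W_\ell S^{(\ell)}_\ell$ and $-\sum_{i>\ell} W_i S^{(\ell)}_\ell$ have disjoint $W_\ell$-degrees and are therefore orthogonal in the monomial basis. Hence
\[
\Big\|\Big(\ell W_\ell - \sum \mb{W}_{\neq[\ell]}\Big)\cdot S^{(\ell)}_\ell\Big\|^2 = \ell^2 \|S^{(\ell)}_\ell\|^2 + \Big\|\sum_{i>\ell}W_i\, S^{(\ell)}_\ell\Big\|^2 \leq \big(\ell^2 + (T-\ell)^2\big)\|S^{(\ell)}_\ell\|^2 \leq T^2 \|S^{(\ell)}_\ell\|^2,
\]
where the second-to-last step uses $\|W_i Q\| = \|Q\|$ for each $i$ and the triangle inequality, and the final bound $\ell^2 + (T-\ell)^2 \leq T^2$ holds for all $\ell \in [0,T]$. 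Combining the two bounds yields the recursion
\[
\|S^{(\ell)}_\ell\| \ \geq\ \frac{\eta}{T}\,\|S^{(\ell-1)}_{\ell-1}\| - \frac{(20\ell)^{\ell-1} C_{\max}}{T}.
\]

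Finally, I would substitute the inductive hypothesis $\|S^{(\ell-1)}_{\ell-1}\| \geq (\eta/T)^{\ell-1} - 2(20(\ell-1))^{\ell-1}C_{\max}/T$ into this recursion and verify by direct arithmetic that the resulting lower bound is at least $(\eta/T)^\ell - 2(20\ell)^\ell C_{\max}/T$. The only inequality to check reduces to
\[
2\eta\,(20(\ell-1))^{\ell-1}/T + (20\ell)^{\ell-1} \ \leq\ 2(20\ell)^\ell = 40\ell\,(20\ell)^{\ell-1},
\]
which is immediate for $\ell \geq 1$ since $(20(\ell-1))^{\ell-1} \leq (20\ell)^{\ell-1}$ and $\eta \leq 1$. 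I don't anticipate any real obstacles here; the main care needed is the orthogonality bookkeeping (ensuring the two degree-in-$W_\ell$ parts are really separated) and tracking the degree-in-$W_\ell$ argument that the full product $(\ell W_\ell - \sum \mb{W}_{\neq[\ell]})S^{(\ell)}_\ell$ contributes entirely to the non-$W_\ell^2$ part—after that, everything is a one-line triangle-inequality induction.
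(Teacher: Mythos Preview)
Your proposal is correct and follows essentially the same induction as the paper's proof: both isolate the non-$W_\ell^2$-divisible part of $S^{(\ell-1)}_{\ell-1}$ via \eqref{eqn:genl}, lower bound it by $\eta\|S^{(\ell-1)}_{\ell-1}\|$ using Lemma~\ref{lem:lower_bound}, upper bound the product norm by $T\|S^{(\ell)}_\ell\|$, and then substitute the inductive hypothesis with the same arithmetic check. The only cosmetic difference is that the paper simply asserts $\|(\ell W_\ell-\sum\mb{W}_{\neq[\ell]})S^{(\ell)}_\ell\|\le T\|S^{(\ell)}_\ell\|$ (which also follows directly from Claim~\ref{cl:compare_masses2} since the linear factor has $\|\cdot\|_{\rm mon,1}=T$), whereas you justify it via an orthogonality-plus-triangle-inequality argument.
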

\begin{proof}
The proof is by induction. For $\ell=0$, the claim is true because $\|S_0^{(0)}\| = \|S\| = 1$. For the induction, note that by our choice of the index $\ell$ above, the sub-polynomial of  $S^{(\ell-1)}_{\ell-1}$ not divisible by $W_\ell^2$ has $\ell_2$-mass at least $\eta \|S^{(\ell-1)}_{\ell-1}\|$. Moreover, from \eqref{eqn:genl}  and triangle inequality this mass is at most
\[
\|(\ell W_\ell - \sum \mb{W}_{\neq [\ell]})S_\ell^{(\ell)}\| + \|\Delta_\ell^{(\ell)}\|
\] 
So: 
\begin{align*}
\|(\ell W_\ell - \sum \mb{W}_{\neq [\ell]})S_\ell^{(\ell)}\| &\geq \eta \|S^{(\ell-1)}_{\ell-1}\| - \|\Delta_\ell^{(\ell)}\| \\
&\geq \eta \|S^{(\ell-1)}_{\ell-1}\| - (20\ell)^{\ell-1} C_{\max} \\
&\geq \eta^\ell/T^{\ell-1} - 2\eta(20 \ell)^{\ell-1}C_{\max}/T  -(20\ell)^{\ell-1}C_{\max}\\ 
&\geq \eta^\ell/T^{\ell-1}- 2 (20\ell)^{\ell}C_{\max}
\end{align*}
The claim follows by observing $\|(\ell W_\ell - \sum \mb{W}_{\neq [\ell]})S_\ell^{(\ell)}\| \leq T\cdot \|S^{(\ell)}_\ell\|$. 
\end{proof}

Therefore, $|S^{(d)}_d| \geq (\eta/T)^d - 2(20d)^d C_{\max}/T$. But by our choice of $\eta$ and $C_{\max}$,  $(\eta/T)^d - 2(20d)^d C_{\max}/T > (40d)^d C_{\max}$, since $C_{\max}((40d)^d + 2(20d)^d/T) < C_{\max} (80d)^d <  (1/4T)^{2^d} = (\eta/T)^d$. This is a contradiction.

\bibliographystyle{alpha}
\bibliography{Refs-LC-arxiv}

\appendix
\section{Useful Tools and Results}\label{sec-useful-appendix}
\begin{fact}\label{fact-sum-0}
There exists a distribution of random variables $g_1, \dots, g_R$ such
that each $g_i$ is marginally $N(0,1)$, $\E[g_ig_j] = -1/(R-1)$ for
all $i\neq j$, and $\sum_{i=1}^R g_i = 0$.
\end{fact}
\begin{lemma}\label{lem-ortho-transform}
Let ${\bf g} = (g_1 \dots g_R)^{\tn{T}}$ where $\{g_i\}_{i=1}^R$ are 
as given in Fact \ref{fact-sum-0}, and suppose
${\bf x} = (x_1 \dots, x_R)^{\tn{T}}$, ${\bf y} = (y_1 \dots
y_R)^{\tn{T}} \in \R^R$ are orthogonal unit vectors such that
$\langle \mathbf{1}, {\bf x}\rangle = 0$ and  $\langle \mathbf{1}, {\bf
y}\rangle = 0$. Define, $f := \langle {\bf x}, {\bf g}\rangle$ and $h
:= \langle {\bf y}, {\bf g}\rangle$.  Then, $f$ and $g$ are 
independent $N(0, R/(R-1))$ random variables.
\end{lemma}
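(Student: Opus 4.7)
The plan is to exploit the fact that $\mathbf{g} = (g_1, \dots, g_R)^{\sf T}$ is a (degenerate) jointly Gaussian vector, so that any two linear functionals of $\mathbf{g}$ are also jointly Gaussian. Once joint Gaussianity is established, independence reduces to showing that $f$ and $h$ are uncorrelated with the claimed variance, which is a direct second-moment calculation using the covariance matrix of $\mathbf{g}$.

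First I would record the covariance matrix of $\mathbf{g}$: the hypotheses on the marginals and pairwise covariances give $\Sigma_{ii} = 1$ and $\Sigma_{ij} = -1/(R-1)$ for $i \neq j$, which can be rewritten compactly as
\begin{equation*}
\Sigma \;=\; \frac{R}{R-1}\Bigl(I - \tfrac{1}{R}\mathbf{1}\mathbf{1}^{\sf T}\Bigr).
\end{equation*}
In particular, $\Sigma$ is $R/(R-1)$ times the orthogonal projector onto the hyperplane $\mathbf{1}^\perp$. To make joint Gaussianity of $(f,h)$ explicit (and to confirm Fact \ref{fact-sum-0} is realizable), I would point out that $\mathbf{g}$ has the representation $\mathbf{g} = \sqrt{R/(R-1)} \cdot P\mathbf{z}$ where $\mathbf{z} \sim N(0, I_R)$ is a vector of i.i.d.\ standard Gaussians and $P = I - \tfrac{1}{R}\mathbf{1}\mathbf{1}^{\sf T}$. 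Thus $(f,h)$ is a linear image of $\mathbf{z}$ and hence jointly Gaussian.

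Next I would compute the three relevant second moments using the assumptions $\|\mathbf{x}\| = \|\mathbf{y}\| = 1$, $\langle \mathbf{x}, \mathbf{y}\rangle = 0$, and $\langle \mathbf{x}, \mathbf{1}\rangle = \langle \mathbf{y}, \mathbf{1}\rangle = 0$. Since $P\mathbf{x} = \mathbf{x}$ and $P\mathbf{y} = \mathbf{y}$, one directly gets
\begin{align*}
\E[f^2] &= \mathbf{x}^{\sf T}\Sigma\mathbf{x} = \tfrac{R}{R-1}\bigl(\|\mathbf{x}\|^2 - \tfrac{1}{R}\langle\mathbf{x},\mathbf{1}\rangle^2\bigr) = \tfrac{R}{R-1}, \\
\E[h^2] &= \mathbf{y}^{\sf T}\Sigma\mathbf{y} = \tfrac{R}{R-1}, \\
\E[fh] &= \mathbf{x}^{\sf T}\Sigma\mathbf{y} = \tfrac{R}{R-1}\bigl(\langle \mathbf{x},\mathbf{y}\rangle - \tfrac{1}{R}\langle\mathbf{x},\mathbf{1}\rangle\langle\mathbf{y},\mathbf{1}\rangle\bigr) = 0,
\end{align*}
and $\E[f] = \E[h] = 0$ because $\E[\mathbf{g}] = 0$. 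Since $(f,h)$ is jointly Gaussian, uncorrelatedness is equivalent to independence, so $f$ and $h$ are independent $N(0, R/(R-1))$ random variables, as claimed.

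There is no real obstacle here: the only subtlety is the degeneracy of $\mathbf{g}$ (it lives on the hyperplane $\{\sum g_i = 0\}$), but the explicit representation $\mathbf{g} = \sqrt{R/(R-1)}\,P\mathbf{z}$ sidesteps this entirely by reducing $f$ and $h$ to $\sqrt{R/(R-1)}\,\mathbf{x}^{\sf T}\mathbf{z}$ and $\sqrt{R/(R-1)}\,\mathbf{y}^{\sf T}\mathbf{z}$, which are linear forms in i.i.d.\ standard Gaussians with orthogonal, unit-norm coefficient vectors and hence manifestly independent $N(0, R/(R-1))$.
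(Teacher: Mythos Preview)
Your proof is correct and follows essentially the same approach as the paper: both compute $\E[f^2]$, $\E[h^2]$, and $\E[fh]$ directly from the covariance structure of $\mathbf{g}$ and use joint Gaussianity to conclude. Your version is in fact slightly more careful, since you explicitly justify joint Gaussianity via the representation $\mathbf{g} = \sqrt{R/(R-1)}\,P\mathbf{z}$, whereas the paper leaves this implicit and just records the second-moment computations.
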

\begin{proof}
We have,
\begin{eqnarray}
\displaystyle \E[f^2] & = &
\E\left[\left(\sum_{i=1}^Rx_ig_i\right)^2\right] \nonumber \\
& = & \sum_{i=1}^Rx_i^2\E[g_i^2] + \sum_{\substack{i,j\in [R]\\ i\neq
j}}x_ix_j \E[g_ig_j] \nonumber \\
& = & \sum_{i=1}^Rx_i^2 - \left(\frac{1}{R-1}\right)
\sum_{\substack{i,j\in [R]\\ i\neq j}}x_ix_j \nonumber \\
& = & \left(1 + \frac{1}{R-1}\right)\sum_{i=1}^Rx_i^2 -
\left(\frac{1}{R-1}\right)\left(\sum_{i=1}^Rx_i\right)^2 \nonumber \\
& = & \frac{R}{R-1}.
\end{eqnarray}
The same holds for $\E[h^2]$. For the second part of the lemma
observe that,
\begin{eqnarray}
\E[fh] & = & \sum_{i=1}^R\left[x_iy_i\E[g_i^2] + \sum_{\substack{j\in [R] \\ j\neq
i}}\E[g_ig_j]x_iy_j \right] \nonumber \\
& = & \sum_{i=1}^R\left[ x_iy_i - \left(\frac{1}{R-1}\right)\sum_{\substack{j\in [R] \\ j\neq
i}}x_iy_j\right] \nonumber \\
& = & \left(1 + \frac{1}{R-1}\right)\langle {\bf x}, {\bf y}\rangle -
\left(\frac{1}{R-1}\right)\langle {\bf x}, {\bf 1}\rangle\langle {\bf
y}, {\bf 1}\rangle \ = \ 0. 
\end{eqnarray} 
\end{proof}

\begin{fact}[Fact 3.4 in \cite{DOSW11}]\label{fact:hermite}
Let $P: \R^\ell \to \R$ be a degree-$d$ polynomial over independent standard normal variables which has at least one coefficient of magnitude at least $\alpha$. Then, $\|P\|_2 \equiv \sqrt{\E[|P({\bf x})|^2]}$ is at least $\frac{\alpha}{d^d {\ell + d \choose d}}$. 
\end{fact}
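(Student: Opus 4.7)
The approach is to pass through the orthonormal Hermite basis and carefully control the Hermite-to-monomial change of basis. First, I would expand $P(x) = \sum_{\bm{\beta} \in \N^\ell,\ |\bm{\beta}| \leq d} \wh{P}(\bm{\beta})\, H_{\bm{\beta}}(x)$, where $H_{\bm{\beta}}(x) = \prod_{i=1}^\ell H_{\beta_i}(x_i)$ is the product Hermite basis and $|\bm{\beta}| = \sum_i \beta_i$. Orthonormality under $N(0,1)^\ell$ yields $\|P\|_2^2 = \sum_{\bm{\beta}} \wh{P}(\bm{\beta})^2$, so in particular $|\wh{P}(\bm{\beta})| \leq \|P\|_2$ for every $\bm{\beta}$.

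Second, I would bound the entries of the Hermite-to-monomial change-of-basis matrix. Starting from the probabilists' formula $\tn{He}_n(x) = \sum_{m=0}^{\lfloor n/2 \rfloor} \frac{(-1)^m\, n!}{m!\, 2^m\, (n-2m)!} x^{n-2m}$ together with the normalization $H_n = \tn{He}_n / \sqrt{n!}$, a direct computation gives $|[H_n]_{x^k}| \leq \sqrt{n!}$, where $[H_n]_{x^k}$ denotes the coefficient of $x^k$ in $H_n$ (and is $0$ unless $k \leq n$ and $n-k$ is even). In the multivariate case the coefficient factors as $[H_{\bm{\beta}}]_{x^{\bm{\alpha}}} = \prod_i [H_{\beta_i}]_{x_i^{\alpha_i}}$, so its magnitude is at most $\sqrt{\prod_i \beta_i!} \leq \sqrt{d!} \leq d^{d/2} \leq d^d$. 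The inequality $\prod_i \beta_i! \leq d!$ whenever $\sum_i \beta_i \leq d$ follows from the multinomial coefficient $\binom{d}{\beta_1, \dots, \beta_\ell,\, d - |\bm{\beta}|}$ being at least $1$.

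Third, I would pick a multi-index $\bm{\alpha}^*$ such that the monomial coefficient $c_{\bm{\alpha}^*}$ of $P$ in the basis $\{x^{\bm{\alpha}}\}$ satisfies $|c_{\bm{\alpha}^*}| \geq \alpha$, and extract it from the Hermite expansion: $c_{\bm{\alpha}^*} = \sum_{\bm{\beta}} \wh{P}(\bm{\beta})\, [H_{\bm{\beta}}]_{x^{\bm{\alpha}^*}}$. Applying the triangle inequality together with the two bounds above, and using that the number of multi-indices $\bm{\beta} \in \N^\ell$ with $|\bm{\beta}| \leq d$ is exactly $\binom{\ell+d}{d}$, we obtain $\alpha \leq |c_{\bm{\alpha}^*}| \leq \binom{\ell+d}{d} \cdot d^d \cdot \|P\|_2$, which rearranges to the claimed lower bound $\|P\|_2 \geq \alpha / (d^d \binom{\ell+d}{d})$.

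The main technical step is the second one: one has to carry the $1/\sqrt{n!}$ normalization carefully through the explicit formula for $\tn{He}_n$ and then assemble the univariate bound across coordinates using $\prod_i \beta_i! \leq d!$. Once that coefficient bound is in hand, the remainder of the argument is just the triangle inequality and counting the Hermite basis elements of total degree at most $d$, both of which are routine.
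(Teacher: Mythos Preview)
Your proof is correct. The paper does not prove this statement at all---it is simply quoted as Fact~3.4 from \cite{DOSW11} and used as a black box---so there is no ``paper's own proof'' to compare against. Your argument via the explicit Hermite-to-monomial coefficient bound $|[H_{\bm{\beta}}]_{x^{\bm{\alpha}}}| \le \sqrt{\prod_i \beta_i!} \le \sqrt{d!} \le d^d$, combined with $|\wh{P}(\bm{\beta})| \le \|P\|_2$ and the count $\binom{\ell+d}{d}$ of Hermite basis elements, is a clean and complete derivation of the stated bound.
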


\section{Comparing monomial and $\ell_2$-masses}

In this section, we relate the monomial mass of the polynomials with their $\ell_2$-mass under the distribution $\mc{D}$.

\begin{lemma}				\label{lem:coeff_bounds}
	Let $Q(U_1,\ldots,U_T)$ be a polynomial of degree $d \ge 1$. Let $\tilde{Q}(W_{1,1},\ldots,W_{1,T})$ be the polynomial obtained from $Q(U_1,\ldots,U_T)$ by the orthonormal transformation. With $\eta$ and $T = 10d$ chosen as in Section \ref{sec:reduction}, the following bounds hold:
	
	\begin{itemize}
		\item[1.] $\|Q(U_1,\ldots,U_T)\|_2 \le (20dT)^{5d}\|\tilde{Q}(W_{1,1},\ldots,W_{1,T})\|_{\rm mon,2}$		
		\item[2.] If $Q$ depends only on variables $U_2,\ldots,U_T$ then $\|\tilde{Q}(W_{1,1},\ldots,W_{1,T})\|_{\rm mon,2} \le (10dT)^{7d}\|Q(U_2,\ldots,U_T)\|_2$
	\end{itemize}
\end{lemma}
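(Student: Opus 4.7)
The plan is to freely convert between the two monomial representations using the orthonormal change of variables $W_{1,j}=\sum_t b_{tj}U_t$ and $U_t=\sum_j a_{tj}W_{1,j}$. Since the transform is orthonormal, $|a_{tj}|,|b_{tj}|\le 1$ and each row/column has $\ell_2$-norm $1$; in particular $\sum_t |b_{tj}|\le T$ and $\sum_j |a_{tj}|\le T$. The number of monomials of degree at most $d$ in $T$ variables is $\binom{T+d}{d}\le (2T)^d$. The two main tools will be (a) a direct triangle-inequality estimate for $\|Q\|_2$ in terms of the $U$-monomial coefficients, exploiting the fact that under $\mc{D}_{\mc{I}}$ the variable $U_1=b\eta\sqrt{T}$ is deterministic with $|U_1|\le \eta\sqrt{T}\le 1$ (by the tiny choice of $\eta$) and $U_2,\ldots,U_T$ are i.i.d.\ $N(0,1)$; and (b) Fact~\ref{fact:hermite}, used in the opposite direction for Part~2, which applies because $Q$ depends only on the i.i.d.\ Gaussians.

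For Part~1, I would expand $\widetilde{Q}=\sum_S c_S\prod_j W_{1,j}^{S_j}$ by substituting $W_{1,j}=\sum_t b_{tj}U_t$ to obtain the representation of $Q$ in the $U$-monomial basis. Each $W$-monomial of degree $|S|\le d$ expands into a $U$-polynomial whose $\ell_1$-mass of coefficients is at most $\prod_j(\sum_t |b_{tj}|)^{S_j}\le T^{|S|}\le T^d$; and by Cauchy--Schwarz, $\sum_S|c_S|\le \sqrt{(2T)^d}\,\|\widetilde{Q}\|_{\mathrm{mon},2}$. Hence the total $\ell_1$-mass of the $U$-representation of $Q$ is at most $T^d\sqrt{(2T)^d}\,\|\widetilde{Q}\|_{\mathrm{mon},2}$. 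Every individual monomial $\prod_t U_t^{a_t}$ with $|\vec a|\le d$ has $\ell_2$-norm at most $|U_1|^{a_1}\sqrt{\prod_{t>1}(2a_t-1)!!}\le (2d)^{d/2}$, so by the triangle inequality $\|Q\|_2\le (2d)^{d/2}\cdot T^d\sqrt{(2T)^d}\,\|\widetilde{Q}\|_{\mathrm{mon},2}$, comfortably within $(20dT)^{5d}$.

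For Part~2, since $Q$ depends only on the i.i.d.\ standard Gaussians $U_2,\ldots,U_T$, Fact~\ref{fact:hermite} yields $|q_{\vec a}|\le d^d\binom{(T-1)+d}{d}\,\|Q\|_2\le (2Td)^{2d}\|Q\|_2$ for every coefficient of $Q$ in the $U$-monomial basis. Summing over the at most $(2T)^d$ monomials gives $\sum_{\vec a}|q_{\vec a}|\le (2Td)^{3d}\|Q\|_2$. Substituting $U_t=\sum_j a_{tj}W_{1,j}$ back into each $U$-monomial inflates its $\ell_1$-mass of coefficients in the $W$-basis by a factor of at most $T^d$, so $\|\widetilde{Q}\|_{\mathrm{mon},1}\le T^d\cdot (2Td)^{3d}\|Q\|_2$, and $\|\widetilde{Q}\|_{\mathrm{mon},2}\le\|\widetilde{Q}\|_{\mathrm{mon},1}$ delivers the $(10dT)^{7d}$ bound. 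The main obstacle is really just careful bookkeeping of the exponents arising from (i) each change-of-basis substitution (at most $T^d$ per direction), (ii) the $(2T)^d$ count of monomials of degree $\le d$, and (iii) the $(Td)^{2d}$ factor from Fact~\ref{fact:hermite} in Part~2; since the slack in the stated exponents $5d$ and $7d$ easily accommodates all of these, the argument is otherwise routine.
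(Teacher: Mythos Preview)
Your proof is correct and follows essentially the same approach as the paper: both arguments pass through the $\ell_1$-mass of coefficients under the orthonormal change of basis (incurring a $T^d$ factor per substitution), use Cauchy--Schwarz to relate $\ell_1$ and $\ell_2$ coefficient norms via the monomial count, bound the moments of $U$-monomials using $|U_1|=\eta\sqrt{T}\le 1$ and Gaussian moments for Part~1, and invoke Fact~\ref{fact:hermite} on the i.i.d.\ Gaussians $U_2,\ldots,U_T$ for Part~2. The only cosmetic difference is that in Part~1 the paper applies Cauchy--Schwarz directly to $\E[(\sum_S c_S U_S)^2]$ whereas you use the triangle inequality $\|Q\|_2\le\sum_S|c_S|\,\|U_S\|_2$; these are equivalent up to the generous constants.
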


\begin{proof}
	For ease of notation, we shall denote variables $W_{11},\ldots,W_{1T}$ by $W_{1},\ldots,W_{T}$. Let $\mcb{S}_{T,d}$ be the set of all multi-sets on $[T]$ of size at most $d$. Using the fact that ${T\choose d} \le \Big(\frac{Te}{d}\Big)^d \le (eT)^{d}$ we have $|\mcb{S}_{T,d}| \le (10T)^{2d}$\\ 
	
	\noindent{\bf Proof of Part $1.$}:  For the first direction let ${Q}(U_1,\ldots,U_T) = \sum_{S \in \mcb{S}_{T,d}}c_SU_S$, where the monomial $U_S$ is defined as $U_S = \prod_{i \in S}U^{S(i)}_i$. Therefore,
	
	\begin{eqnarray}
	\|Q\|^2_2 &=& \E_{\mc{D}_{\mc{I}}}\bigg[\Big( \sum_{S \in \mcb{S}_{T,d}} c_SU_S\Big)^2\bigg] \\
	&\le& \E_{\mc{D}_{\mc{I}}}\bigg[ \Big(\sum_{S \in \mcb{S}_{T,d}} c^2_S\Big) \Big(\sum_{S \in \mcb{S}_{T,d}} U^2_S\Big)\bigg] \\
	&=&  \|Q(U_1,\ldots,U_T)\|^2_{\rm mon,2} \bigg(\E_{\mc{D}_{\mc{I}}}\bigg[\sum_{S \in \mcb{S}_{T,d}} U^2_S\bigg]\bigg) 	\label{eq:appx1}
	\end{eqnarray}		
	
	For the first term, we claim that 
	
	\begin{eqnarray*}
	\|Q(U_1,\ldots,U_T)\|_{\rm mon,2} &\le& \|Q(U_1,\ldots,U_T)\|_{\rm mon,1} \\
	&\le& ({10T})^{3d} \|\tilde{Q}(W_{1},\ldots,W_{T})\|_{\rm mon,1} \\
	&\le& ({10T})^{4d} \|\tilde{Q}(W_1,\ldots,W_T)\|_{\rm mon,2} 
	\end{eqnarray*}
	
	\noindent where the first inequality follows the fact that $\ell_2$-norm is upper bounded by the $\ell_1$-norm, and the third inequality follows from \emph{Cauchy-Schwarz} and $|\mcb{S}_{T,d}| \le (10T)^{2d}$. The middle inequality can be argued as follows. Consider $U_S = \prod_{i \in S} U^{S(i)}_i $. Then it can be expressed as in terms of $W_{1},\ldots,W_{T}$ as 
	\begin{equation*}
		\prod_{i \in S} \Big(\sum_{l \in [T]} a_{i,l}W_{l}\Big)^{S(i)} 
	\end{equation*}
	
	By construction, the linear transformation $\{U_1,\ldots,U_T\} \mapsto \{W_1,\ldots,W_T\}$ is \emph{orthonormal} (See section \ref{sec:R=1Basis}). Therefore each coefficient satisfies $|a_{i,l}| \le 1$. Furthermore, there can be at most $T^d$ distinct terms in the expansion of $U_S$. Therefore, the total contribution to the coefficient of a fixed monomial from $U_S$ can be at most $|c_S|T^d$. Repeating the argument across all $S \in \mcb{S}_{T,d}$ completes the argument. \\
	
	For upper bounding the expectation term in   \eqref{eq:appx1}, fix a $S \in \mcb{S}_{T,d}$. Then,
	\begin{eqnarray*}
	\E_{\mc{D}_{\mc{I}}} \Big[U^2_S\Big] &=&  \E_{\mc{D}_{\mc{I}}} \bigg[\prod_{i \in S}U^{2S(i)}_i\bigg] \\
	    &=&  \prod_{i \in S}\E_{\mc{D}_{\mc{I}}} \bigg[U^{2S(i)}_i\bigg]  \quad\qquad\qquad\Big(\mbox{Since } U_1,\ldots,U_T \mbox{ are independent}\Big)\\
    	&\le&  \prod_{i \in S \setminus \{1\}}\E_{\mc{D}_{\mc{I}}} \bigg[U^{2S(i)}_i\bigg] \qquad \qquad \Big(\mbox{Since } \eta\sqrt{T} < 1\Big) \\
	    &\overset{1}{\le}&  \prod_{i \in S \setminus \{1\}}(2S(i))! \\
	    &\le&  (2|S|)! 
	\end{eqnarray*}
	where step $1$ follows from the well known fact that for $g \sim N(0,1)$, $\E[g^k] \le k!$ for all $k \in \Z_+$. Therefore, plugging in the upper bounds in   \eqref{eq:appx1} we get
\begin{eqnarray*}
\|Q(U_1,\ldots,U_T)\|^2_{\rm mon,2} \bigg(\E_{\mc{D}_{\mc{I}}}\bigg[\sum_{S \in \mcb{S}_{T,d}} U^2_S\bigg]\bigg) 
\le ({10T})^{10d} (2d)^{(2d)}\|\tilde{Q}({W}_1,\ldots,W_T)\|^2_{\rm mon,2} 
\end{eqnarray*}
	
	\noindent{\bf Proof of Part $2$}: For the second direction, we observe that
	
	\begin{eqnarray}
	\|\tilde{Q}(W_1,\ldots,W_T)\|_{\rm mon,2} &{\le}& \|\tilde{Q}(W_1,\ldots,W_T)\|_{\rm mon,1} \\
	&\overset{1}{\le}& ({10T})^{3d} \|{Q}(U_2,\ldots,U_T)\|_{\rm mon,1} \\
    &\overset{2}{\le}& ({10dT})^{7d} \|{Q}(U_2,\ldots,U_T)\|_2			 		
	\end{eqnarray}
	
	\noindent where inequality $1$ again can be argued similarly to the previous direction (using the fact that $\{W_1,\ldots,W_T\} \mapsto \{U_1,\ldots,U_T\}$ is again an orthonormal linear transformation).  
	
	For step $2$, we write $Q(U_2,\ldots,U_T)$ in the monomial basis of $U$ i.e., $Q(U_2,\ldots,U_T) = \sum_{S} c_S U_S$ and see that 
	
	\begin{eqnarray}
	\bigg\|\sum_{S} c_S U_S\bigg\|_{\rm mon,1} = \sum_{S \in \mcb{S}_{T-1,d}}|c_S| \overset{1}{\le} \sum_{S \in \mcb{S}_{T-1,d}}({6Td})^{2d}\|Q(U_2,\ldots,U_T)\|_2 \le ({10dT})^{4d}\|Q(U_2,\ldots,U_T)\|_2
	\end{eqnarray}
	
	\noindent with step $1$ following from Fact \ref{fact:hermite}, and the last inequality uses the upper bound on $|\mcb{S}_{T,d}|$.

\end{proof}

\section{Comparison inequalities between Norms}

\begin{cl}							\label{cl:compare_masses2}					
	Given polynomials $P_1({\bf W}),P_2({\bf W})$ over variables ${\bf W} = (W_{11},\ldots,W_{1T})$, we have 
$$\|P_1({\bf W})P_2({\bf W})\|_{\rm mon,2} \le \|P_1({\bf W})\|_{\rm mon,1}\|P_2({\bf W})\|_{\rm mon,2}.$$
\end{cl}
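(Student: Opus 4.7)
The plan is to decompose $P_1$ in the monomial basis and apply the triangle inequality for $\|\cdot\|_{\rm mon,2}$, using the key fact that multiplication by a single monomial merely permutes the coefficients of $P_2$ and therefore preserves its monomial $\ell_2$-mass.

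More concretely, write $P_1({\bf W}) = \sum_{S} a_S W_S$ in the monomial basis, so that $\|P_1\|_{\rm mon,1} = \sum_S |a_S|$. Then
\[
P_1({\bf W}) P_2({\bf W}) \;=\; \sum_S a_S \bigl(W_S \cdot P_2({\bf W})\bigr).
\]
Viewing the right-hand side as a sum of polynomials in the monomial basis and applying the triangle inequality for $\|\cdot\|_{\rm mon,2}$ (which is just the $\ell_2$-norm on the coefficient vector, hence a genuine norm), I get
\[
\bigl\|P_1 P_2\bigr\|_{\rm mon,2} \;\le\; \sum_S |a_S|\,\bigl\|W_S \cdot P_2\bigr\|_{\rm mon,2}.
\]

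The main (and only) nontrivial step is to observe that $\|W_S \cdot P_2\|_{\rm mon,2} = \|P_2\|_{\rm mon,2}$. This holds because the map $W_T \mapsto W_S W_T$ on monomials is injective (exponent vectors add injectively in the exponent lattice $\mathbb{Z}_{\ge 0}^T$): if $P_2 = \sum_T b_T W_T$, then $W_S P_2 = \sum_T b_T W_{S+T}$, and the monomials $\{W_{S+T}\}_T$ are all distinct, so the coefficient multiset of $W_S P_2$ equals $\{b_T\}_T$. Hence the squared coefficients sum to the same quantity $\sum_T b_T^2 = \|P_2\|_{\rm mon,2}^2$.

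Putting these together,
\[
\bigl\|P_1 P_2\bigr\|_{\rm mon,2} \;\le\; \Bigl(\sum_S |a_S|\Bigr)\,\|P_2\|_{\rm mon,2} \;=\; \|P_1\|_{\rm mon,1}\,\|P_2\|_{\rm mon,2},
\]
which is the desired inequality. I don't foresee any obstacle here; the statement is essentially the fact that a normed vector space satisfies the triangle inequality combined with the observation that monomial multiplication is a measure-preserving relabelling on the monomial basis.
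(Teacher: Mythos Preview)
Your proof is correct and follows essentially the same approach as the paper: decompose $P_1$ in the monomial basis, apply the triangle inequality for $\|\cdot\|_{\rm mon,2}$, and use that $\|W_S P_2\|_{\rm mon,2} = \|P_2\|_{\rm mon,2}$. The only difference is that you spell out explicitly why multiplication by a single monomial preserves the $\ell_2$-mass of the coefficient vector, whereas the paper leaves this implicit in the final equality.
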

\begin{proof}
	Let $P_1({\bf W}) = \sum_{W_S \in \mcb{M}} c_SW_S$. Then,
	\begin{eqnarray*}
	\|P_1({\bf W})P_2({\bf W})\|_{\rm mon,2} &=& \Big\|\sum_{W_S \in \mcb{M}} c_SW_SP_2({\bf W})\Big\|_{\rm mon,2} \\
	&\le& \sum_{W_S \in \mcb{M}}|c_S|\|W_SP_2({\bf W})\|_{\rm mon,2} \\
	&=& \|P_1({\bf W})\|_{\rm mon,1}\|P_2({\bf W})\|_{\rm mon,2}
	\end{eqnarray*}
\end{proof}

\end{document}